\newcommand{\p}[1]{\mathop{\mbox{\it p} } }
\renewcommand{\vec}[1]{\ensuremath{\boldsymbol{#1}}}
\newcommand{\be}{\begin{equation}}
\newcommand{\ee}{\end{equation}}
\newcommand{\ba}{\begin{array}}
\newcommand{\ea}{\end{array}}
\newcommand{\bea}{\begin{eqnarray}}
\newcommand{\eea}{\end{eqnarray}}
\newcommand{\bean}{\begin{eqnarray*}}
\newcommand{\eean}{\end{eqnarray*}}
\newcommand{\rmh}{^{\dag}}
\newcommand{\rmih}{^{-\dag}}
\newcommand{\rmt}{^{\rm T}}
\newcommand{\rmit}{^{-\rm T}}
\renewcommand{\Re}{\mathcal{ R}}
\newcommand{\nR}{\nu_{\mathrm{R}}}
\definecolor{white}{rgb}{1,1,1}
\newtheorem{theorem}{Theorem}
\newtheorem{lemma}{Lemma}
\newtheorem{example}{Example}
\newtheorem{proposition}{Proposition}
\newtheorem{corollary}{Corollary}
\newtheorem{remark}{Remark}
\begin{document}

\title{On the Design of Channel Shortening Demodulators for Iterative Receivers in Linear Vector Channels}

\author
{ \begin{tabular}{c}
 Sha Hu and Fredrik Rusek\\
\end{tabular}
\thanks{The authors are with the Department of Electrical and Information Technology, Lund University, Lund, Sweden (email: firstname.lastname@eit.lth.se).} 

\thanks{This paper has been presented in part \cite{HR15} at the IEEE 26th annual international symposium on personal, indoor and mobile radio communications (PIMRC), Hongkong, Sep. 2015.} 
}

\maketitle

\vspace{-4mm}

\begin{abstract}
We consider the problem of designing demodulators for linear vector channels with memory that use reduced-size trellis descriptions for the received signal. We assume an overall iterative receiver, and use interference cancellation (IC) based on the soft information provided by the outer decoder, to mitigate the parts of the signal that are not covered by the reduced-size trellis description. In order to reach a trellis description, a linear filter is applied as front-end to compress the signal structure into a small trellis. This process requires three parameters to be designed: (i) the front-end filter, (ii) the feedback filter through which the IC is done, and (iii) a target response which specifies the trellis. Demodulators of this form have been studied before under then name \textit{channel shortening} (CS), but the interplay between CS, IC and the trellis-search process has not been adequately addressed in the literature. In this paper, we analyze two types of CS demodulators that are based on the Forney and Ungerboeck detection models, respectively. The parameters are jointly optimized based on a generalized mutual information (GMI) function. We also introduce a third type of CS demodulator that is in general suboptimal, but has closed-form solutions. Moreover, signal to noise ratio (SNR) asymptotic properties are analyzed and we show that the third CS demodulator asymptotically converges to the optimal CS demodulator in the sense of GMI-maximization.
\end{abstract}

\begin{IEEEkeywords}
Channel shortening, intersymbol interference, prefilter, front-end filter, feedback filter, target response, generalized mutual information, achievable information rate, Forney model, Ungerboeck model.
\end{IEEEkeywords}

\section{Introduction}

Channel shortening (CS) demodulators have a long and rich history, see \cite{clve1,clve2,clve3,clve4,clve5,clve6,clve7,AL00,clve10,clve11,clve8,clve9, DF07}. For intersymbol interference (ISI) channels, Forney \cite{F72} showed that the Viterbi Algorithm (VA) \cite{VA} implements maximum likelihood (ML) detection. However, the complexity of the VA is exponential in the memory of the channel which prohibits its use in many cases of interest. As a remedy, Falconer and Magee proposed in 1973 the concept of CS \cite{clve1}. The concept is to filter the received signal with a prefilter so that the effective channel has much shorter duration than the original channel,  and then apply the VA to the shorter effective channel.

Traditionally, CS demodulators have been optimized from a minimum mean square error (MMSE) perspective \cite{clve2,clve3,clve4,clve5,clve6,clve7,clve9,clve10,clve11}. Two exceptions from this are the papers \cite{clve8} and \cite{AL00}. In \cite{clve8}, the authors attempt to minimize the error probability of an uncoded system which leads to a new notion of posterior equivalence between the target response and the filtered channel. However, since \cite{clve8} works with uncoded error probabilities, the analysis in \cite{clve8} does not adequately address the case of coded systems and Shannon capacity properties. The first paper that works with capacity-related cost measures is \cite{AL00}. In \cite{AL00} the authors consider the achievable rate, in the form of generalized mutual information (GMI) \cite{mism,mism2,mct10, RP12,WSS04}, that the transceiver system can achieve if a CS demodulator is adopted. However, \cite{AL00} is limited to ISI channels only, and the design method in \cite{AL00} of the CS demodulator is in fact not always possible to execute. The limitations of \cite{AL00} were first dealt with in \cite{RP12}, which extended the CS concept to any linear vector channel and resulted in a closed-form optimization procedure.

Iterative receivers such as turbo equalization \cite{TS11, LSS05, HR15, SZM08, GLL97, LB06} followed as a natural extension to turbo codes as an iterative technique for detection and decoding of forward error correction (FEC) protected data that is transmitted over dispersive channel. However, when it comes to turbo equalization, common settings of the equalizer are \cite{TS11} the maximum \textit{a posterior} (MAP) demodulator \cite{MAP} and its suboptimal variants such as dimension-reduction and subspace based detections \cite{CBSC10, CLS15}, and MMSE based approaches \cite{LSS05, GLL97, LB06, CSWC10} that replace the MAP demodulator with a linear equalizer or a decision feedback equalizer (DFE) to reduce the prohibitive complexity of the MAP demodulator. One important open problem in the area of turbo equalization is the development of other non-trellis-based detection methods that provide performance between that of MAP and MMSE performance \cite{TSK00, TS11}. Instead of fully removing the trellis-based detection, another possible approach is to reduce the memory size of the original linear vector channel through an interference cancellation (IC) based prefiltering. To the best of our knowledge, there is only limited literature \cite{SZM08, HKHR17} on such a design of demodulator that combines both IC based prefiltering and a memory-size shortened BCJR in iterative receiver design. A closely related concept is delayed-decision-feedback-sequence-estimation (DDFSE) \cite{DHH89, BVT00}, which also reduces the number of states in the BCJR. However, in DDFSE the IC is done within a single iteration, and not between the iterations of an iterative receiver.

In this paper, we generalize the idea in \cite{RP12} of GMI-maximization based CS demodulators to iterative receivers. With iterative receivers it is reasonable to expect that better performance can be reached by allowing the parameters of the CS demodulator to change in each iteration. The CS demodulator in \cite{RP12} does not take the prior information into account, rendering its design static in all iterations. We aim at constructing a CS demodulator that takes soft information provided by the outer decoder into account so that the parameters of the CS demodulator are designed for a particular level of prior knowledge. This procedure includes an IC mechanism to deal with the signal part that can not be handled by the trellis-search. Preliminary results for CS demodulators in iterative receivers are available in \cite{GC21}, but this paper non-trivially advances the state-of-the-art. 

Although the trellis-search based detection is still utilized in the CS demodulator, the memory size $\nu$ of the linear vector channel has been reduced which results in significant complexity reduction compared to the MAP demodulator. Meanwhile, with different values of $\nu$, the CS demodulator provides trade-off between the performance of MMSE and MAP. As will become clear later, the CS demodulator is closely related to the concept of linear MMSE receiver with parallel interference cancellation (LMMSE-PIC)\cite{SFS11, MS02, Z05}, which cooperates the soft information into the filter coefficients and interference cancellation process. With setting $\nu\!=\!0$, the CS demodulator is identical to the LMMSE-PIC demodulator whose trellis-search process is trivial since different symbols are assumed to be independent after the front-end filtering. The CS demodulator can also be viewed as an extension of the LMMSE-PIC to include a trellis-search, where the parameters of the front-end filter, IC, and trellis-search are jointly optimized. On the other hand, by setting $\nu$ to be equal to the original memory size of the linear vector channel, the CS demodulator is identical to MAP. Therefore, the CS demodulator is a generalized framework that includes both the MAP and LMMSE-PIC in iterative receiver design.

The rest of the paper is organized as follows: The linear vector channel model and the iterative receiver structure are introduced in Sec.\,II, while the general form of the CS demodulators and the GMI are described in Sec.\,III. In Sec.\,IV we analyze three types of CS demodulators for finite length linear vector channels. In Sec.\,V we deal with ISI channels as asymptotic versions of the results established in Sec.\,IV. The signal to noise ratio (SNR) asymptotic of the CS demodulators are discussed in Sec.\,VI. Empirical results are provided in Sec.\,VII, and Sec.\,VIII summarizes the paper. For improved readability, we have deferred some long proofs and derivations to Appendices A-K.

\subsubsection*{Notation}

Throughout the paper, a capital bold letter such as $\vec{A}$ represents a matrix, a lower case bold  letter $\vec{a}$ represents a vector, and a capital letter $A$ represents a number. The expression  $\vec{A}\!\prec\!0$ means matrix $\vec{A}$ is negative definite, while $\vec{A}\!\succ\!0$ means $\vec{A}$ is positive definite. Matrix $\vec{I}$ represents the identity matrix and in general the dimension will be omitted; when it cannot be understood from the context, we let $\vec{I}_{K}$ represent a $K\!\times\!K$ identity matrix. Our superscripts have the following meanings: $(\,)^\ast$ is complex conjugate, $(\,)^\mathrm{T}$ is matrix transpose, $(\,)^\dag$ denotes the conjugate transpose of a matrix, $(\,)^{-1}$ is matrix inverse. In addition, $\propto$ means proportional to, $\mathbb{E[\,]}$ is the expectation operator, $\mathrm{Tr(\,)}$ takes the trace of a matrix, $\Re\{\,\}$ returns the real part of a variable, $\otimes$ is the Kronecker multiplication operator, $\mathrm{vec}(\vec{A})$ is a column vector containing the columns of matrix $\vec{A}$ stacked on top of each other, and $[A, B]$ is the set of integers $\{k\!:\!A\!\leq\!k\!\leq\!B\}$. Furthermore, we say that a matrix $\vec{A}$ is banded within diagonals $[-\nu_1,\nu_2]$ ($\nu_1, \nu_2\!\geq\!0$), if the $(k,\ell)$th element $A(k,\ell)$ satisfies\footnote{Note that $\nu_1$ refers to the number of upper diagonals of $\vec{A}$ that are nonzero. We have this convention in order to subsequently follow standard notation for Toeplitz matrices \cite{MK}.}
$$A(k,\ell)=0, \;  \ell-k>\nu_1\;\,\mathrm{or} \,\;k-\ell>\nu_2.$$
Moreover, we define two matrix operators $[\;]_{\nu}$ and $[\;]_{\backslash\nu}$ such that $\vec{A}\!=\![\vec{A}]_{\nu}\!+\![\vec{A}]_{\backslash\nu}$, with $[\vec{A}]_{\nu}$ banded within diagonals $[-\nu,\nu]$ where $[\vec{A}]_{\backslash\nu}$ is constrained to zero.

\section{System Model}

We consider linear vector channels according to
\bea \label{p1sm1} \vec{y} = \vec{H}\vec{x}\!+\!\vec{n}\eea
where $\vec{y}$ is an $N\!\times\! 1$ vector of received signal, $\vec{x}$ is a $K\!\times\!1$ vector comprising unit energy coded symbols that belong to a constellation $\mathcal{X}$, $\vec{H}$ is an $N\!\times\!K$ matrix representing the communication channel which is perfectly known to the receiver and $\vec{n}$ is zero-mean complex Gaussian noise vector with covariance matrix $N_0\vec{I}$. Model (\ref{p1sm1}) may represent many different communication systems, such as for example multi-input multi-output (MIMO) or ISI channels. In the MIMO case, the variables $N$ and $K$ are finite while they grow without bounds in the ISI case. For the former case, a block fading model is assumed, where the coherence time is infinite. The block fading model allows us to perform an analysis for a single symbol period.

Denote $x_k$ as the $k$th element of $\vec{x}$ and $\vec{h}_k$ as the $k$th column vector of $\vec{H}$, (\ref{p1sm1}) can be rewritten as
\bea  \vec{y}=\sum_{k=0}^{K\!-\!1}\vec{h}_{k}x_k\!+\!\vec{n}.\eea

In an iterative receiver,  the feedback from the outer decoder can be utilized in the demodulator to improve the performance. As the outer decoder provides the demodulator with \textit{a posteriori} probability (APP) and extrinsic information (in terms of bit log-likelihood ratios (LLRs)) \cite{H95,S01}, side information is present about the symbols $\vec{x}$ and we represent this by the probability mass function $p_{k}(s) \!=\!\mathrm{P}(x_k\!=\!s), (0\!\leq \! k\!\leq\! K\!-\!1)$. Note that the side-information does not consider the dependency among the symbols, but are symbol-wise marginal probabilities. This reflects the situation encountered in iterative receivers with perfect interleaving. In those cases, the prior probabilities provided from previous iterations are assumed independent, i.e., $\mathrm{P}(\vec{x}\!=\!\vec{s})\!=\!\prod p_{k}(s)$. Due to the perfect interleaving assumption, the demodulator can compute $\hat{\vec{x}}\!=\!\mathbb{E}_{p(\vec{x})}[\vec{x}]\!=\![\!\begin{array}{ccc}\hat{x}_0,\hat{x}_1,\cdots,\hat{x}_{K-1}\end{array}\!]\rmt$ in a per-entry fashion as
\bea \hat{x}_k=\sum_{s\in\mathcal{X}}sp_{k}(s),\notag \eea
where the expectations are computed with respect to the prior distribution $p_{k}(s)$. 

With soft information $\hat{\vec{x}}$, we define a $K\!\times\!K$ diagonal matrix $\vec{P}$ as follows. For finite length linear vector channels, $\vec{P}$  equals
\bea \label{p1Pmat1} \vec{P}=\mathbb{E}_{\mathrm{T}}\big[\mathbb{E}_{\mathrm{p}(\vec{x})}[\vec{x}\hat{\vec{x}}\rmh]\big]=\mathbb{E}_{\mathrm{T}}[\hat{\vec{x}}\hat{\vec{x}}\rmh], \eea
where the exception \lq\lq{}$\mathbb{E}_{\mathrm{T}}$\rq\rq{} is taken over the transmitted blocks of $\vec{x}$ under the block fading assumption. For ISI case, as the whole data block experiences the same channel, we let 
\bea \vec{P}\!=\!\alpha\vec{I}, \eea
where the scalar
\bea  \alpha=\frac{1}{K}\sum_{k=0}^{K-1}|\hat{x}_k|^2.\eea 
The variable $\vec{P}$ in (\ref{p1Pmat1}) can alternatively be written as
\bea \vec{P}=\mathbb{E}_{\mathrm{T}}[\hat{\vec{x}}\hat{\vec{x}}\rmh]= \mathbb{E}_{\mathrm{T}}\big[\mathbb{E}_{\mathrm{p(\vec{x})}}[\vec{x}\vec{x}\rmh]\big]- \mathbb{E}_{\mathrm{T}}\big[\mathrm{cov}(\vec{x})\big].\eea
Under the natural assumption of soft information that satisfies
\bea \mathbb{E}_{\mathrm{T}}\big[\mathbb{E}_{\mathrm{p(\vec{x})}}[\vec{x}\vec{x}\rmh]\big]=\vec{I}, \eea
it follows that $\vec{0}\!\preceq\!\vec{P}\!\preceq\!\vec{I}$, and the same also holds for ISI case. The variable $\vec{P}$ reflects the accuracy of the side information. That is, when there is no soft information available, we have $\vec{P}\!=\!\vec{0}$, while with perfect feedback we get $\vec{P}\!=\!\vec{I}$.

The task of the demodulator is to generate soft information about the symbols in $\vec{x}$ given the observable $\vec{y}$ and the side information $\{p_{k}(s)\}$. The optimal demodulator is the MAP demodulator \cite{MAP, HB03} which evaluates the posterior probabilities
$\mathrm{P}(x_k\!=\!s|\vec{y})$. However, the number of leaves of the search tree corresponding to the MAP demodulator is in general $|\mathcal{X}|^{K}$ which is prohibitive for most practical applications. The purpose of the CS demodulator is to force the signal model to be an lower triangular matrix with only $\nu\!+\!1$ $(0\!\leq\!\nu\!<\!K\!-\!1)$ nonzero diagonals by means of a linear filter\footnote{For finite length linear vector channels such as MIMO channel, ``filtering'' means matrix multiplication.}, where $\nu$ is referred to as the memory size of the CS demodulator. Then, a BCJR \cite{BCJR} demodulator can be applied over a trellis with $|\mathcal{X}|^{\nu}$ states. Moreover, since there is side information present about $\vec{x}$, the parts of $\vec{H}$ that are outside the memory of the BCJR can be partly eliminated by means of IC through the prior mean $\hat{\vec{x}}$.

\begin{figure}
\begin{center}
\vspace*{0mm}
\scalebox{0.72}{\includegraphics{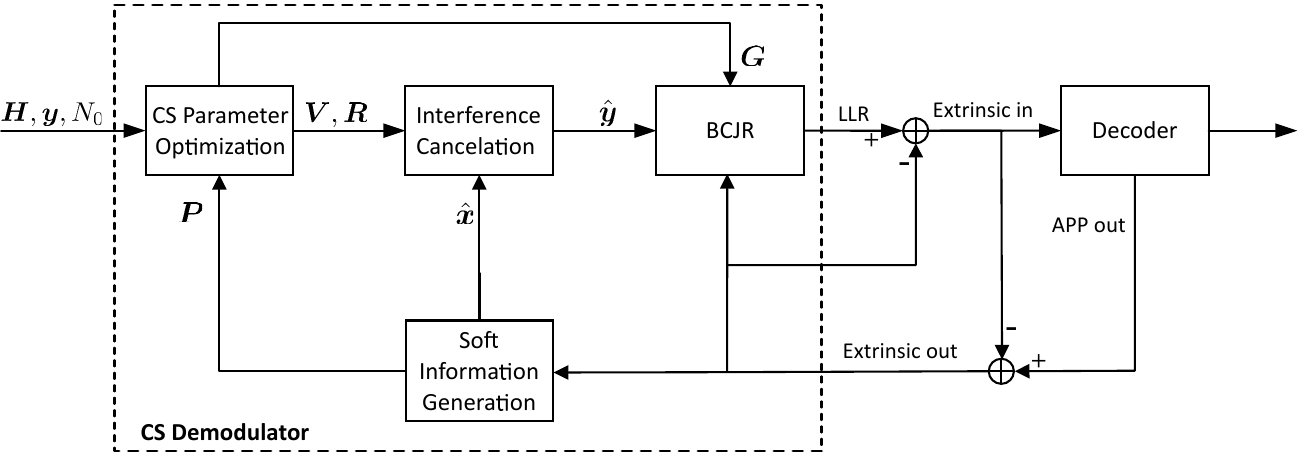}}
\vspace*{-4mm}
\caption{\label{p1fig1} Iterative receiver structure with CS demodulator and outer decoder. The target of the CS demodulator is to maximize the GMI through jointly optimizing the parameters $\vec{V}$, $\vec{R}$ and $\vec{G}$, which are referred to as the front-end filter, IC matrix and trellis representation matrix, respectively.} 
\vspace*{-10mm}
\end{center}
\end{figure}

The structure of an iterative receiver utilizing a CS demodulator is depicted in Fig.\,\ref{p1fig1}. The extrinsic information from the outer decoder is used to compute an estimate $\vec{\hat{x}}$ and a matrix $\vec{P}$ that indicates the feedback quality. Based on the updated $\vec{P}$ in each iteration, the optimal CS parameters are found by maximizing the GMI. A prefiltering and IC process are then implemented on $\vec{y}$ with optimal $\vec{V}$ and $\vec{R}$ to obtain the signal $\vec{\hat{y}}$, which is sent to a memory $\nu$ BCJR module specified by an optimal $\vec{G}$. Moreover, the extrinsic information iteratively exchanged between the BCJR and the outer decoder is also used as \textit{a priori} information for the transmitted symbols. Note that if we set $\nu\!=\!K\!-\!1$, the search space of the CS demodulator is no longer a trellis but corresponds to the original tree and is therefore equivalent to MAP, and LMMSE-PIC is a special case of the CS demodulation with $\nu\!=\!0$.

\section{The General Form of the CS Demodulator}

We state two lemmas that will be useful later, and Lemma \ref{lem2} can be verified straightforwardly.
\begin{lemma}{\label{lem1}}
Let $\vec{A}_1$ and $\vec{A}_2$ be two $K\!\times\!K$ matrices, where $\vec{A}_1$ is invertible and banded within diagonals $[-\nu, \nu]$. If $[\vec{A}_1^{-1}]_{\nu}\!=\![\vec{A}_2]_{\nu}$, then
\bea \mathrm{Tr}\big(\vec{A}_1\vec{A}_2\big)\!=\!\mathrm{Tr}\big(\vec{I}\big). \notag \eea
\end{lemma}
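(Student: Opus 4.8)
The plan is to exploit the banded structure of $\vec{A}_1$ together with the hypothesis $[\vec{A}_1^{-1}]_\nu = [\vec{A}_2]_\nu$. The starting observation is that $\mathrm{Tr}(\vec{A}_1\vec{A}_2) = \sum_{k,\ell} A_1(k,\ell)A_2(\ell,k)$, and since $\vec{A}_1$ is banded within diagonals $[-\nu,\nu]$, the only terms $A_1(k,\ell)$ that survive are those with $|k-\ell|\le\nu$, i.e.\ those for which $(\ell,k)$ lies within the band $[-\nu,\nu]$ as well. Hence in the double sum only the entries $A_2(\ell,k)$ with $|\ell-k|\le\nu$ are ever multiplied by something nonzero, so $\mathrm{Tr}(\vec{A}_1\vec{A}_2) = \mathrm{Tr}(\vec{A}_1[\vec{A}_2]_\nu)$, where $[\,\cdot\,]_\nu$ is the banding operator defined in the Notation section.

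Next I would substitute the hypothesis: since $[\vec{A}_2]_\nu = [\vec{A}_1^{-1}]_\nu$, we get $\mathrm{Tr}(\vec{A}_1\vec{A}_2) = \mathrm{Tr}\big(\vec{A}_1[\vec{A}_1^{-1}]_\nu\big)$. Now I apply the same band-restriction argument in reverse: because $\vec{A}_1$ is itself banded within $[-\nu,\nu]$, multiplying $\vec{A}_1$ by the off-band part $[\vec{A}_1^{-1}]_{\backslash\nu}$ contributes nothing to the trace, by exactly the same index bookkeeping as above. Therefore $\mathrm{Tr}\big(\vec{A}_1[\vec{A}_1^{-1}]_\nu\big) = \mathrm{Tr}\big(\vec{A}_1[\vec{A}_1^{-1}]_\nu\big) + \mathrm{Tr}\big(\vec{A}_1[\vec{A}_1^{-1}]_{\backslash\nu}\big) = \mathrm{Tr}\big(\vec{A}_1\vec{A}_1^{-1}\big) = \mathrm{Tr}(\vec{I})$, which is the claim.

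The only real content is the lemma-internal fact that, for a matrix $\vec{B}$ banded within $[-\nu,\nu]$ and an arbitrary conformable matrix $\vec{C}$, one has $\mathrm{Tr}(\vec{B}\vec{C}) = \mathrm{Tr}(\vec{B}[\vec{C}]_\nu)$; everything else is substitution. I would state and prove this as a one-line sub-observation first (it follows immediately from $\mathrm{Tr}(\vec{B}\vec{C}) = \sum_k \sum_\ell B(k,\ell)C(\ell,k)$ and the vanishing of $B(k,\ell)$ off the band), then apply it twice. I do not anticipate a genuine obstacle; the main thing to be careful about is the index convention for the banding operator (which diagonals are "upper" versus "lower"), since the statement uses the asymmetric-band convention from the footnote — but for a symmetric band $[-\nu,\nu]$ this subtlety does not bite, and the argument is symmetric in the two triangular halves anyway.
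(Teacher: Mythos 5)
Your proof is correct and rests on exactly the same observation as the paper's: the trace of a product of a matrix banded within $[-\nu,\nu]$ with a matrix supported off that band vanishes (the paper applies this once to $\vec{A}_3=\vec{A}_2-\vec{A}_1^{-1}$, you apply it twice via the identity $\mathrm{Tr}(\vec{B}\vec{C})=\mathrm{Tr}(\vec{B}[\vec{C}]_\nu)$, which is the same bookkeeping). No gap; the argument is essentially identical to the paper's.
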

\begin{proof}
Let $\vec{A}_3\!=\!\vec{A}_2\!-\!\vec{A}_1^{-1}$, then $[\vec{A}_3]_{\nu}\!=\!\vec{0}$ and $\vec{A}_3\!=\![\vec{A}_3]_{\backslash\nu}$. As $\vec{A}_1\!=\![\vec{A}_1]_{\nu}$, the elements along the main diagonal of $\vec{A}_1\vec{A}_3$ are zero. Therefore $\mathrm{Tr}\big(\vec{A}_1\vec{A}_2\big)\!=\!\mathrm{Tr}\big(\vec{A}_1(\vec{A}_1^{-1}\!+\!\vec{A}_3)\big)\!=\!\mathrm{Tr}(\vec{I}).$
\end{proof}

\begin{lemma} \label{lem2}
Let $\vec{A}_1$ and $\vec{A}_2$ be two $K\!\times\!K$ matrices that are banded within diagonals $[-\nu_1, \nu_2]$ and $[-\nu_3, \nu_4]$, respectively. Then the product $\vec{A}_1\vec{A}_2$ is banded within diagonals $[\max(-(\nu_1\!+\!\nu_3),1-K), \min(\nu_2\!+\!\nu_4,K-1)]$.
\end{lemma}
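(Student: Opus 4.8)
The plan is to argue directly from the banded-support definition. Write $\vec{C} = \vec{A}_1\vec{A}_2$ and examine an arbitrary entry $C(k,\ell) = \sum_{m} A_1(k,m) A_2(m,\ell)$, where $m$ ranges over $[0,K-1]$ (or $[1,K]$, depending on indexing — the convention of the paper). The goal is to show $C(k,\ell) = 0$ whenever $\ell - k > \min(\nu_2+\nu_4, K-1)$ or $k - \ell > \min(\nu_1+\nu_3, K-1)$; here I am using that \emph{any} $K\times K$ matrix is trivially banded within $[-(K-1), K-1]$, which accounts for the $\min$ with $K-1$.

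First I would treat the upper-diagonal bound. A term $A_1(k,m)A_2(m,\ell)$ is nonzero only if $\vec{A}_1$'s banding within $[-\nu_1,\nu_2]$ permits $A_1(k,m)\neq 0$, i.e. $m - k \le \nu_2$ (and $k - m \le \nu_1$), and simultaneously $\vec{A}_2$'s banding within $[-\nu_3,\nu_4]$ permits $A_2(m,\ell)\neq 0$, i.e. $\ell - m \le \nu_4$ (and $m - \ell \le \nu_3$). Adding the two inequalities $m-k\le\nu_2$ and $\ell-m\le\nu_4$ gives $\ell - k \le \nu_2 + \nu_4$; hence if $\ell - k > \nu_2 + \nu_4$ every term in the sum vanishes and $C(k,\ell)=0$. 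By the symmetric argument, adding $k - m \le \nu_1$ and $m - \ell \le \nu_3$ yields $k - \ell \le \nu_1 + \nu_3$, so $C(k,\ell)=0$ whenever $k-\ell > \nu_1+\nu_3$. Combining with the trivial bound $|k-\ell|\le K-1$ gives exactly the banding within $[\max(-(\nu_1+\nu_3),1-K),\,\min(\nu_2+\nu_4,K-1)]$ claimed in the statement.

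There is essentially no obstacle here — the only thing to be careful about is the bookkeeping of signs in the two-sided banding definition (which index difference is controlled by $\nu_1$ versus $\nu_2$) and the harmless truncation to $[1-K,K-1]$ for the degenerate case where $\nu_1+\nu_3$ or $\nu_2+\nu_4$ exceeds $K-1$. Since the excerpt already remarks that Lemma \ref{lem2} ``can be verified straightforwardly,'' I would keep the write-up to the two inequality additions above and the one-line remark about the $K-1$ cap.
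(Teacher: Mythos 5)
Your entrywise argument is the right one, and it is essentially the only proof this lemma needs; the paper itself gives no proof at all (it merely remarks that Lemma~\ref{lem2} ``can be verified straightforwardly''), so there is no alternative route to compare against. The one thing to fix is precisely the bookkeeping item you flagged: you have the roles of $\nu_1$ and $\nu_2$ (and likewise $\nu_3$ and $\nu_4$) reversed relative to the paper's convention. By the paper's definition (see its footnote: $\nu_1$ counts the \emph{upper} diagonals), ``banded within $[-\nu_1,\nu_2]$'' means $A(k,\ell)=0$ whenever $\ell-k>\nu_1$ or $k-\ell>\nu_2$, so a nonzero $A_1(k,m)$ forces $m-k\le\nu_1$ and $k-m\le\nu_2$, not $m-k\le\nu_2$ and $k-m\le\nu_1$ as you wrote. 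With the labels corrected, the two additions read $\ell-k=(\ell-m)+(m-k)\le\nu_3+\nu_1$ and $k-\ell=(k-m)+(m-\ell)\le\nu_2+\nu_4$, which is exactly the claimed band $[-(\nu_1+\nu_3),\,\nu_2+\nu_4]$ after the harmless cap at $[1-K,K-1]$. Since you applied the swapped reading consistently to both the hypotheses and the conclusion, the logical structure of your proof is intact and nothing breaks; only the labels in the intermediate inequalities need to be interchanged to match the paper's definition.
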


\subsection{System Model of the CS Demodulator}
The CS demodulators that we investigate operate on the basis of the mismatched\footnote{By \lq\lq{}mismatched\rq\rq{} we mean that $\tilde{p}(\vec{y}|\vec{x})$ may not be a valid probability distribution function and in general differs from the true conditional probability distribution function $p(\vec{y}|\vec{x})$ even with $\hat{\vec{x}}\!=\!\vec{0}$, but such a \lq\lq{}mismatched\rq\rq{} property is for the purpose of reducing the size of trellis description in the BCJR.} function
\bea \label{p1md2} \tilde{p}(\vec{y}|\vec{x})= \exp\!\big(2\Re\{\vec{x}\rmh(\vec{V}\vec{y}\!-\!\vec{R}\hat{\vec{x}})\}\!-\!\vec{x}\rmh\vec{G}\vec{x}\big)\eea
instead of the true conditional probability
\bea \label{p1md0} p(\vec{y}|\vec{x}) = \frac{1}{(\pi N_0)^N}\exp\!\left(\!-\frac{\|\vec{y}\!-\!\vec{H}\vec{x}\|^2}{N_0}\right).\eea
The matrices $\vec{V}$, $\vec{R}$ and $\vec{G}$ are the front-end filter, IC matrix, and trellis representation matrix, respectively. Without loss of generality, we have absorbed $N_0$ into $\vec{V}$, $\vec{R}$, and $\vec{G}$. Models (\ref{p1md2}) and (\ref{p1md0}) are equivalent for demodulation if we set $\vec{V}\!=\!\vec{H}\rmh/{N_0}$, $\vec{R}\!=\!\vec{0}$, and $\vec{G}\!=\!\vec{H}\rmh\vec{H}/{N_0}$, in which case the CS demodulator represents the MAP demodulator.

The detection model (\ref{p1md2}) has its roots in Falconer and Magee's paper \cite{clve1} with adding an IC step, where the system model of the demodulator is described as
\bea  \label{p1md1} \tilde{T} (\vec{y}|\vec{x})=\exp\!\big(\!-\!\|\vec{W}\vec{y}-\vec{T}\hat{\vec{x}}-\vec{F}\vec{x}\|^2\big) \eea
By setting $\vec{T}\!=\!\vec{0}$, we obtain the same system model as in \cite{clve1}. If identifying $\vec{V}\!=\!\vec{F}\rmh\vec{W}$, $\vec{R}\!=\!\vec{F}\rmh\vec{T}$, and $\vec{G}\!=\!\vec{F}\rmh\vec{F}$, model (\ref{p1md1}) is equivalent to (\ref{p1md2}) since
{\setlength\arraycolsep{2pt}\bea \tilde{T} (\vec{y}|\vec{x})&\propto&\exp\!\big(2\Re\{\vec{x}\rmh(\vec{F}\rmh\vec{W}\vec{y}-\vec{F}\rmh\vec{T}\hat{\vec{x}})\}\!-\!\vec{x}\rmh\vec{F}\rmh\vec{F}\vec{x}\big)
\notag \\
&=&\exp\!\big(2\Re\{\vec{x}\rmh(\vec{V}\vec{y}-\vec{R}\hat{\vec{x}})\}-\vec{x}\rmh\vec{G}\vec{x}\big). \notag \eea}
\hspace*{-1.4mm}The detection model (\ref{p1md1}) is usually denoted as \lq\lq{}Forney\rq\rq{} model \cite{clve1} due to its Euclidean-distance form, while the more general model (\ref{p1md2}) is called \lq\lq{}Ungerboeck\rq\rq{} model \cite{U74, FMP07, RC15}. An advantage of the Ungerboeck model over the Forney model is that the parameter optimization through GMI-maximization is simpler \cite{RP12}. However, as both models can be viewed as \lq\lq{}natural\rq\rq{} CS demodulators, we shall investigate both in CS demodulator design for iterative receivers.

In order to optimize $(\vec{V},\vec{R},\vec{G})$, we choose to work with the GMI which is an achievable rate for a receiver that operates on the basis of a mismatched version of the channel law. The GMI in nats/channel is defined as
\bea  \label{egmi} I_{\mathrm{GMI}}=-\mathbb{E}_{p(\vec{y})}\left[\log \tilde{p}(\vec{y})\right]\!+\!\mathbb{E}_{p(\vec{y},\vec{x})}\left[\log \tilde{p}(\vec{y}|\vec{x})\right]\eea
where $\tilde{p}(\vec{y})\!=\!(1/\pi^K)\!\int\tilde{p}(\vec{y}|\vec{x})\exp(-\|\vec{x}\|^2)\mathrm{d}\vec{x}$ and the expectation is taken over the true statistics $p(\vec{y})$ and $p(\vec{y},\vec{x})$. Although finite constellations $\mathcal{X}$ are almost always used in practice, they are hard to analyze. In order to obtain a mathematically tractable problem, here we use a zero-mean, unit variance, complex Gaussian constellation for each entry of $\vec{x}$. With Gaussian inputs, the trellis discussed earlier has no proper meaning as the number of states is infinite even for finite $\nu$. However, the Gaussian assumption is only made in order to design the receiver parameters. We first state Theorem 1 which shows the calculation of the GMI for model (\ref{p1md2}).

\begin{theorem}{\label{p1thm1}}
The GMI for the detection model (\ref{p1md2}) equals
{\setlength\arraycolsep{2pt}\bea \label{p1metricMIMO2} &&I_{\mathrm{GMI}}(\vec{V}, \vec{R}, \vec{G})=\log\!\big(\!\det(\vec{I}\!+\!\vec{G})\big)\!-\!\mathrm{Tr}(\vec{G}) \!+\!2\Re\big\{\mathrm{Tr}(\vec{V}\vec{H}\!-\!\vec{R}\vec{P})\big\}\nonumber \\
&&\quad\quad -\mathrm{Tr}\big((\vec{I}\!+\!\vec{G})^{-\!1}\big(\vec{V}(N_0\vec{I}\!+\!\vec{H}\vec{H}\rmh)\vec{V}\rmh\!-\!2\Re\big\{\vec{V}\vec{H}\vec{P}\vec{R}\rmh\big\}\!+\!\vec{R}\vec{P}\vec{R}\rmh\big)\big). \;\;\eea}
\end{theorem}
The proof of Theorem \ref{p1thm1} is given in Appendix A. Here we make the same assumption as in \cite{RP12} that $\vec{I}+\vec{G}$ is positive definite, otherwise the GMI is not well defined. With any parameters $(\vec{V},\vec{R},\vec{G})$, the GMI can be calculated in (\ref{p1metricMIMO2}), although they may not be optimal in the sense GMI-maximization. We illustrate Theorem \ref{p1thm1} with two examples.  

\begin{example}{\label{p1exam1}}
Extended Zero-Forcing filter (EZF). We extend the zero-Forcing filter \cite{CS03} to only partly invert the channel so that a trellis-search is necessary after the EZF front-end filter. In view of the CS demodulator, we can select the parameters in (\ref{p1md2}) as:
 $$\vec{V}\!=\!(\vec{I}\!+\!\vec{G})(\vec{H}\rmh\vec{H})^{-1}\vec{H}\rmh, \; \vec{R}\!=\!\vec{0}, $$
and then optimize (\ref{p1metricMIMO2}) over G. To satisfy the constraint of having a trellis with $|\mathcal{X}|^{\nu}$ states, we should have $\vec{G}\!=\![\vec{G}]_{\nu}$. The optimal $\vec{G}$, in the sense of maximizing (\ref{p1metricMIMO2}), will be shown (Theorem \ref{p1thm2}) to satisfy
\bea [(\vec{I}\!+\!\vec{G})^{-1}]_{\nu}\!=\!N_0[(\vec{H}\rmh\vec{H})^{-1}]_{\nu}.\notag \eea
Utilizing Lemma \ref{lem1}, the GMI in (\ref{p1metricMIMO2}) for the optimal $\vec{G}$ equals
{\setlength\arraycolsep{2pt}\bea I_{\mathrm{GMI}}=\log\!\big(\!\det(\vec{I}\!+\!\vec{G})\big)\!+\!\mathrm{Tr}\big(\vec{I}\!-\!N_0(\vec{H}\rmh\vec{H})^{-1}(\vec{I}\!+\!\vec{G})\big)
=\log\!\big(\!\det(\vec{I}\!+\!\vec{G})\big). \notag \eea}
\end{example}

\begin{example}{\label{p1exam2}}
Truncated Matched filter (TMF). As previously mentioned, the MAP demodulator (\ref{p1md0}) can be written in the form (\ref{p1md2}) by setting $\vec{V}\!=\!\vec{H}\rmh/{N_0}$, $\vec{R}\!=\!\vec{0}$ and $\vec{G}\!=\!\vec{H}\rmh\vec{H}/{N_0}$. The front-end is in this case a matched filter \cite{T72} and the BCJR needs to be implemented over the Ungerboeck model \cite{U74}. To reach a trellis with $|\mathcal{X}|^{\nu}$ states, we can truncate $\vec{G}$ to its center $2\nu\!+\!1$ diagonals, i.e., we can use the following parameters in (\ref{p1md2}):
\bea\vec{V}\!=\!\vec{H}\rmh/N_0, \; \vec{R}\!=\!\vec{0},  \;\text{and}  \; \vec{G}\!=\![\vec{H}\rmh\vec{H}/N_0]_{\nu}.\notag \eea
With these choices, the GMI in (\ref{p1metricMIMO2}) equals
{\setlength\arraycolsep{2pt}\bea I_{\mathrm{GMI}}=\log\!\big(\!\det(\vec{I}\!+\![\vec{H}\rmh\vec{H}/N_0]_{\nu})\big)\!-\!\mathrm{Tr}\big(\vec{H}\rmh\vec{H}\big(N_0\vec{I}\!+\![\vec{H}\rmh\vec{H}]_{\nu}\big)^{-1}[\vec{H}\rmh\vec{H}]_{\backslash\nu}\big). \notag \eea}
\end{example}

\subsection{Constraints on the Parameter $\vec{R}$ for the CS Demodulator}
As mentioned earlier, optimization of the demodulator will be made on the basis of GMI which is evaluated for the statistical model of the tuple $(\vec{x},\hat{\vec{y}})$. As illustrated in Fig.\,\ref{p1fig2}, our approach to design a CS demodulator consists of two steps:
\begin{itemize}
\item Construction of a signal $\hat{\vec{y}} \!=\! \vec{V}\vec{y}\!-\!\vec{R}\hat{\vec{x}}$ based on the received signal $\vec{y}$ and prior mean $\hat{\vec{x}}$;
\item BCJR demodulation of $\hat{\vec{y}}$ operating on a reduced number of states $|\mathcal{X}|^{\nu}$.
\end{itemize}

\begin{figure}[t]
\begin{center}
\vspace*{-2mm}
\scalebox{1.2}{\includegraphics{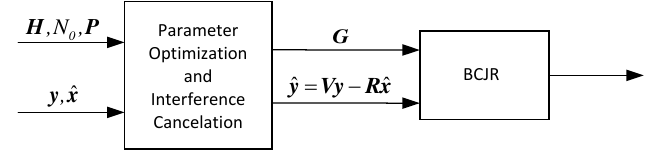}}
\vspace*{-4mm}
\caption{\label{p1fig2} CS demodulator that maximizes the GMI based on the tuple $(\hat{\vec{y}},\vec{x})$.}
\end{center}
\vspace*{-10mm}
\end{figure}

This procedure is fully analogous to LMMSE-PIC demodulator which first subtracts the interference, applies a Wiener filter, and concludes by a BCJR that operates with a diagonal matrix $\vec{G}$. The statistical behavior of $(\hat{\vec{y}},\vec{x})$ may be superior to that of the original $(\vec{y},\vec{x})$ as the former tuple corresponds to a statistically different channel than the true one. As what will be shortly shown in Example \ref{ex3}, the GMI obtained with tuple $(\hat{\vec{y}},\vec{x})$ based on perfect feedback $\hat{\vec{x}}$ can be infinitely large, which exceeds the channel capacity with the original tuple $(\vec{y},\vec{x})$. Therefore, the computed value of GMI may have little relevance for the performance of the transceiver system. In order for GMI to have bearing on performance, it is critical to put constraints on $\vec{R}$ as the next example will show.
\begin{example}\label{ex3}
Let the system model be
\bea \vec{y}=\vec{x}+\vec{n} \notag \eea
with noise density $N_0$, and $\vec{y}$, $\vec{x}$, $\vec{n}$ are $K\!\times\!1$ vectors. Assume perfect feedback information, i.e.,  $\hat{\vec{x}}=\vec{x}$. The demodulator parameters are taken as $\vec{V}=\vec{0}$, $\vec{R}=-(1\!+\!\beta)\vec{I}$, and $\vec{G}=\beta\vec{I}$, $\beta$ an arbitrary positive real value, then the statistical model for $\hat{\vec{y}}$ is
\bea \hat{\vec{y}} = \vec{V}\vec{y}-\vec{R}\hat{\vec{x}}=(1\!+\!\beta)\vec{x}. \notag
\eea
The GMI in (\ref{p1metricMIMO2}) for the tuple $(\vec{x}, \hat{\vec{y}})$ is 
\bea I_{\mathrm{GMI}}(\vec{V},\vec{R},\vec{G})=K\big(1\!+\!\log(1\!+\!\beta)\big).\notag \eea
\end{example}
In order to maximize the GMI, the demodulator will choose $\beta\!\rightarrow\!\infty$ to make $I_{\mathrm{GMI}}$ infinite. This is because, except for using the feedback information for IC, the demodulator uses the prior mean $\hat{\vec{x}}$ as a signal energy via $\vec{R}$. A demodulator equipped with these parameters will have significant error propagation and does not have much operational meaning for an iterative receiver. Thus, we conclude that unless constraints are put on $\vec{R}$, the GMI value is not relevant.

Three typical shapes of $\vec{R}$ are specified in Fig.\,\ref{p1fig3}. All three have in common that rather than adding signal energy, the rationale of $\vec{R}$ should be to remove interference. Therefore at the very minimum the diagonal elements of $\vec{R}$ should be constrained to zero, so that the demodulation of each symbol in $\vec{x}$ does not rely on its own prior mean $\hat{\vec{x}}.$ Such a constraint is perfectly aligned with the operations of LMMSE-PIC, where $\hat{x}_{\ell}$ is not used for demodulation of $x_{\ell}$. Furthermore, the rationale of the constraints we impose on $\vec{R}$ is to follow the principle of extrinsic information: The BCJR module should not rely on the prior information $\hat{x}_{\ell}$ when demodulating $x_{\ell}$ (this requires more than just the diagonal of $\vec{R}$ to be zero).

We point out that the fact that the GMI can exceed the channel capacity is a consequence of our choice not to include the side information as a prior distribution on $\vec{x}$ when evaluating the GMI. If we did, then the GMI is decaying with increasing quality of the side information (due to the mutual information $I(\vec{x},\vec{y}|\hat{\vec{x}})$ goes to 0 as $\hat{\vec{x}}$ becomes perfect). Finally, we acknowledge the fact that a permutation of the columns of $\vec{H}$ can boost the performance of the CS demodulator whenever $0\!<\!\nu\!<\!K\!-\!1$ for finite length linear vector channels. However, minimum-phase conversions of ISI channels are not beneficial as we will solve for the optimal front-end filter.

\begin{figure}
\begin{center}
\vspace*{-16mm}
\hspace*{-6mm}
\scalebox{.65}{\includegraphics{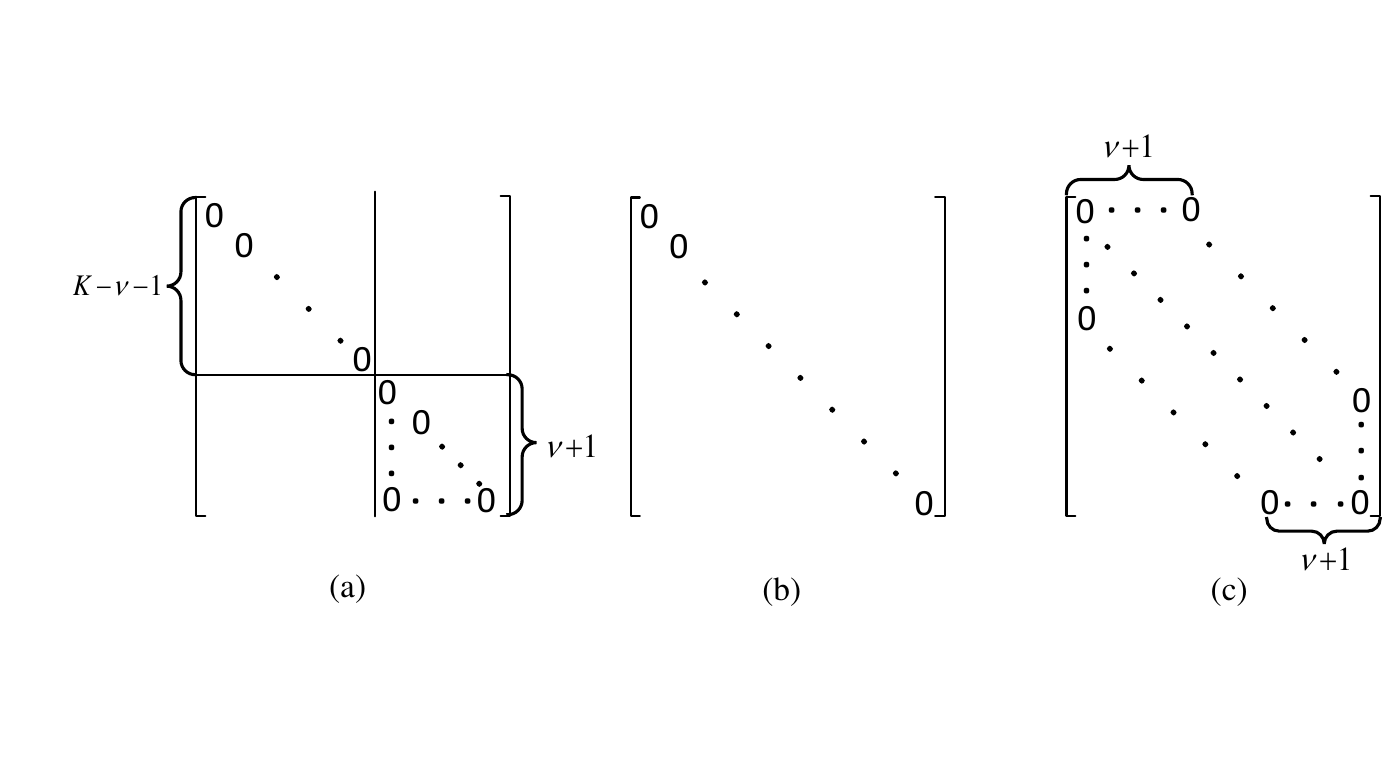}}
\vspace*{-25mm}
\caption{\label{p1fig3}  Three different types of shape of matrix $\vec{R}$, where $\nu$ is the memory size of $\vec{F}$ or $\vec{G}$, i.e., the memory size of the BCJR.}
\vspace*{-10mm}
\end{center}
\end{figure}

\section{Parameter Optimization for Finite Length Linear Vector Channel}

In this section, we elaborate the parameter optimization for finite length linear vector channels. We introduce three different methods, Method I, Method II and Method III. We start with the classical Forney model (\ref{p1md1}) based demodulator, i.e., Method I, and then extend the demodulation model into the Ungerboeck model (\ref{p1md2}), i.e., Method II. As both Method I and Method II need gradient-based approach for the optimization of target response, by carefully examining the properties of the CS demodulator with Ungerboeck model, we propose a suboptimal Method III which has an explicit construction based on an LMMSE-PIC and all parameters are in closed-forms.

\subsection{Method I} \label{p1MIMO_M1}
In Method I, the CS demodulator is based on detection model (\ref{p1md1}) and the following structures of the CS parameters $(\vec{W},\vec{T},\vec{F})$ are imposed:
\begin{itemize}
\item $\vec{W}$ is a $K\!\times\! N$ matrix with no constraints.
\item $\vec{F}$ is a $K\!\times\! K$ lower triangular matrix where only the main diagonal and the first $\nu$ lower diagonals are nonzero, i.e., $\vec{F}$ is banded within diagonals $[0,\nu]$ ($0\!\leq\!\nu\!<\!K\!-\!1$), where $\nu$ is denoted as the memory size of $\vec{F}$. Moreover, the main diagonal of $\vec{F}$ is constrained to only contain positive real values.
\item $\vec{T}$ is a $K\!\times\! K$ matrix that is constrained to be zero wherever $\vec{F}$ can take nonzero values.
\end{itemize}

The constraint of $\vec{F}$ is to shorten the memory for the trellis-search in BCJR, while the purpose of the constraint on $\vec{T}$ is to cancel the signal part that $\vec{F}$ can not handle. From Theorem 1, and by identifying $\vec{V}=\vec{F}\rmh\vec{W}$, $\vec{R}=\vec{F}\rmh\vec{T}$, and $\vec{G}=\vec{F}\rmh\vec{F}$, the GMI in (\ref{p1metricMIMO2}) of Method I equals
{\setlength\arraycolsep{2pt}\bea \label{p1metricMIMO1} I_{\mathrm{GMI}}(\vec{W}, \vec{T}, \vec{F})&=&\log\!\big(\!\det(\vec{I}\!+\!\vec{F}\rmh\vec{F})\big)\!-\!\mathrm{Tr}(\vec{F}\rmh\vec{F}) \!+\!2\Re\big\{\mathrm{Tr}\big(\vec{F}\rmh(\vec{W}\vec{H}\!-\!\vec{T}\vec{P})\big)\!\big\}\nonumber \\
&&-\mathrm{Tr}\big((\vec{I}\!+\!\vec{F}\rmh\vec{F})^{-\!1}\vec{L}_1\big)\quad\ \eea}
\hspace*{-1.4mm}where $$\vec{L}_1\!=\!\vec{F}\rmh\vec{W}(N_0\vec{I}\!+\!\vec{H}\vec{H}\rmh)\vec{W}\rmh\vec{F}\!-\!2\Re\big\{\vec{F}\rmh\vec{W}\vec{H}\vec{P}\vec{T}\rmh\vec{F}\big\}\!+\!\vec{F}\rmh\vec{T}\vec{P}\vec{T}\rmh\vec{F}.$$ With the aforementioned constraints on $\vec{F}$ and $\vec{T}$, the matrix $\vec{R}\!=\!\vec{F}\rmh\vec{T}$ has a form of shape (a) in Fig.\,\ref{p1fig3}. That is, all diagonal elements are zero as well as the lower triangular part of the $(\nu\!+\!1)\!\times\!(\nu\!+\!1)$ small matrix at the right bottom corner.

In order to optimize (\ref{p1metricMIMO1}) over $(\vec{W},\vec{T},\vec{F})$, we first introduce an $S\!\times\!K^2$ indication matrix $\vec{\Omega}$ only consisting of ones and zeros\footnote{For instance, assuming $\vec{T}\!=\!\left[\!\begin{array}{cc}    0&   1\\
   2&   0\end{array} \!\right]\!$, then the indication matrix $\vec{\Omega}\!=\!\left[\!\begin{array}{cccc}  0&1& 0& 0\\
 0&0& 1& 0\end{array} \!\right]\!$, and the vector $\vec{\Omega}\mathrm{vec}(\vec{T})\!=\!\left[\!\begin{array}{rr}  2\\
1\end{array} \!\right]\!$.}, having a single 1 in each row, and $S$ equals the number of elements in $\vec{T}$ that are allowed to be nonzero. Let $\mathbb{I}(\mathrm{vec}(\vec{T}))$ be a vector that contains the positions where the vector $\mathrm{vec}(\vec{T})$ is allowed to be nonzero. Then the value of the $k$th entry in $\mathbb{I}(\mathrm{vec}(\vec{T}))$ gives the column where row $k$ of $\vec{\Omega}$ is 1. That is, the $S\!\times\!1$ vector $\vec{\Omega}\mathrm{vec}(\vec{T})$ stacks the columns of $\vec{T}$ on top of each other but with all elements that are constrained to zero removed.

With such a definition of $\vec{\Omega}$, and define two $K\!\times\!K$ matrices as,
 {\setlength\arraycolsep{2pt}\bea  \label{p1M} \vec{M}&=&\vec{H}\rmh(N_0\vec{I}\!+\!\vec{HH}\rmh)^{-\!1}\vec{H}\!-\!\vec{I},\\
 \label{p1Mt} \tilde{\vec{M}}&=&\vec{P}(\vec{I}\!+\!\vec{M})\vec{P}\!-\!\vec{P},\eea}
\hspace*{-1.4mm}the GMI for the optimal $\vec{W}$ and $\vec{T}$ is given in Proposition \ref{p1prop1} and the proof is in Appendix B.
\begin{proposition} \label{p1prop1}
Define an $S\!\times\!K^2$ matrix $\vec{D}=\vec{\Omega}\big((\vec{PM}^{\ast})\!\otimes\!\vec{I}_{K}\big)$, the optimal $\vec{W}$ maximizing the GMI in (\ref{p1metricMIMO1}) is
 {\setlength\arraycolsep{2pt} \bea \label{p1mimo1optW} \vec{W}_{\mathrm{opt}} = \vec{F}\rmih(\vec{I}\!+\!\vec{F}\rmh\vec{F}\!+\!\vec{F}\rmh\vec{T}\vec{P})\vec{H}\rmh(N_0\vec{I}\!+\!\vec{HH}\rmh)^{-\!1},\eea}
\hspace*{-1.4mm}and when $\vec{P}\!\neq\!\vec{0}$, the optimal $\vec{T}$ maximizing the GMI is given by
 {\setlength\arraycolsep{2pt} \bea \label{p1mimo1optT} \mathrm{vec}(\vec{T}_{\mathrm{opt}})=
 -\vec{\Omega}\rmt\big(\vec{\Omega}\big(\tilde{\vec{M}}^{\ast}\!\otimes\!\big(\vec{F}(\vec{I}\!+\!\vec{F}\rmh\vec{F})^{-1}\!\vec{F}\rmh\big)\big)\vec{\Omega}\rmt\big)^{-1}\vec{D}\mathrm{vec}(\vec{F}).\eea}
\hspace*{-1.4mm}With the optimal $\vec{W}$ and $\vec{T}$, the GMI reads,
{\setlength\arraycolsep{2pt} \bea \label{p1Iwt} I_{\mathrm{GMI}}(\vec{W}_{\mathrm{opt}},\vec{T}_{\mathrm{opt}},\vec{F})=\left\{\begin{array}{ll} I_1(\vec{F}),& \vec{P}\!=\!\vec{0}\\
I_1(\vec{F})+\delta_1(\vec{F}),& \vec{P}\!\neq\!\vec{0} \end{array}\right.\eea}
\hspace*{-1.4mm}where the functions $I_1(\vec{F})$ and $\delta_1(\vec{F})$ are defined as
{\setlength\arraycolsep{2pt} \bea \label{p1Iwt2} I_1(\vec{F})&=& K\!+\! \log\!\big(\!\det(\vec{I}\!+\!\vec{F}\rmh\vec{F})\big)\!+\!\mathrm{Tr}\big(\vec{M}(\vec{I}\!+\!\vec{F}\rmh\vec{F})\big), \\
 \delta_1(\vec{F}) &=& -\mathrm{vec}(\vec{F})\rmh\!\vec{D}\rmh\Big(\vec{\Omega}\big(\tilde{\vec{M}}^{\ast}\!\otimes\!\big(\vec{F}(\vec{I}\!+\!\vec{F}\rmh\vec{F})^{-\!1}\vec{F}\rmh\big)\big)\vec{\Omega}\rmt\Big)^{-\!1}\!\vec{D}\mathrm{vec}(\vec{F}).\eea}
\end{proposition}

\begin{remark}
With the definitions in (\ref{p1M}) and (\ref{p1Mt}), $\vec{M}$ is the negative of the MSE matrix and $\tilde{\vec{M}}\!\preceq\!0$ holds. Hence $\delta_1(\vec{F})\!\geq\!0$ represents the GMI increments from the soft feedback.
\end{remark}

Before discussing the GMI-maximization of (\ref{p1Iwt}), we first state Theorem \ref{p1thm2} that deals with a general maximization problem.

\begin{theorem}{\label{p1thm2}}
Define a scalar function $I$ with respect to a $K\!\times\!K$ matrix $\vec{G}$ as
 \bea  \label{p1gerform} I(\vec{G})= K\!+\! \log\!\big(\!\det(\vec{I}\!+\!\vec{G})\big)\!+\!\mathrm{Tr}\big(\vec{M}(\vec{I}\!+\!\vec{G})\big) \eea
where $\vec{G}$ satisfies $\vec{G}\!=\![\vec{G}]_{\nu}$. Then the optimal $\vec{G}$ maximizing $I$ is the unique solution that satisfies
 \bea \label{p1optcond} [(\vec{I}\!+\!\vec{G}_{\mathrm{opt}})^{-1}]_{\nu}=-[\vec{M}]_{\nu}. \eea
With $\vec{G}_{\mathrm{opt}}$, the maximal $I$ equals
 \bea \label{p1optI} I(\vec{G}_{\mathrm{opt}})=\log\!\big(\!\det(\vec{I}\!+\!\vec{G}_{\mathrm{opt}})\big). \eea
\end{theorem}
\begin{proof}
Taking the first order differential of $I$ with respect to $\vec{G}$ and noticing that $\vec{G}$ is banded within diagonals $[-\nu,\nu]$, yields (\ref{p1optcond}) after some manipulations. The existence and uniqueness of such an optimal solution for (\ref{p1optcond}) is proved in \cite[Theorem 2]{KM00} and also illustrated in \cite[Proposition 2]{RP12}. By Lemma \ref{lem1}, $\mathrm{Tr}\big([\vec{I}+\vec{G}_{\mathrm{opt}}]^{-1}\vec{M}\big)\!=\!-K$ from (\ref{p1optcond}), and then (\ref{p1optI}) follows.
\end{proof}

Optimizing over $\vec{F}$ in (\ref{p1Iwt}) when $\vec{P}\!\neq\!\vec{0}$ is difficult and cannot be carried out in closed-form. In Appendix C we show by an example that (\ref{p1Iwt}) is in general non-concave. Therefore, a gradient based numerical optimization procedure is utilized to search for the optimal $\vec{F}$. In the $i$th iteration, we construct
$$\vec{F}^{(i)}=\vec{F}^{(i-1)}\!+\!\nabla_{\vec{F}^{\ast}}I_{\mathrm{GMI}}\big(\vec{W}_{\mathrm{opt}},\vec{T}_{\mathrm{opt}},\vec{F}^{(i-1)}\big)$$ where $\nabla_{\vec{F}^{\ast}}I_{\mathrm{GMI}}(\vec{W}_{\mathrm{opt}},\vec{T}_{\mathrm{opt}},\vec{F})$ is the conjugate of the gradient of the GMI with respect to (the nonzero part of) $\vec{F}$, and is given in Appendix D.

With $\vec{P}\!=\!\vec{0}$, if replacing $\vec{F}\rmh\vec{F}$ by $\vec{G}$, (\ref{p1Iwt2}) has the same form as (\ref{p1gerform}), and $\vec{G}_{\mathrm{opt}}$ is in closed-form as stated in Theorem \ref{p1thm2}. If $\vec{G}_{\mathrm{opt}}\!\succeq\!\vec{0}$, the optimal $\vec{F}$ then equals the Cholesky decomposition of $\vec{G}_{\mathrm{opt}}$. Whenever it is not, a gradient based numerical optimization procedure is utilized to optimize (\ref{p1Iwt2}), and $\vec{G}_{\mathrm{opt}}$ from Theorem 2 is used to initialize the starting point of $\vec{F}$ for any $\vec{P}$, which has been observed to be highly reliable.

Next we establish a connection between the front-end filter $\vec{W}$ and IC matrix $\vec{T}$ in Method I.

\begin{proposition} \label{p1prop2}
For $\vec{P}\!\neq\!\vec{0}$, and with the optimal $\vec{W}$ and $\vec{T}$, the matrix $\vec{F}\rmh(\vec{W}_{\mathrm{opt}}\vec{H}-\vec{T}_{\mathrm{opt}})$ is banded within diagonals $[-\nu,K\!-\!1]$.
\end{proposition}
\begin{proof} 
Noting that $\vec{\Omega}\rmt\vec{\Omega}\mathrm{vec}(\vec{T}_{\mathrm{opt}})\!=\!\mathrm{vec}(\vec{T}_{\mathrm{opt}})$ and $\vec{\Omega}\vec{\Omega}\rmt \!=\!\vec{I}$, from (\ref{p1mimo1optT}) and (\ref{p1mimo1optt_app}), it holds that
{\setlength\arraycolsep{1pt}\bea \label{p1appe1} \vec{\Omega}\big(\tilde{\vec{M}}^{\ast}\!\otimes\!\big(\vec{F}(\vec{I}\!+\!\vec{F}\rmh\vec{F})^{-1}\!\vec{F}\rmh\big)\big)\vec{\Omega}\rmt\vec{\Omega}\mathrm{vec}(\vec{T}_{\mathrm{opt}})&=&\vec{\Omega}\mathrm{vec}\big(\vec{F}(\vec{I}\!+\!\vec{F}\rmh\vec{F})^{-1}\!\vec{F}\rmh\vec{T}_{\mathrm{opt}}\tilde{\vec{M}}\big)\notag\\
&=&-\vec{\Omega}\mathrm{vec}(\vec{FMP}),\eea}
\hspace*{-1.4mm}which shows that, the elements of the matrix $\Delta\!=\!\vec{F}(\vec{I}\!+\!\vec{F}\rmh\vec{F})^{-1}\!\vec{F}\rmh\vec{T}_{\mathrm{opt}}\tilde{\vec{M}}\!+\!\vec{FMP}$ are zero wherever $\vec{T}$ can be nonzero. Hence $\Delta$ is banded within diagonals $[0,\nu]$. On the other hand, with the optimal $\vec{W}$ given in (\ref{p1mimo1optW}) and $\vec{M}$, $\tilde{\vec{M}}$ defined in (\ref{p1M}) and (\ref{p1Mt}), we have
{\setlength\arraycolsep{2pt} \bea  \label{p1appe3} \vec{F}\rmh(\vec{W}_{\mathrm{opt}}\vec{H}\!-\!\vec{T}_{\mathrm{opt}})\!-\!(\vec{I}\!+\!\vec{F}\rmh\vec{F})=(\vec{I}\!+\!\vec{F}\rmh\vec{F})\vec{F}^{-\!1}\Delta\vec{P}^{-\!1}. \eea}
\hspace*{-1.4mm}Note that, $\vec{F}^{-1}$ is lower triangular, $\vec{I}\!+\!\vec{F}\rmh\vec{F}$ is banded within diagonals $[-\nu,\nu]$, and $\vec{P}$ is diagonal. Utilizing Lemma \ref{lem2}, the r.h.s in (\ref{p1appe3}) is banded within diagonals $[-\nu,K\!-\!1]$. Therefore $\vec{F}\rmh(\vec{W}_{\mathrm{opt}}\vec{H}-\vec{T}_{\mathrm{opt}})$ is also banded within diagonals $[-\nu,K\!-\!1]$.
\end{proof}

Proposition 2 reveals an interesting and somewhat surprising fact that, although the BCJR only has a memory size $\nu$, the interference outside the memory size $\nu$ shall not be perfectly canceled with the optimal CS demodulator in Method I. As will be shown later, such a property also holds for the other two designs of CS demodulator, i.e., Method II and III.

\subsection{Method II}\label{MIMO_M2}
Method II origins from Ungerboeck's 1974 paper \cite{U74}. Different from Method I, an Ungerboeck detection model (\ref{p1md2}) instead of the Forney model (\ref{p1md1}) is applied. The Ungerboeck model has been extensively discussed in \cite{CB, RC15, FMP07}. The system model (\ref{p1md2}) has the following constraints:
\begin{itemize}
\item $\vec{V}$ is a $K\!\times\!N$ matrix with no constraints.
\item $\vec{G}$ is a $K\!\times\!K$ Hermitian matrix satisfying $\vec{G}\!=\![\vec{G}]_{\nu}$ and $\vec{I}\!+\!\vec{G}\!\succ\!0$, where $\nu$ is the memory size of $\vec{G}$.
\item $\vec{R}$ is a $K\!\times\!K$ matrix where the shape can be specified.
\end{itemize}
Instead of optimizing $(\vec{W},\vec{T},\vec{F})$, in Method II we optimize $(\vec{V}, \vec{R}, \vec{G})$ for (\ref{p1metricMIMO2}). The same definition of the indication matrix $\vec{\Omega}$ is used as in Method I, but now $\vec{\Omega}$ corresponds to $\vec{R}$ instead of $\vec{T}$. We continue to let $S$ denote the number of elements that are allowed to be nonzero in $\vec{R}$. That is, the $S\!\times\!1$ vector $\vec{\Omega}\mathrm{vec}(\vec{R})$ stacks the columns of $\vec{R}$ on top of each other but with all elements that are constrained to zero removed. In Method II, we have Proposition \ref{p1prop3} that shows the GMI calculation with optimal $\vec{V}$ and $\vec{R}$.

\begin{proposition} \label{p1prop3}
Define an $S\!\times\!1$ vector $\vec{d}=\vec{\Omega}\mathrm{vec}(\vec{MP})$, the optimal $\vec{V}$ for the GMI in (\ref{p1metricMIMO2}) is
 {\setlength\arraycolsep{2pt} \bea \label{p1optgm2} \vec{V}_{\mathrm{opt}} = (\vec{I}+\vec{G}+\vec{R_{\mathrm{opt}}P})\vec{H}\rmh(\vec{HH}\rmh+N_0\vec{I})^{-1},\eea}
\hspace*{-1.4mm}and when $\vec{P}\!\neq\!\vec{0}$, the optimal $\vec{R}$ maximizing the GMI is given by,
 {\setlength\arraycolsep{2pt} \bea \label{p1mimo2optR} \mathrm{vec}(\vec{R}_{\mathrm{opt}})=
 -\vec{\Omega}\rmt\big(\vec{\Omega}\big(\tilde{\vec{M}}^{\ast}\!\otimes\!(\vec{I}+\vec{G})^{-1}\big)\vec{\Omega}\rmt\big)^{-1}\vec{d}.\eea}
\hspace*{-1.4mm}With the optimal $\vec{V}$ and $\vec{R}$, the GMI in (\ref{p1metricMIMO2}) equals
{\setlength\arraycolsep{2pt} \bea \label{p1Ivr} I_{\mathrm{GMI}}(\vec{V}_{\mathrm{opt}}, \vec{R}_{\mathrm{opt}},\vec{G} )=\left\{\begin{array}{ll} I_2(\vec{G}),& \vec{P}\!=\!\vec{0}\\
I_2(\vec{G})+\delta_2(\vec{G}),& \vec{P}\!\neq\!\vec{0} \end{array}\right.\eea}
\hspace*{-1.4mm}where the functions $ I_2(\vec{G})$ and $\delta_2(\vec{G})$ are defined as,
{\setlength\arraycolsep{2pt} \bea \label{p1Ivr1} I_2(\vec{G})&=& K\!+\!\log\!\big(\!\det(\vec{I}\!+\!\vec{G})\big)\!+\!\mathrm{Tr}\big(\vec{M}(\vec{I}\!+\!\vec{G})\big), \\
\label{p1deta2}\delta_2(\vec{G}) &=& -\vec{d}\rmh\big(\vec{\Omega}\big(\tilde{\vec{M}}^{\ast}\!\otimes\! (\vec{I}\!+\!\vec{G})^{-\!1}\big)\vec{\Omega}\rmt\big)^{\!-\!1}\!\vec{d}.\eea}
\end{proposition}
The proof is given in Appendix E. Similar to $\delta_1(\vec{F})$ in Method I, $\delta_2(\vec{G})\!\geq\!0$ represents the GMI increment from the soft information.

When $\vec{P}\!\neq\!\vec{0}$, the optimization over $\vec{G}$ in (\ref{p1Ivr}) also uses a gradient based numerical optimization, and the gradient of $I_{\mathrm{GMI}}(\vec{V}_{\mathrm{opt}}, \vec{R}_{\mathrm{opt}},\vec{G})$ with respect to (the nonzero part of) $\vec{G}$ is provided in Appendix F. The closed-from $\vec{G}$ from Theorem \ref{p1thm2} with $\vec{P}\!=\!\vec{0}$ is still used as the starting point for $\vec{P}\!\neq\!\vec{0}$. However, different from Method I, the optimization procedure is concave and the proof is given in Appendix G.

Although the optimal $\vec{R}$ is solved for in closed-form as in (\ref{p1mimo2optR}), we shall specify the constraint (reflected by $\vec{\Omega}$) on it. We consider two types of $\vec{R}$ in Method II. Firstly, as we are interested in the comparison between Method I and Method II, we also consider the shape (a) in Fig.\,\ref{p1fig3}, which has the same shape as for $\vec{R}\!=\!\vec{F}\rmh\vec{T}$ in Method I. Secondly, we consider a band-shaped $\vec{R}$ with memory size $\nR$, where shape (b) and (c) in Fig.\,\ref{p1fig3} are typical cases with $\nR\!=\!0$ and $\nR\!=\!\nu$, respectively. With shape (b), we only limit the diagonal elements of $\vec{R}$ to be zero and intend to eliminate the interference as much as possible. With shape (c), we limit $\vec{R}$ to have the opposite form of $\vec{G}$, that is, the elements of $\vec{R}$ are constrained to be zero wherever $\vec{G}$ is nonzero. The intention is to only cancel the interference that the BCJR represented by $\vec{G}$ cannot handle. Shape (c) is based on the same idea as Method I, but now operates on the Ungerboeck model.

The connection between the optimal front-end filter $\vec{V}$ and IC matrix $\vec{R}$ in Method II is now established in Proposition \ref{p1prop4}.

\begin{proposition} \label{p1prop4}
For $\vec{P}\!\neq\!\vec{0}$ and the optimal $\vec{V}$ and $\vec{R}$,
\be [\vec{V}_{\mathrm{opt}}\vec{H}]_{\backslash(\nu\!+\!\nu_{\mathrm{R}})}= [\vec{R}_{\mathrm{opt}}]_{\backslash(\nu\!+\!\nu_{\mathrm{R}})}. \ee
That is, the elements of $\vec{V}_{\mathrm{opt}}\vec{H}$ and $\vec{R}_{\mathrm{opt}}$ are equal outside the center $2(\nu\!+\!\nu_{\mathrm{R}})\!+\!1$ diagonals for any $\vec{G}$ that is banded within diagonals $[-\nu,\nu]$, where $\nu_{\mathrm{R}}\!=\!0$ for $\vec{R}$ with both shape (a) and (b), while $\nu_{\mathrm{R}}\!=\!\nu$ for $\vec{R}$ with shape (c).
\end{proposition}

\begin{proof} 
Following similar steps as in the proof of Proposition \ref{p1prop2}, (\ref{p1mimo2optR}) can be rewritten as,
 {\setlength\arraycolsep{2pt} \bea \vec{\Omega}\mathrm{vec}\big((\vec{I}+\vec{G})^{-1}\vec{R}_{\mathrm{opt}}\tilde{\vec{M}}\big)=-\vec{\Omega}\mathrm{vec}(\vec{MP}).\eea}
\hspace*{-1.4mm}It shows that, the elements of the matrix $\Delta\!=\!(\vec{I}+\vec{G})^{-1}\vec{R}_{\mathrm{opt}}\tilde{\vec{M}}\vec{P}^{-1}\!+\!\vec{M}$ are zero wherever $\vec{R}$ can be nonzero. On the other hand, with the optimal $\vec{V}$ in (\ref{p1optgm2}) we have
\bea \label{p1appi3}\vec{V}_{\mathrm{opt}}\vec{H}\!-\!\vec{R}_{\mathrm{opt}}\!-\!(\vec{I}\!+\!\vec{G})\!=\!(\vec{I}\!+\!\vec{G})\Delta. \eea
As $\vec{I}\!+\!\vec{G}$ is banded within diagonals $[-\nu,\nu]$, utilizing Lemma \ref{lem2} ($\vec{R}$ with shape (a) is slightly different, but it can be verified straightforwardly), and with the three shapes of $\vec{R}$ in Fig.\,\ref{p1fig3}, it can be shown that the r.h.s in (\ref{p1appi3}) is banded within diagonals $[-(\nu+\nu_{\mathrm{R}}),\nu+\nu_{\mathrm{R}}]$, where $\nu_{\mathrm{R}}\!=\!0$ for the shape (a) and (b), and $\nu_{\mathrm{R}}\!=\!\nu$ fot the shape (c). Therefore, $\vec{V}_{\mathrm{opt}}\vec{H}\!-\!\vec{R}_{\mathrm{opt}}$ on the l.h.s in (\ref{p1appi3}) is banded within diagonals $[-(\nu+\nu_{\mathrm{R}}),\nu+\nu_{\mathrm{R}}]$.
\end{proof}

The same as Proposition 2 for Method I, Proposition 4 shows that the signal part that is not considered in $\vec{G}$ (the BCJR) shall not be perfectly canceled inside the center $2(\nu\!+\!\nu_{\mathrm{R}})\!+\!1$ diagonals for Method II, instead of the center $2\nu\!+\!1$ diagonals where $\vec{G}$ is constrained to be nonzero. With LMMSE-PIC, we have $\nu\!=\!\nu_{\mathrm{R}}\!=\!0$ and Proposition \ref{p1prop4} is natural and frequently used. However, when $\nu_{\mathrm{R}}\!>\!0$, a more general property is revealed that, $\vec{V}_{\mathrm{opt}}\vec{H}$ and $\vec{R}$ are only equal outside the center $2(\nu\!+\!\nu_{\mathrm{R}})\!+\!1$ diagonals.

\subsection{Method III}
So far we have discussed two types of CS demodulators based on Forney and Ungerboeck detection models, respectively. One disadvantage of them is that, in general both methods need an numerical optimization to obtain the optimal target response. Next, we construct a third method that has closed-form solutions for all CS parameters, although its GMI is suboptimal in general.

Method III relies on the same operations as Method II for $\vec{P}\!=\!\vec{0}$. By inserting $\vec{V}_{\mathrm{opt}}$ in (\ref{p1optgm2}) into (\ref{p1md2}) and setting $\vec{R}\!=\!\vec{0}$, the demodulator actually operates on the mismatched function
{\setlength\arraycolsep{2pt}\bea \label{p1m3p}  \tilde{p}(\vec{y}|\vec{x})&=&\exp\!\big(2\Re\{\vec{x}\rmh\vec{V}_{\mathrm{opt}}\vec{y}\}\!-\!\vec{x}\rmh\vec{G}\vec{x}\big) \notag \\
&=&\exp\!\big(2\Re\big\{\vec{x}\rmh(\vec{I}\!+\!\vec{G})\check{\vec{x}}\big\}\!-\!\vec{x}\rmh\vec{G}\vec{x}\big)\eea}
\hspace*{-1.4mm}where $\check{\vec{x}}\!=\!\vec{H}\rmh(\vec{H\!H}\rmh\!+\!N_0\vec{I})^{-\!1}\vec{y}$ is the LMMSE estimate. As can be seen from (\ref{p1m3p}), the BCJR is based on $\check{\vec{x}}$. With soft feedback, we can therefore replace $\check{\vec{x}}$ by LMMSE-PIC estimates $\tilde{\vec{x}}$. That is, instead of (\ref{p1m3p}) we operate on
\bea  \label{p1md30} \tilde{p}(\vec{y}|\vec{x},\vec{\tilde{x}}) =\exp\!\big(2\Re\!\left\{\vec{x}\rmh(\vec{I}\!+\!\vec{G})\tilde{\vec{x}}\right\}\!-\!\vec{x}\rmh\vec{G}\vec{x}\big) \eea
where $\vec{G}$ has the same banded-shape as the first two methods, but optimized according to $\vec{\tilde{x}}$. The estimate $\tilde{\vec{x}}$ is constructed as follows. As we prefer to handle the interference through the trellis-search process, the IC should not be present within the memory size $\nu$. In other words, the signal vector after the IC that is used to form the $k$th symbol of $\tilde{\vec{x}}$ is denoted as $\tilde{\vec{y}}_k$ and
\bea  \tilde{\vec{y}}_k\!=\!\vec{y}\!-\! \sum_{n\in\mathcal{A}_k}\vec{\vec{h}}_n\hat{x}_n\eea
where $\mathcal{A}_k\!=\!\big\{0\!\leq\! n\!\leq\! K\!-\!1\!:\!n\!\notin\![\max(0,k\!-\!\nu), \min( k\!+\!\nu,K\!-\!1)]\big\}$. Denote $p_n$ as the $n$th diagonal element of $\vec{P}$, the Wiener filtering coefficients \cite{OB98} for the $k$th symbol are calculated through
\bea \label{p1wiec} \hat{\vec{w}}_k=\vec{h}_k^{\rmh}(\vec{H}\rmh\vec{C}_k\vec{H}+\!N_0\vec{I}\big)^{-\!1}\eea
where $\vec{C}_k$ is a diagonal matrix with the $n$th diagonal element defined as
 \bea \label{p1matc} C_k(n)=\left\{\begin{array}{ll} 1-p_n,& k\in\mathcal{A}_k\\
1,& \mathrm{otherwise}. \end{array}\right.\eea
The estimate $\tilde{\vec{x}}$ is then obtained through
\bea \label{p1picxest} \tilde{\vec{x}}\!=\!\big[\!\!\begin{array}{cccc}\hat{\vec{w}}_1\tilde{\vec{y}}_1& \hat{\vec{w}}_2\tilde{\vec{y}}_2& \cdots& \hat{\vec{w}}_K\tilde{\vec{y}}_K
\end{array}\!\!\big]\rmt\!=\!\hat{\vec{W}}\vec{y}\!-\!\hat{\vec{C}}\hat{\vec{x}}\eea
where the coefficient matrix $\hat{\vec{W}}$ and IC matrix $\hat{\vec{C}}$ defined as
{\setlength\arraycolsep{2pt}\bea  \hat{\vec{W}}\!&=&\![\begin{array}{cccc}\hat{\vec{w}}_1\rmt& \hat{\vec{w}}_2\rmt& \cdots& \hat{\vec{w}}_K\rmt\end{array}]\rmt, \\
 \label{p1covmatc} \hat{\vec{C}}\!&=&\![\hat{\vec{W}}\vec{H}]_{\backslash\nu}.\eea}
\hspace*{-1.4mm}Inserting $\tilde{\vec{x}}$ in (\ref{p1picxest}) back into (\ref{p1md30}), the detection model we operate on reads
\bea \label{p1md3} \tilde{p}(\vec{y}|\vec{x},\vec{\hat{x}})=\exp\!\big(2\Re\big\{\vec{x}\rmh\big((\vec{I}\!+\!\vec{G})\hat{\vec{W}}\vec{y}\!-\!(\vec{I}\!+\!\vec{G})\hat{\vec{C}}\hat{\vec{x}}\big)\big\}\!-\!\vec{x}\rmh\vec{G}\vec{x}\big).
\eea
Note that, (\ref{p1md3}) is a also special case of (\ref{p1md2}) by identifying 
{\setlength\arraycolsep{2pt}\bea
\vec{V}&=&(\vec{I}\!+\!\vec{G})\tilde{\vec{W}}, \notag \\
\vec{R}&=&(\vec{I}\!+\!\vec{G})\tilde{\vec{C}}. \notag \eea}
\hspace*{-1.4mm}The GMI in (\ref{p1metricMIMO2}) in this case reads, after some manipulations,
{\setlength\arraycolsep{2pt}\bea \label{p1metricMIMO3} I_{\mathrm{GMI}}(\vec{G})\!&=&\!K\!+\!\log\!\big(\!\det(\vec{I}\!+\!\vec{G})\big)\!+\!\mathrm{Tr}\big(\hat{\vec{M}}(\vec{I} \!+\!\vec{G})\big) \eea}
\hspace{-1.4mm}with $\hat{\vec{M}}$ (the updated $\vec{M}$ in Method II) defined as
{\setlength\arraycolsep{2pt}\bea \label{p1mpic} \hat{\vec{M}}&=&\hat{\vec{W}}\vec{H}\vec{P}\hat{\vec{C}}\rmh\!+\!\hat{\vec{W}}\vec{H}\!-\!\vec{P}\hat{\vec{C}}\rmh\!+\!\big(\hat{\vec{W}}\vec{H}\vec{P}\hat{\vec{C}}\rmh\!+\!\hat{\vec{W}}\vec{H}\!-\!\vec{P}\hat{\vec{C}}\rmh\big)\rmh \notag \\
&& -\hat{\vec{W}}(\vec{H}\vec{H}\rmh\!+\!N_0\vec{I})\hat{\vec{W}}\rmh\!-\!\hat{\vec{C}}\vec{P}\hat{\vec{C}}\rmh
\!-\!\vec{I}, \quad\eea}
\hspace{-1.4mm}which can be shown to be the negative of the MSE matrix since $$\hat{\vec{M}}\!=\!-\mathbb{E}\big[(\vec {x}\!-\!\tilde{\vec{x}})(\vec{x}\!-\!\tilde{\vec{x}})\rmh\big]\!=\!-\mathbb{E}\big[(\vec {x}\!-\!\hat{\vec{W}}\vec{y}\!+\!\hat{\vec{C}}\hat{\vec{x}})(\vec{x}\!-\!\hat{\vec{W}}\vec{y}\!+\!\hat{\vec{C}}\hat{\vec{x}})\rmh\big].$$
The optimal $\vec{G}$ for (\ref{p1metricMIMO3}) is then obtained from Theorem \ref{p1thm2}, and the optimal GMI reads
\bea  I_{\mathrm{GMI}}(\vec{G}_{\mathrm{opt}})=\log\!\big(\!\det(\vec{I}\!+\!\vec{G}_{\mathrm{opt}})\big). \notag \eea

\begin{figure}[t]
\begin{center}
\hspace*{-2mm}
\scalebox{.7}{\includegraphics{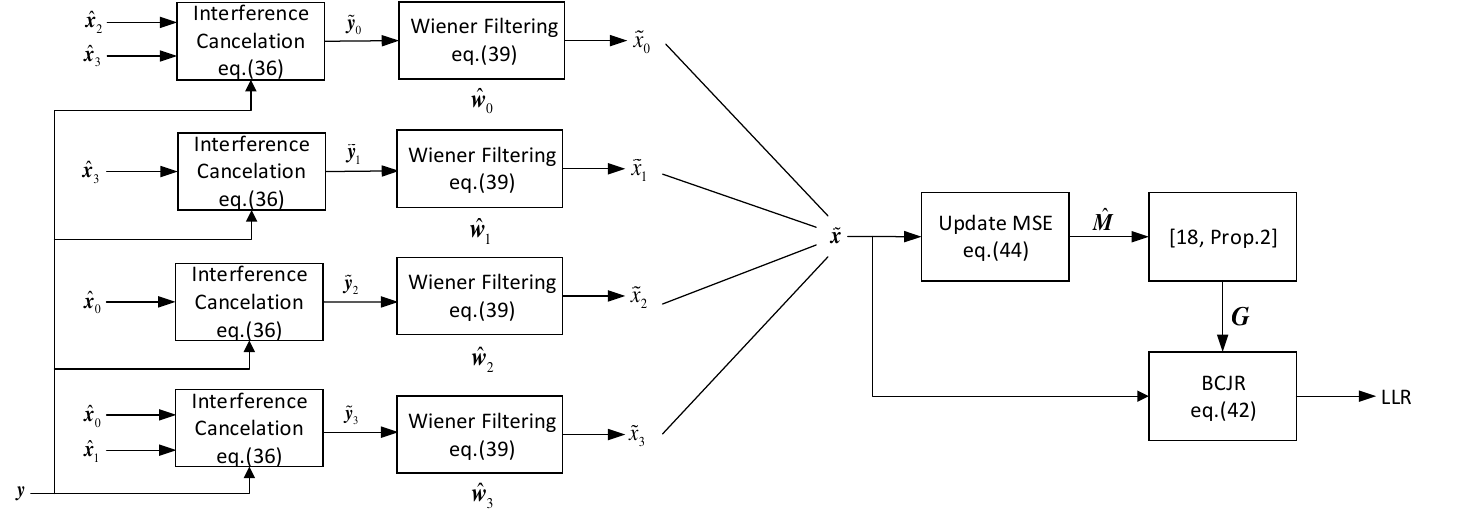}}
\vspace*{-12mm}
\caption{\label{p1fig4} An graphical overview of Method III with $K\!=\!4$ and $\nu\!=\!1$.}
\vspace*{-10mm}
\end{center}
\end{figure}

An graphical overview of Method III for $K\!=\!4$ and $\nu\!=\!1$ is illustrated in Fig.\,\ref{p1fig4}. For any $\vec{G}$ with memory size $\nu$, the IC matrix $(\vec{I}\!+\!\vec{G})\tilde{\vec{C}}$ is zero along the main diagonal, which guarantees that the extrinsic information will not be used for current symbols in the IC process. In GMI sense, Method III will not outperform Method II with a shape (b) $\vec{R}$, but it may outperform the GMI of Method II with a shape (c) $\vec{R}$, as it can be verified that a shape (c) $\vec{R}$ has zeros at the positions where $(\vec{I}\!+\!\vec{G})\hat{\vec{C}}$ are also zeros.

 \begin{remark}
As $\hat{\vec{W}}\vec{H}\!-\!\hat{\vec{C}}\!=\![\hat{\vec{W}}\vec{H}]_{\nu}$, by Lemma \ref{lem2} $(\vec{I}\!+\!\vec{G})(\hat{\vec{W}}\vec{H}-\hat{\vec{C}})$ is banded within diagonals $[-2\nu, 2\nu]$, which shows that, $[(\vec{I}\!+\!\vec{G})\hat{\vec{W}}\vec{H}]_{\backslash2\nu}\!=\![(\vec{I}\!+\!\vec{G})\hat{\vec{C}}]_{\backslash2\nu}$. Therefore, Proposition \ref{p1prop4} also holds for Method III with $\nu_{\mathrm{R}}\!=\!\nu$.
\end{remark}

\section{Parameter Optimization for ISI Channel}
\

In this section, we extend the CS demodulators to ISI channels where the matrix $\vec{P}\!=\!\alpha\vec{I}$ and the block length $K$ is infinitely large. The formulas for the achievable rates in (\ref{p1metricMIMO2}), (\ref{p1metricMIMO1}) and (\ref{p1metricMIMO3}) can be directly applied to (\ref{p1sm1}), but as the achievable rate $I_{\mathrm{GMI}}$ (as a function of the specified CS parameters) is then dependent on the block length $K$, we are interested in asymptotic rate
$$\bar{I} =\lim_{K\to \infty} \frac{1} {K}I_{\mathrm{GMI}}. $$

Ideally, in the ISI case the front-end matrix $\vec{V}$ and IC matrix $\vec{R}$ correspond to linear filtering operations and the filters are infinitely long, but in practice filters with finite tap lengths are used. Therefore, we analyze the properties of $\vec{V}$, $\vec{R}$ (and $\vec{W}$, $\vec{T}$) with a finite number of taps and approximate them by band-shaped Toeplitz matrices. Furthermore, the trellis representation matrix $\vec{G}$ (and $\vec{F}$), and channel matrix $\vec{H}$ are also band-shaped Toeplitz matrices. Therefore, in the ISI case all matrices we consider are assumed to be band-shaped Toeplitz matrices, and the band size can be arbitrary and sufficiently large so that we can analyze the asymptotic properties. In \cite{HM91} a complete theoretic machinery for ISI channels is derived and a result is that, as $K\!\to\! \infty$ the linear convolution in (\ref{p1sm1}) can be replaced with a circular convolution.

In the following, we denote the Fourier series associated to a band-shaped Toeplitz matrix $\vec{E}$ with infinitely large dimensions by $E(\omega)$, where $\vec{E}$ is constrained to be zero outside the middle $2N_{\mathrm{E}}\!+\!1$ diagonals, and $N_{\mathrm{E}}$ is referred to as the tap length of $E(\omega)$. The series $E(\omega)$ defined as $$E(\omega)\!=\!\sum\limits_{k=-N_{\mathrm{E}}}^{N_{\mathrm{E}}}\!e_k\exp\!\left(j k\omega \right)$$ is specified by a vector $\vec{e}\!=\![\!\begin{array}{ccccccc}e_{-N_{\mathrm{E}}}\;\ldots \; e_{-1}\; e_0\;  e_1\; \ldots \;e_{N_{\mathrm{E}}}\end{array}\!]$, where $e_{0}$ is the element on the main diagonal and $e_{k}$ is the element on $k$th lower ($k\!>\!0$) or upper ($k\!<\!0$) diagonal. As all quantities are evaluated as the block length $K$ grows large, $E(\omega)$ approaches the eigenvalue distribution of $\vec{E}$ (see \cite{Sze,MK} for a precise statement of this result). We first state Theorem \ref{p1thm3}, which is an asymptotic version of Theorem \ref{p1thm2} for ISI channels.

\begin{theorem}\label{p1thm3}
Assume that two band-shaped Toeplitz matrices $\vec{G}$ and $\vec{M}$ with infinitely large dimensions satisfying $[\vec{G}]_{\backslash\nu}\!=\!\vec{0}$, $\vec{I}\!+\!\vec{G}\!\succ\!0$ and $\vec{M}\!\prec\!0$. Define a scalar function
\bea \label{p1isibar} \bar{I}= 1\!+\!\frac{1}{2\pi}\! \int_{-\pi}^{\pi}\!\!\big( \log(1\!+\! G(\omega))\!+\!M(\omega)(1\!+\!G(\omega))\big)\mathrm{d}\omega. \eea
Then, the optimal $G(\omega)$ that maximizes $\bar{I}$ is
$$ G_{\mathrm{opt}}(\omega)\!=\!|u_0\!+\!\vec{\hat{u}}\varphi(\omega)|^2\!-\!1,$$
where the $1\!\times\!\nu$ vector $\varphi(\omega)\!=\![\!\begin{array}{cccc}\exp\!\left(j\omega \right)\; \exp\!\left(j2\omega \right)\;\!\ldots\! \; \exp\!\left(j\nu\omega \right)\end{array}\!]\rmt$, and
 {\setlength\arraycolsep{2pt}\bea u_0 &=&\frac{1}{\sqrt{\vec{\tau}_1\rmh\vec{\tau}_2^{-\!1}\vec{\tau}_1\!-\!\tau_0}}, \notag \\
  \label{p1optisim3}\vec{\hat{u}}&=&-u_0\vec{\tau}_1\rmh\vec{\tau}_2^{-\!1}.\eea}
\hspace{-1.4mm}The real scalar $\tau_0$, $\nu\!\times\!1$ vector $\vec{\tau}_1$, and $\nu\!\times\!\nu$ matrix $\vec{\tau}_2$ are defined as
{\setlength\arraycolsep{2pt} \bea
  \tau_0&=&\frac{1}{2\pi}\!\int_{-\pi}^{\pi}\!\!M(\omega)\mathrm{d}\omega, \notag \\
   \vec{\tau}_1&=&\frac{1}{2\pi}\!\int_{-\pi}^{\pi}\!\!M(\omega)\varphi(\omega)\mathrm{d}\omega, \notag \\
\vec{\tau}_2&=&\frac{1}{2\pi}\! \int_{-\pi}^{\pi}\!\!M(\omega)\varphi(\omega)\varphi(\omega)\rmh\mathrm{d}\omega. \; \eea}
\hspace{-1.4mm}Furthermore, with $G_{\mathrm{opt}}(\omega)$ the optimal $\bar{I}$ reads
\bea \label{p1optbarm3} \bar{I}=2\log(u_0).  \eea
\end{theorem}

\begin{proof}
As $\vec{I}\!+\!\vec{G}\!\succ\!0$, we assume that $1\!+\!G(\omega)\!=\!|U(\omega)|^2$, with $U(\omega)\!=\!u_0\!+\!\vec{\hat{u}}\varphi(\omega)$ and
$  \hat{\vec{u}}\!=\![\!\begin{array}{cccc}u_1\;u_2\;\!\ldots\! \;u_{\nu}\end{array}\!]$. Then $\bar{I}$ in (\ref{p1isibar}) can be rewritten as
\bea \label{p1isim3ibarnew}  \bar{I}\!=\! 1\!+\!2\log(u_0)\!+\!\frac{1}{2\pi}\!\int_{-\pi}^{\pi}\!M(\omega)\big(u_0^{2}\!+\!2\Re\{u_0\vec{\hat{u}}\varphi(\omega)\}\!+\!\vec{\hat{u}}\varphi(\omega)\varphi\rmh(\omega)\vec{\hat{u}}\rmh\big)\mathrm{d}\omega.  \eea
Taking the first order differentials with respect to $u_0$ and $\vec{\hat{u}}$ and optimizing them directly results in the optimal solution (\ref{p1optisim3}). Inserting (\ref{p1optisim3}) back into (\ref{p1isim3ibarnew}) and after some manipulations, the optimal asymptotic rate is then in (\ref{p1optbarm3}).
\end{proof}

\subsection{Method I}
The structures of $(\vec{W},\vec{T},\vec{F})$ are the same as in Section \ref{p1MIMO_M1}, except that now the matrices have infinite dimensions. Applying  Szeg\H{o}'s eigenvalue distribution theorem \cite{Sze} to (\ref{p1metricMIMO1}), the asymptotic rate reads
{\setlength\arraycolsep{2pt}\bea \label{p1ibarm1} \bar{I}\big(W(\omega),T(\omega),F(\omega)\big)&=&\lim_{K\to \infty} \frac{1} {K}I_{\mathrm{GMI}}(\vec{W},\vec{T},\vec{F}) \notag \\
&=&\frac{1}{2\pi}\! \int_{-\pi}^{\pi}\!\bigg( \log \!\big(1\!+\! |F(\omega)|^2\big)\!-\! |F(\omega)|^2 \!-\!\frac{L_1(\omega)}{1\!+\!|F(\omega)|^2} \bigg)\mathrm{d}\omega \nonumber \\
&&+\frac{1}{\pi}\!\int_{-\pi}^{\pi}\!\Re\!\big\{F^{\ast}(\omega)\big(W(\omega)H(\omega)\!-\!\alpha T(\omega)\big)\big\}\mathrm{d}\omega \quad
\eea}
\hspace{-1.4mm}where 
{\setlength\arraycolsep{2pt}\bea L_1(\omega)\!&=&\!|F(\omega)W(\omega)|^{2}\big(N_0\!+\!|H(\omega)|^2\big)\!+\!\alpha|F(\omega)T(\omega)|^2\notag \\ && \!-\!  2\alpha|F(\omega)|^{2}\!\Re\!\left\{H(\omega)W(\omega)T^{\ast}(\omega) \right\}.\notag \eea}
\hspace{-1.4mm}Note that, the Fourier series associated to $\vec{M}$ and $\tilde{\vec{M}}$ in (\ref{p1M}) and (\ref{p1Mt}) are
{\setlength\arraycolsep{2pt}\bea \label{p1mw} M(\omega)&=&\frac{|H(\omega)|^2}{N_0\!+\!|H(\omega)|^2}\!-\!1,  \\
 \label{p1mtw} \tilde{M}(\omega)&=&\alpha^{2}(M(\omega)+1)\!-\!\alpha.\eea}
\hspace{-1.4mm}Further, define a $(2N_{\mathrm{T}}\!-\!\nu)\!\times\!1$ vector
\be \label{p1phi} \phi(\omega)\!=\!\big[\!\!\begin{array}{cccccc}\exp\!\big(\!\!-\!\!jN_{\mathrm{T}}\omega \big)\; \ldots \; \exp\!\big(\!\!-\!\!j(\nu\!+\!1)\omega \big)\; \exp\!\big(j(\nu\!+\!1)\omega \big)\; \ldots \; \exp\!\big(jN_\mathrm{T}\omega \big)\end{array}\!\!\big]\rmt,\!\;\; \ee
a $(2N_{\mathrm{T}}\!-\!\nu)\!\times\!1$ vector $\vec{\varepsilon}_1$, and a $(2N_{\mathrm{T}}\!-\!\nu)\!\times\!(2N_{\mathrm{T}}\!-\!\nu)$ Hermitian matrix $\vec{\varepsilon}_2$ as
{\setlength\arraycolsep{2pt}\bea \label{p1vareps12} \vec{\varepsilon}_1&=&\frac{\alpha}{2\pi}\int_{-\pi}^{\pi}M(\omega)F^{\ast}(\omega)\phi(\omega)\mathrm{d}\omega, \notag \\
\vec{\varepsilon}_2&=&\frac{1}{2\pi} \!\int_{-\pi}^{\pi}\!\frac{\tilde{M}(\omega)|F(\omega)|^2\phi(\omega)\phi(\omega)\rmh}{1\!+\!|F(\omega)|^2}\mathrm{d}\omega, \eea}
\hspace{-1.4mm}where $N_{\mathrm{T}}$ is the tap length of $T(\omega)$, and $\nu\!+\!1$ is the band size where matrix $\vec{T}$ is constrained to zero. Then, we have Proposition \ref{p1prop5} with the proof\footnote{Proposition 5 is the same as \cite[Theorem 1]{HKHR17} which has been derived for hard feedback symbols. For completeness, we restate the proof in Appendix H.} given in Appendix H.

\begin{proposition} \label{p1prop5}
The optimal $W(\omega)$ for the asymptotic rate in (\ref{p1ibarm1}) is
\bea \label{p1optisiw} W_{\mathrm{opt}}(\omega)=\frac{H^{\ast}(\omega)}{F^{\ast}(\omega)(N_0\!+\!|H(\omega)|^2)}\big(1\!+\! |F(\omega)|^2\!+\!\alpha F^{\ast}(\omega)T_{\mathrm{opt}}(\omega)\big),\eea
and when $0\!<\!\alpha\!\leq\!1$, the optimal $T(\omega)$ reads
\bea \label{p1optisit} T_{\mathrm{opt}}(\omega)=-\vec{\varepsilon}_1\rmh\vec{\varepsilon}_2^{-\!1}\phi(\omega).\eea
With the optimal $W(\omega)$ and $T(\omega)$, the asymptotic rate equals
{\setlength\arraycolsep{2pt} \bea \label{p1ibarm1optwt}\bar{I}\big(W_{\mathrm{opt}}(\omega),T_{\mathrm{opt}}(\omega),F(\omega)\big)=\left\{\begin{array}{ll} \bar{I}_1(F(\omega)),& \alpha=0\\
\bar{I}_1(F(\omega))+\bar{\delta}_1(F(\omega)),&0<\alpha\leq1. \end{array}\right.\eea}
\hspace*{-1.4mm}The functions $\bar{I}_1(F(\omega))$ and $\bar{\delta}_1(F(\omega))$ are defined as\footnote{Similar to finite length linear vector channels, $\bar{\delta}_1(F(\omega))$ in (\ref{bardeta1}) is only defined for $\alpha\!\neq\!0$ which represents the rate increment with soft information. The same holds for $\bar{\delta}_2(G(\omega))$ in (\ref{p1bardeta2}) for Method II.},
{\setlength\arraycolsep{2pt} \bea \label{p1ibarm1optwt1} \bar{I}_1(F(\omega))&=& 1+\!\frac{1}{2\pi}\! \int_{-\pi}^{\pi}\!\!\Big( \!\log\!\big(1\!+\! |F(\omega)|^2\big)\!+\!M(\omega)\big(1\!+\!|F(\omega)|^2\big)\! \Big)\mathrm{d}\omega, \\
\label{bardeta1}\bar{\delta}_1(F(\omega)) &=&-\vec{\varepsilon}_1\rmh\vec{\varepsilon}_2^{\!-\!1}\vec{\varepsilon}_1.\eea}
\end{proposition}

In the ISI case, Method I is still not concave an example is also provided in Appendix C, and a gradient based optimization is used to optimize $F(\omega)$ with the optimal solution of $G_{\mathrm{opt}}(\omega)$ from Theorem \ref{p1thm3} is used to initialize the starting point. 

The connection between the optimal front-end filter $W(\omega)$ and the IC filter $T(\omega)$ in Proposition \ref{p1prop2} also holds for ISI channels. An asymptotic version of Proposition \ref{p1prop2} is stated in Proposition \ref{p1prop6}.
\begin{proposition} \label{p1prop6}
When $0\!<\!\alpha\!\leq\!1$, $a_k\!=\!b_k$ holds for $k\!<\!-(\nu\!+\!1)$, where
{\setlength\arraycolsep{2pt}\bea a_k&=&\frac{1}{2\pi}\! \int_{-\pi}^{\pi}F^{\ast}(\omega)W_{\mathrm{opt}}(\omega)H(\omega)\!\exp\big(\!\!-\!\!j k \omega\big)\mathrm{d}\omega \notag \\
 b_k&=&\frac{1}{2\pi}\! \int_{-\pi}^{\pi}F^{\ast}(\omega)T_{\mathrm{opt}}(\omega)\!\exp\big(\!\!-\!\!j k \omega\big)\mathrm{d}\omega.  \notag\eea}
\end{proposition}
\begin{proof}
In Appendix H, the optimal $\tilde{\vec{t}}$ in (\ref{twopt}) satisfies $\tilde{\vec{t}}_{\mathrm{opt}}\vec{\varepsilon}_2=-\vec{\varepsilon}_1\rmh$. With the definitions of $\vec{\varepsilon}_1$, $\vec{\varepsilon}_2$ in (\ref{p1vareps12}), this is equivalent to
\bea \label{p1ref} \frac{1}{2\pi} \!\int_{-\pi}^{\pi}\!\frac{\tilde{M}(\omega)|F(\omega)|^{2}T_{\mathrm{opt}}(\omega)\phi(\omega)\rmh}{1+|F(\omega)|^{2}}\mathrm{d}\omega =-\frac{\alpha}{2\pi}\! \int_{-\pi}^{\pi}\!F(\omega)M(\omega)\phi(\omega)\rmh\mathrm{d}\omega.\eea
On the other hand, with $W_{\mathrm{opt}}$ in (\ref{p1optisiw}) and $M(\omega)$, $\tilde{M}(\omega)$ defined in (\ref{p1mw}) and (\ref{p1mtw}), we have
\bea \label{p1w-t} &&\frac{1}{2\pi}\!\int_{-\pi}^{\pi}\!(F^{\ast}(\omega)W_{\mathrm{opt}}(\omega)H(\omega)\!-\!F^{\ast}(\omega)T_{\mathrm{opt}}\!-\!\big(1\!+\!|F(\omega)|^{2}\big)\big)\!\exp\!\big(\!\!-\!\!jk\omega\big)\mathrm{d}\omega \notag \\
 &&\quad =\frac{1}{2\pi}\!\int_{-\pi}^{\pi}\!\Big(\frac{\tilde{M}(\omega)F^{\ast}(\omega)T_{\mathrm{opt}}(\omega)}{\alpha}\!+\!\big(1\!+\!|F(\omega)|^{2}\big)M(\omega)\Big)\!\exp\!\big(\!\!-\!\!jk\omega\big)\mathrm{d}\omega. \quad\quad \eea
Transforming (\ref{p1ref}) and (\ref{p1w-t}) back into matrix forms, we have that (\ref{p1appe1}) and (\ref{p1appe3}) hold. Following the same arguments as in the proof of Proposition \ref{p1prop2}, $\vec{F}\rmh(\vec{W}_{\mathrm{opt}}\vec{H}\!-\!\vec{R}_{\mathrm{opt}})$ is banded within diagonals $[-\nu,K\!-\!1]$. Therefore we have
\bea  &&\frac{1}{2\pi}\!\int_{-\pi}^{\pi}\!\big(F^{\ast}(\omega)W_{\mathrm{opt}}(\omega)H(\omega)\!-\!\!F^{\ast}(\omega)T_{\mathrm{opt}}(\omega)\big)\!\exp\!\big(\!\!-\!\!jk\omega\big)\mathrm{d}\omega\!=\!0 \notag \eea
whenever $k\!<\!-(\nu\!+\!1)$, which proves Proposition \ref{p1prop6}.
\end{proof}

\subsection{Method II}
The matrices $(\vec{V},\vec{R},\vec{G})$ have the same constraints as in Section \ref{MIMO_M2} while the dimensions of these matrices are infinitely large. However, as the shape (a) of $\vec{R}$ in Fig.\,\ref{p1fig3} is not meaningful as $N, K\!\to\! \infty$, it is not considered for ISI case. Applying Szeg\H{o}'s eigenvalue distribution theorem to (\ref{p1metricMIMO2}), the asymptotic rate of Method II reads
{\setlength\arraycolsep{2pt}\bea \label{p1ibarm2} \bar{I}(V(\omega),R(\omega),G(\omega))&=&\lim_{K\to \infty} \frac{1} {K}I_{\mathrm{GMI}}(\vec{V},\vec{R},\vec{G}) \notag \\
&=&\frac{1}{2\pi}\! \int_{-\pi}^{\pi}\!\!\bigg(\! \log\! \big(1\!+ \!G(\omega)\big)\!-\! G(\omega) \!-\!\frac{L_2(\omega)}{1\!+\!G(\omega)}\bigg)\mathrm{d}\omega \nonumber \\
&&+\frac{1}{\pi}\!\int_{-\pi}^{\pi}\!\Re\big\{\big(V(\omega)H(\omega)\!-\!\alpha R(\omega)\big)\big\}\mathrm{d}\omega \quad\quad \quad\quad \quad \quad  \eea}
\hspace*{-1.4mm}where $$L_2(\omega)=|V(\omega)|^2\big(N_0\!+\!|H(\omega)|^2\big) \!+\!\alpha|R(\omega)|^2 \!-\!  2\alpha\Re\big\{H(\omega)V(\omega)R^{\ast}(\omega) \big\}.$$

Define a $2(N_{\mathrm{R}}\!-\!\nu_{\mathrm{R}})\!\times\!1$ vector
\be \label{p1lpsi} \psi(\omega)\!=\!\big[\!\!\begin{array}{cccccc}\exp\!\big(\!\!-\!\!jN_{\mathrm{R}}\omega \big)\; \!\ldots\! \; \exp\!\big(\!\!-\!\!j(\nu_{\mathrm{R}}\!+\!1)\omega \big)\; \exp\!\big(j (\nu_{\mathrm{R}}\!+\!1)\omega \big)\; \!\ldots\! \; \exp\!\big(jN_{\mathrm{R}}\omega \big)\end{array}\!\!\big]\rmt\!,\, \,\ee
a $2(N_{\mathrm{R}}\!-\!\nu_{\mathrm{R}})\!\times\!1$ vector $\vec{\zeta}_1$, and a $2(N_{\mathrm{R}}\!-\!\nu_{\mathrm{R}})\!\times\!2(N_{\mathrm{R}}\!-\!\nu_{\mathrm{R}})$ Hermitian matrix $\vec{\zeta}_2$ as
{\setlength\arraycolsep{2pt}\bea \label{p1zeta12} \vec{\zeta}_1&=&\frac{\alpha}{2\pi}\!\int_{-\pi}^{\pi}\!M(\omega)\psi(\omega)\mathrm{d}\omega, \notag \\
\vec{\zeta}_2&=&\frac{1}{2\pi}\!\int_{-\pi}^{\pi}\!\frac{\tilde{M}(\omega)\psi(\omega)\psi(\omega)\rmh}{1\!+\!G(\omega)}\mathrm{d}\omega
,\eea}
\hspace*{-1.4mm}where $N_{\mathrm{R}}$ denotes the tap length of $R_{\mathrm{opt}}(\omega)$, $2\nu_{\mathrm{R}}\!+\!1$ is the band size where $\vec{R}$ is constrained to zero, and $M(\omega)$ and $\tilde{M}(\omega)$ are in (\ref{p1mw}) and (\ref{p1mtw}). Then, we have Proposition \ref{p1prop6} with the proof in Appendix I, where we also show that $R(\omega)$ is real and $\vec{R}$ has Hermitian symmetry.
\begin{proposition} \label{p1prop7}
The optimal $V(\omega)$ for (\ref{p1ibarm2}) is,
\bea  \label{p1optisiv} V_{\mathrm{opt}}(\omega)=\frac{H^{\ast}(\omega)}{N_0\!+\!|H(\omega)|^2}\big(1\!+\!G(\omega)\!+\!\alpha R_{\mathrm{opt}}(\omega)\big),\eea
and when $0\!<\!\alpha\!\leq\!1$, the optimal $R(\omega)$ reads
\bea \label{p1optisir} R_{\mathrm{opt}}(\omega)=-\vec{\zeta}_1\rmh\vec{\zeta}_2^{-\!1}\psi(\omega).\eea
With the optimal $V(\omega)$ and $R(\omega)$, the asymptotic rate equals
{\setlength\arraycolsep{2pt} \bea \label{p1ibarm2optvr}\bar{I}\big(V_{\mathrm{opt}}(\omega),R_{\mathrm{opt}}(\omega),G(\omega)\big)=\left\{\begin{array}{ll} \bar{I}_2(G(\omega)),& \alpha=0\\
\bar{I}_2(G(\omega))+\bar{\delta}_2(G(\omega)),&0<\alpha\leq1. \end{array}\right.\eea}
\hspace*{-1.4mm}The functions $\bar{I}_1(G(\omega))$ and $\bar{\delta}_2(G(\omega))$ are defined as,
{\setlength\arraycolsep{2pt} \bea \label{p1ibarm2optvr1} \bar{I}_2(G(\omega))&=& 1\!+\!\frac{1}{2\pi}\! \int_{-\pi}^{\pi}\!\Big(\!\log \!\big(\!1\!+\! G(\omega)\big)\!+\!M(\omega)\big(1\!+\!G(\omega)\big)\!\Big)\mathrm{d}\omega, \\
\label{p1bardeta2}\bar{\delta}_2(G(\omega)) &=&-\vec{\zeta}_1\rmh\vec{\zeta}_2^{\!-\!1}\vec{\zeta}_1.\eea}
\end{proposition}
For $0\!<\!\alpha\!\leq\!1$, it still needs a gradient based optimization to find the optimal $G(\omega)$ for (\ref{p1ibarm2optvr1}), and the closed-form solution in Theorem \ref{p1thm3} is utilized as the starting point. The asymptotic rate $\bar{I}(V_{\mathrm{opt}}(\omega),R_{\mathrm{opt}}(\omega),G(\omega))$ is also concave with respect to $G(\omega)$, which is shown in Appendix J.

\begin{proposition} \label{p1prop8}
When $0\!<\!\alpha\!\leq\!1$, $a_k\!=\!b_k$ holds for $|k|\!>\!\nu\!+\!\nu_{\mathrm{R}}$, where
{\setlength\arraycolsep{2pt}\bea \label{p1ak} a_k&=&\frac{1}{2\pi}\! \int_{-\pi}^{\pi}V_{\mathrm{opt}}(\omega)H(\omega)\!\exp\big(\!\!-\!\!j k \omega\big)\mathrm{d}\omega,  \\
\label{p1bk} b_k&=&\frac{1}{2\pi}\! \int_{-\pi}^{\pi}R_{\mathrm{opt}}(\omega)\!\exp\big(\!\!-\!\!j k \omega\big)\mathrm{d}\omega.  \eea}
\end{proposition}
The connection of the optimal $V(\omega)$ and $R(\omega)$ stated in Proposition \ref{p1prop8} is an asymptotic version of Proposition \ref{p1prop4}, and the proof follows the similar approach as Proposition 6. We show an example in Fig.\,\ref{p1fig5} to illustrate Proposition \ref{p1prop8} with Method II and $\nu\!=\!\nu_{\mathrm{R}}\!=\! 1$. The Proakis-C \cite{Prok} channel is tested at an SNR of 10 dB and $\alpha$ equals $0.1$, $0.4$ and $0.8$, respectively. As $\nu_{\mathrm{R}}\!=\! 1$, $b_k$ as defined in (\ref{p1bk}) is constrained to zero for $0\!\leq\!k\!\leq\!1$. As can be seen, $a_k$ as defined in (\ref{p1ak}) equals $b_k$ only for $|k|\!>\!2$, and when $|k|\!=\!2$, $a_k$ and $b_k$ are not identical. This shows that with the optimal $V(\omega)$ and $R(\omega)$, the signal part along the second upper and lower diagonals that is not considered in $G(\omega)$ shall not be perfectly canceled out. This behavior cannot be seen in \cite{TSK00} which treats LMMSE-PIC for ISI channels, due to $\nu\!=\!\nu_{\mathrm{R}}\!=\!0$.

\begin{figure}[t]
\begin{center}
\vspace*{-6mm}
\hspace*{-2mm}
\scalebox{.45}{\includegraphics{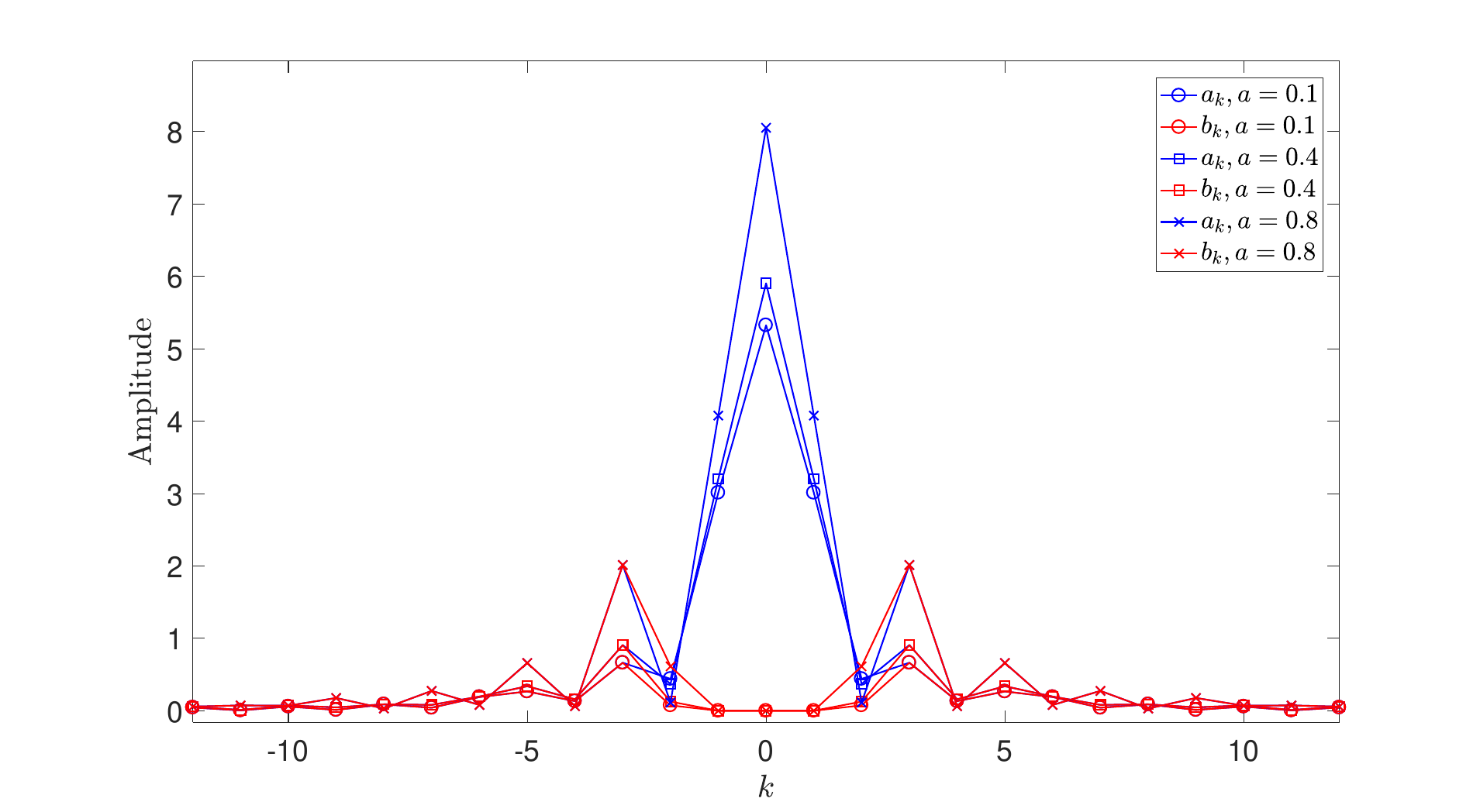}}
\vspace*{-7mm}
\caption{\label{p1fig5} Comparison between $a_k$ and $b_k$ for Method II under Proakis-C channel $\vec{h}\!=\![\,0.227\; 0.46\;0.688\; 0.46\;0.227\,]$.}
\vspace*{-10mm}
\end{center}
\end{figure}

\subsection{Method III}
In Method III, from (\ref{p1metricMIMO3}) the asymptotic rate reads
\be \bar{I}(G(\omega))=\lim_{K\to \infty} \frac{1} {K}I_{\mathrm{GMI}}(\vec{G})
= 1\!+\!\frac{1}{2\pi}\! \int_{-\pi}^{\pi}\!\!\big( \log\! \big(1\!+\! G(\omega)\big)\!+\!\hat{M}(\omega)(1\!+\!G(\omega) )\big)\mathrm{d}\omega\;\;  \ee
where according to (\ref{p1mpic}), 
{\setlength\arraycolsep{2pt}\bea  \hat{M}(\omega)&=&2\Re\!\big\{\alpha\hat{W}(\omega)H(\omega)\hat{C}^{\ast}(\omega)\!+\!\hat{W}(\omega)H(\omega)\notag \\ &&-\alpha\hat{C}^{\ast}(\omega)\big\}\!-\!\frac{|\hat{W}(\omega)|^2}{N_0\!+\!|H(\omega)|^2}\!-\!\alpha|\hat{C}(\omega)|^2\!-\!1. \notag \eea}
\hspace{-1.4mm}Replacing $M(\omega)$ by $\hat{M}(\omega)$, the optimal $G(\omega)$ and asymptotic rate $\bar{I}$ follow from Theorem \ref{p1thm3}.

\begin{remark}
Proposition \ref{p1prop8} also holds for Method III with $\nu_{\mathrm{R}}\!=\!\nu$, due to the fact that $[(\vec{I}\!+\!\vec{G})(\hat{\vec{W}}\vec{H}-\hat{\vec{C}})]_{\backslash2\nu}\!=\!\vec{0}$.
\end{remark}

\section{SNR Asymptotics}

In this section, we analyze asymptotic properties of the CS demodulators, and show that, as $N_0$ goes to $0$ and $\infty$, Method III and Method II are asymptotically equivalent. As Method I is inferior to Method II in GMI sense, we limit our investigations to Method II and Method III, and start the analysis for finite length linear vector channels first. The following limits can be verified straightforwardly:
{\setlength\arraycolsep{2pt} \bea \label{p1limitM} &&\lim_{N_0\to 0}\vec{M}/N_0=-(\vec{H}\rmh\vec{H})^{-1}, \notag\\
 &&\lim_{N_0\to \infty}N_0(\vec{I}\!+\!\vec{M})=\vec{H}\rmh\vec{H}.  \eea}
\hspace{-1.4mm}Moreover, it also holds that
{\setlength\arraycolsep{2pt} \bea && \lim_{N_0\to 0}\tilde{\vec{M}}=\vec{P}^2-\vec{P}, \notag \\
&& \label{p1limitMt} \lim_{N_0\to \infty}\tilde{\vec{M}}=-\vec{P}.  \eea}
\hspace{-1.4mm}As $\tilde{\vec{M}}$ should be invertible from the definition of $\delta_2(\vec{G})$ in (\ref{p1deta2}), we restrict that $\vec{P}\!\prec\!\vec{I}$.

\begin{lemma} \label{p1lem3}
When $N_0\!\to\!0$ and $\infty$, the optimal $\vec{G}$ for (\ref{p1Ivr}) in Method II  satisfies (\ref{p1optcond}), and the following limits hold,
 {\setlength\arraycolsep{2pt}  \bea \label{p1limit1} \lim_{N_0\to 0}\left[(N_0(\vec{I}\!+\!\vec{G}_{\mathrm{opt}}))^{-1}\right]_{\nu}&=&[(\vec{H}\rmh\vec{H})^{-1}]_{\nu},  \\
 \label{p1limit2} \lim_{N_0\to \infty}\left[N_0\vec{G}_{\mathrm{opt}}\right]_{\nu}&=&[\vec{H}\rmh\vec{H}]_{\nu}.  \eea}
\end{lemma}

\begin{proof}
When $\vec{P}\!=\!\vec{0}$, from Theorem \ref{p1thm2} the optimal $\vec{G}$ for (\ref{p1Ivr}) satisfies (\ref{p1optcond}). From (\ref{p1limitM}), when $N_0\!\to\!0$, $\vec{M}\!\to\!\vec{0}$ and $N_0\!\to\!\infty$, $\vec{M}\!\to\!-\vec{I}$. Therefore, by the definition of $\vec{\Omega}$,
 $$\lim_{N_0\to 0,\infty}\!\!\vec{d}=\vec{\Omega}\mathrm{vec}(\vec{MP})=\vec{0}.$$ This implies that the gradient $ d_{\vec{G}}(\delta_2)$ in (\ref{p1part2}) (Appendix F) converges to zero. Hence the differentials of $I_{\mathrm{GMI}}(\vec{V}_{\mathrm{opt}},\vec{R}_{\mathrm{opt}},\vec{G})$ in (\ref{p1Ivr}) with $\vec{P}\!\neq\!\vec{0}$ converges to the differentials with $\vec{P}\!=\!\vec{0}$. From (\ref{p1optcond}) and (\ref{p1limitM}), the limit (\ref{p1limit1}) follows.
 
Next, since
{\setlength\arraycolsep{2pt} \bea \label{p1limit4} \lim_{N_0\to \infty}\left[N_0\left(\vec{I}\!-\!(\vec{I}\!+\!\vec{G}_{\mathrm{opt}})^{-1}\right)\right]_{\nu}=\lim_{N_0\to \infty}[N_0(\vec{I}\!+\!\vec{M})]_{\nu}=[\vec{H}\rmh\vec{H}]_{\nu}, \eea}
\hspace{-1.4mm}it shows that $\vec{I}\!-\!(\vec{I}+\vec{G}_{\mathrm{opt}})^{-1}\!\to\!\vec{0}$\footnote{A matrix $\vec{A}\!\to\!\vec{B}$ or a vector $\vec{a}\!\to\!\vec{b}$ means the nonzero elements of $\vec{A}\!-\!\vec{B}$ or $\vec{a}\!-\!\vec{b}$ converges to zero.} as $N_0\to\infty$. By the matrix inversion lemma, $\vec{I}-(\vec{I}+\vec{G}_{\mathrm{opt}})^{-1}\!\to\!\vec{G}_{\mathrm{opt}}$ as $N_0\to\infty$, and combining this with (\ref{p1limit4}) proves the limit (\ref{p1limit2}).
\end{proof}

\begin{lemma} \label{p1lem4}
In Method II, with the optimal $\vec{G}$, when $N_0\!\to\!0$ the GMI increment $\delta_2(\vec{G})$ in (\ref{p1deta2}) converges to zero with speed $\mathcal{O}(1/N_0)$\footnote{Two scalars $A$ and $B$ as functions of a variable $n$ converging to each other with speed $\mathcal{O}(n)$ means that, there exists a constant $C$ such that $\lim\limits_{n\to\infty}n|A-B|\!<\!C$.} and when $N_0\!\to\!\infty$ the GMI increment $\delta_2(\vec{G})$ converges to zero with speed $\mathcal{O}(N_0^2)$.
\end{lemma}

\begin{proof}
As $N_0\!\to\!0$, from (\ref{p1limitM}) we have $$\lim_{N_0\to 0}\vec{d}/N_0=\lim_{N_0\to 0}\vec{\Omega}\mathrm{vec}(\vec{MP}/N_0)=-\vec{\Omega}\mathrm{vec}\big((\vec{H}\rmh\vec{H})^{-1}\vec{P}\big).$$
Based on (\ref{p1limitMt}) and Lemma \ref{p1lem3}, the below equalities hold,
$$\delta_2(\vec{G}_{\mathrm{opt}})=N_0\frac{\vec{d}\rmh}{N_0}\Big(\vec{\Omega}\Big(\tilde{\vec{M}}^{\ast}\!\otimes\!\frac{ (\vec{I}\!+\!\vec{G}_{\mathrm{opt}})^{-\!1}}{N_0}\Big)\vec{\Omega}\rmt\Big)^{-\!1}\!\frac{\vec{d}}{N_0}=\mathcal{O}(N_0).$$
On the other hand, as $N_0\!\to\!\infty$, and by the definition of $\vec{\Omega}$, from (\ref{p1limitM}) we also have
 $$\lim_{N_0\to \infty}\!\!N_0\vec{d}=\!\lim_{N_0\to \infty}\!\!\vec{\Omega}\mathrm{vec}(N_0\vec{MP})\!=\!\lim_{N_0\to \infty}\!\!\vec{\Omega}\mathrm{vec}\big(N_0(\vec{I}\!+\!\vec{M})\vec{P}\big)\!=\!\vec{\Omega}\mathrm{vec}(\vec{H}\rmh\vec{H}\vec{P}).$$
Again utilizing (\ref{p1limitMt}) and Lemma \ref{p1lem3}, the below equalities hold,
\bea \delta_2(\vec{G}_{\mathrm{opt}})=\frac{1}{N_0^2}(N_0\vec{d}\rmh)\big(\vec{\Omega}\big(\tilde{\vec{M}}^{\ast}\!\otimes\! (\vec{I}\!+\!\vec{G}_{\mathrm{opt}})^{-\!1}\big)\vec{\Omega}\rmt\big)^{-\!1}\!(N_0\vec{d})=\mathcal{O}( 1/N_0^2). \notag\eea
Therefore, Lemma \ref{p1lem4} holds.
 \end{proof}

\begin{lemma}\label{p1lem5}
When $N_0\!\to\!0$ and $\infty$, the optimal GMI in Method III is independent of $\vec{P}$ and converges to the optimal GMI with $\vec{P}\!=\!\vec{0}$. Moreover, (\ref{p1limit1}) and (\ref{p1limit2}) hold.
\end{lemma}
The proof is given in Appendix K. Combining Lemmas \ref{p1lem3}-\ref{p1lem5}, and using the fact that Method III and Method II are equivalent with $\vec{P}\!=\!\vec{0}$, we have the following Theorem \ref{thm4}.

\begin{theorem} \label{thm4}
Assume that $\vec{P}\!\prec\!\vec{I}$, when $N_0\!\to\!0$ and $\infty$, the optimal GMI in Method III converges to the optimal GMI in Method III with $\vec{P}\!=\!\vec{0}$. Moreover, the optimal GMI in Method II also converges to the optimal GMI in Method III with $\vec{P}\!=\!\vec{0}$, with speed $\mathcal{O}(1/N_0)$ when SNR increase and $\mathcal{O}(N_0^2)$ when SNR decreases. The optimal $\vec{G}$ for both methods has the limits (\ref{p1limit1}) and (\ref{p1limit2}).
\end{theorem}

From Theorem \ref{thm4} we know that, except for the case where one of the elements in the diagonal matrix $\vec{P}$ is 1, the soft feedback information becomes asymptotically insignificant for the design of the CS parameters. The reason is that, when $N_0\!\to\! 0$, $\hat{\vec{x}}$ is overwhelmed by the noise, while when $N_0\!\to\! \infty$, the optimal front-end filter will null out $\hat{\vec{x}}$ since the filter can perfectly reconstruct the transmitted symbols without using the side information.

\begin{remark}
When $N_0\!\to\! 0$ and $\infty$, the optimal CS demodulator is the EZF demodulator defined in Example \ref{p1exam1}, and the TMF defined in Example \ref{p1exam2}, respectively.
\end{remark}

With ISI channels, as the same constraint $\vec{P}=\alpha\vec{I}\!\prec\!\vec{I}$ shall hold, we make the restriction that $\alpha\!<\!1$. The asymptotic properties for ISI channels are presented in Corollary \ref{p1cor1}, which is an asymptotic version of Theorem \ref{thm4} when the channel matrix $\vec{H}$ and CS parameters are band-shaped Toeplitz matrices with infinite dimensions. The detailed proof is following the same analysis as for the finite linear vector channels and omitted.

\begin{corollary} \label{p1cor1}
Assume that $0\!\leq\!\alpha\!<\!1$, when $N_0\!\to\!0$ and $\infty$, the optimal GMI in Method III converges to the optimal GMI in Method III with $\alpha\!=\!0$. Moreover, the optimal GMI in Method II also converges to the optimal GMI in Method III with $\alpha\!=\!0$, with speed $\mathcal{O}(1/N_0)$ when SNR increase and $\mathcal{O}(N_0^2)$ when SNR decreases. The optimal $\vec{G}$ for both methods has the following asymptotic properties hold for $|k|\!\leq\!\nu$:
{\setlength\arraycolsep{2pt} \bea  \lim_{N_0\to 0}\int_{-\pi}^{\pi}\frac{1}{N_{0}(1+G_{\mathrm{opt}}(\omega))}\exp(-jk\omega)\mathrm{d}\omega&=&\int_{-\pi}^{\pi}\frac{1}{|H(\omega)|^2}\exp(-jk\omega)\mathrm{d}\omega, \notag \\
  \lim_{N_0\to \infty}\int_{-\pi}^{\pi}N_{0}G_{\mathrm{opt}}(\omega)\exp(-jk\omega)\mathrm{d}\omega&=&\int_{-\pi}^{\pi}|H(\omega)|^2\exp(-jk\omega)\mathrm{d}\omega. \notag\eea}
\end{corollary}

\section{Empirical Results}

In this section, we provide empirical results to show the behaviors of CS demodulators in an iterative detection and decoding receiver designs. With the considered MIMO channels, all channel elements are assumed to be independent identically distributed (IID) complex Gaussian with zero-means, and the received signal power at each receive antenna is normalized to unity. For ISI case, we test with the typical Proakis-C channel as in Fig.\,\ref{p1fig5}.

\subsection{GMI Evaluation }
We first evaluate the GMI under $5\!\times\!5$ MIMO channels with memory size $\nu\!=\!1$ for all CS demodulators. We simulate $10000$ channel realizations for each SNR point. The GMIs are compared with that of the static CS demodulator \cite{RP12}, which is equivalent to Method II with $\vec{P}\!=\!\vec{0}$. The channel capacity is also presented for comparison. The results of GMI are shown in Fig.\,\ref{p1fig6}. As the quality of soft information improves beyond $\vec{P}\!=\!\vec{0}$, Method II with $\nR\!=\!0$ performs the best among all CS demodulators, as it has the most degrees of freedom (DoF) in $\vec{R}$. Method II with $\nR\!=\!\nu$ is the worst among Method I and Method II, while Method I is slightly worse than Method II with $\vec{R}$ of shape (a) in Fig.\,\ref{p1fig3}, which is because although the IC matrix $\vec{R}$ is shape (a) in both cases, $\vec{R}$ in Method II is more general than in Method I which is constrained to $\vec{R}\!=\!\vec{F}\rmh\vec{T}$. The GMI of Method III is inferior to Method II as expected. 

The results show consistent GMI increments for all CS demodulators when the feedback quality improves. When $\vec{P}$ increases from $\vec{P}\!=\!\vec{0}$ to the ideal case $\vec{P}\!=\!\vec{I}$, the channel capacity becomes inferior to the GMI as the pair $(\vec{x}, \hat{\vec{y}})$ is superior to $(\vec{x},\vec{y})$ for information transfer.

\begin{figure}
\begin{center}
\vspace*{-6mm}
\hspace*{-18mm}
\scalebox{.55}{\includegraphics{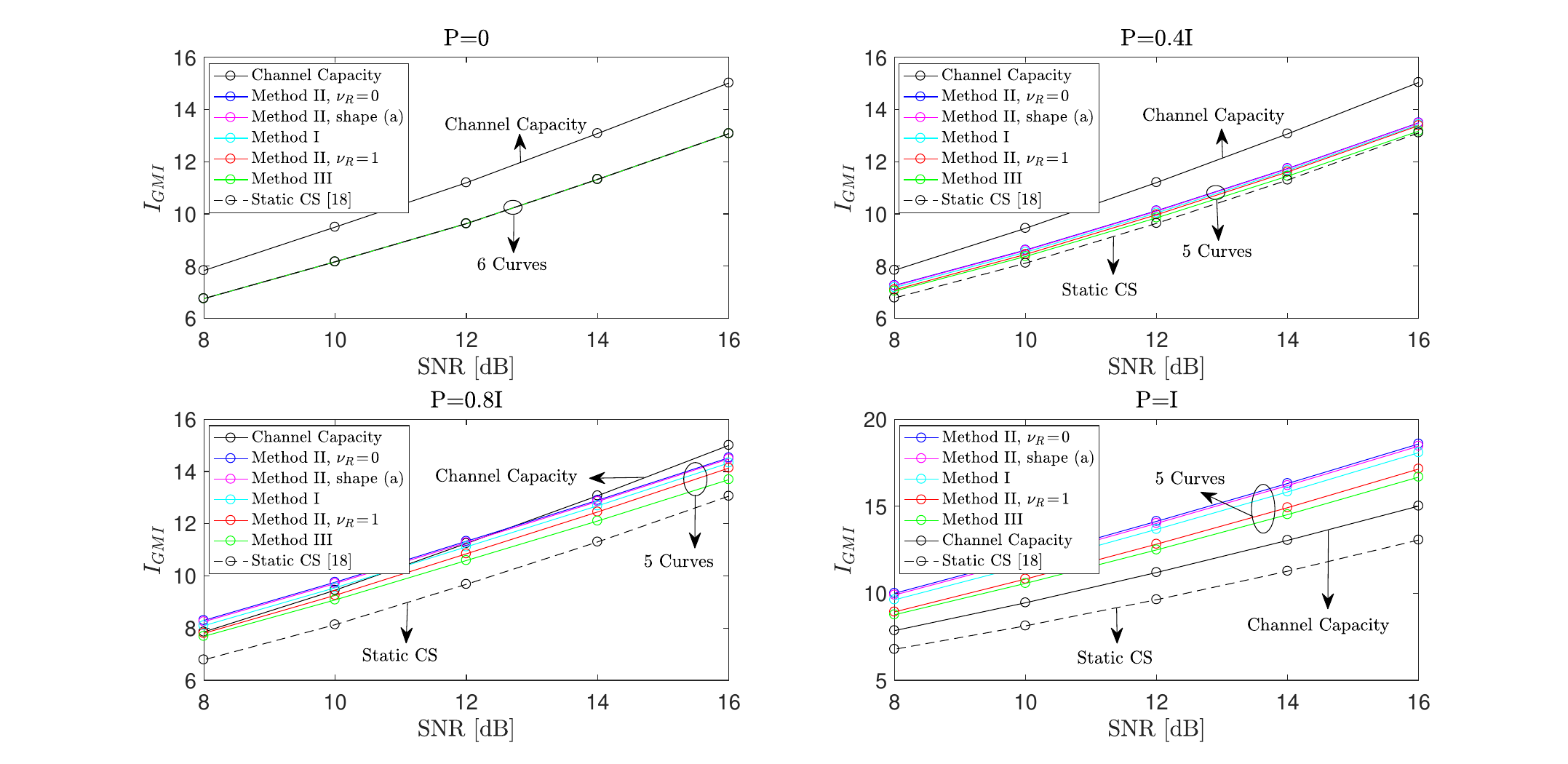}}  
\vspace*{-16mm}
\caption{\label{p1fig6} GMI of CS demodulators under $5\!\times\!5$ MIMO IID complex Gaussian channels with $\nu\!=\!1$.}
\vspace*{-10mm}
\end{center}
\end{figure}

\subsection{SNR Asymptotic of the GMI}
Next, we evaluate the asymptotic properties of the GMI described in Theorem \ref{thm4} under $5\!\times\!5$ MIMO channels. As shown in Fig.\,\ref{p1fig7}, the GMIs of Method II and Method III both converge to Method III with $\vec{P}\!=\!\vec{0}$. Moreover, the GMI of the CS demodulators converge to the EZF in Example 1 at high SNR, and the TMF in Example 2 at low SNR, respectively, which are well aligned with Theorem \ref{thm4}.

\subsection{EXIT Charts of CS Demodulators}

In order to predict the dynamics of iterative receivers, we use the tool of extrinsic information transfer (EXIT) charts invented by ten Brink \cite{S99, S01} for analysis of iterative receiver behavior. For EXIT analysis, the CS demodulator and the decoder measure the output extrinsic information $I_E$ based on a sequence of observations $\vec{y}$ and \textit{a priori} information $I_A$ into a new sequences.

In Fig.\,\ref{p1fig8}, we evaluate the EXIT charts for CS demodulators under $4\times6$ MIMO channels with $\nu\!=\!2$ for $\vec{F}$ and $\vec{G}$ at an SNR of 10dB. With Method II, we test different values of $\nR$. As can be seen, when $\nR\!>\!\nu$, the demodulation performance is inferior to $\nR\!\leq\!\nu$. This is because, the interference outside memory size $\nu$ and inside memory size $\nR$ is neither considered in the IC process nor in the BCJR module. Moreover, with $\nR\!\leq\!\nu$, the CS demodulators with Method II performs quite close to each other as well as Method I and III. For Method II with $\nR\!<\!\nu$, the interference inside memory size $\nu$ and outside $\nR$ are considered both in the IC and BCJR processes. However, an interesting observation is that, with large \textit{a priori} input $I_A$, Method II with $\nR\!=\!0$ is inferior to $\nR\!=\!1$ and $\nR\!=\!2$. Therefore, a conservative approach with Method II is to set $\nR\!=\!\nu$ such that the interference is either removed in IC process or dealt with in the BCJR module, to get rid of potential error propagation caused by redundant processings of the same part of interference.

In Fig.\,\ref{p1fig9}, we show the iterative detection and decoding trajectories for CS demodulators under Proakis-C channel with $\nu\!=\!2$ and at an SNR of 10dB. We use an [7, 5] convolutional code \cite{TS11} with a coded block-length $K\!=\!2004$, and a random permutation is applied to the coded bits. As can be seen, the CS demodulators with Method II and Method III are superior to the LMMSE-PIC demodulator, and the iterative detection and decoding trajectories are well aligned with the measured EXIT charts.

\begin{figure}
\begin{center}
\vspace*{-6mm}
\hspace*{-2mm}
\scalebox{.45}{\includegraphics{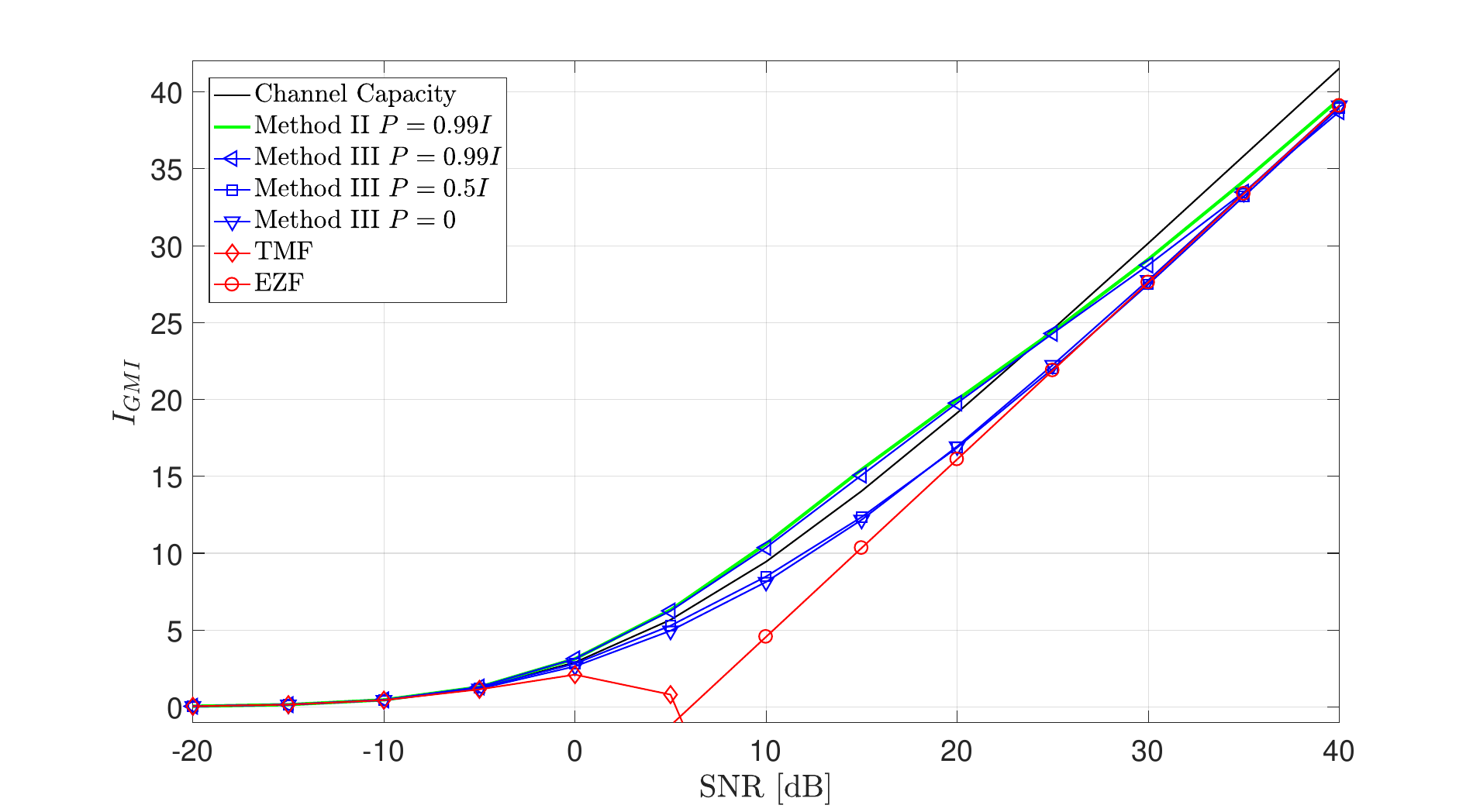}}  
\vspace*{-7mm}
\caption{\label{p1fig7} SNR asymptotic under $5\!\times\!5$ MIMO IID complex Gaussian channels with $\nu\!=\!1$.}
\vspace*{-10mm}
\end{center}
\end{figure}

\subsection{Link Performance}

We next turn to link-level simulations with turbo codes \cite{Tdec} where the outer decoder uses 8 internal iterations. A single code-block over all transmit symbols is used. At each SNR point $20000$ data blocks are simulated and the block-error-ratio (BLER) is measured. In all simulations, at most three global iterations are used between the demodulators, and the decoder the tap length of the front-end and IC filters are all set to $8L$, and $\nR\!=\!\nu$ for Method II.

\begin{figure} [t]
\begin{center}
\vspace*{-6mm}
\hspace*{-2mm}
\scalebox{.45}{\includegraphics{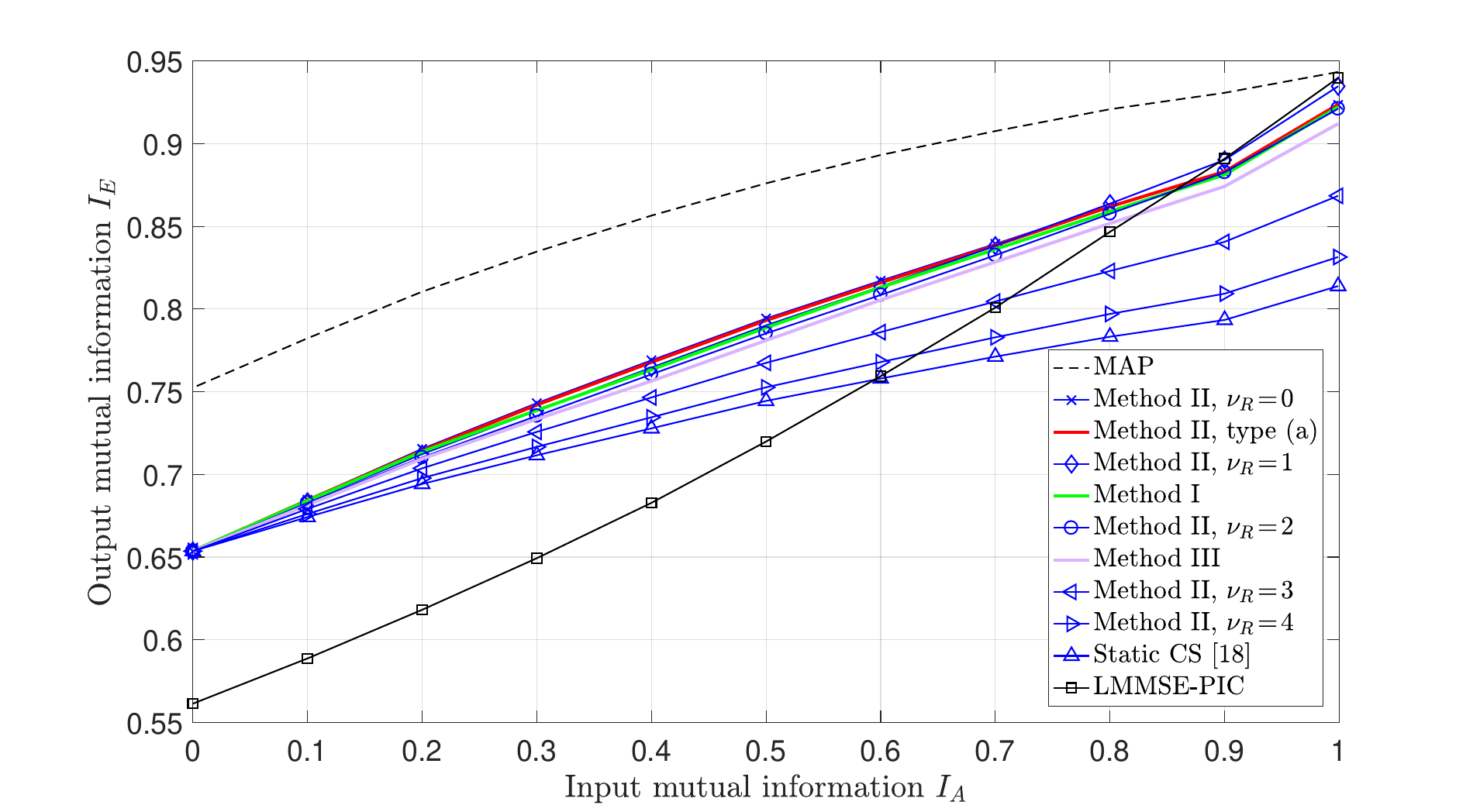}} 
\vspace*{-7mm}
\caption{\label{p1fig8} EXIT charts under $4\!\times\!6$ IID complex Gaussian MIMO channels with $\nu\!=\!2$ and different values of $\nR$.}
\vspace*{-10mm}
\end{center}
\end{figure}

\begin{figure}
\begin{center}
\vspace*{-0mm}
\hspace*{-2mm}
\scalebox{.45}{\includegraphics{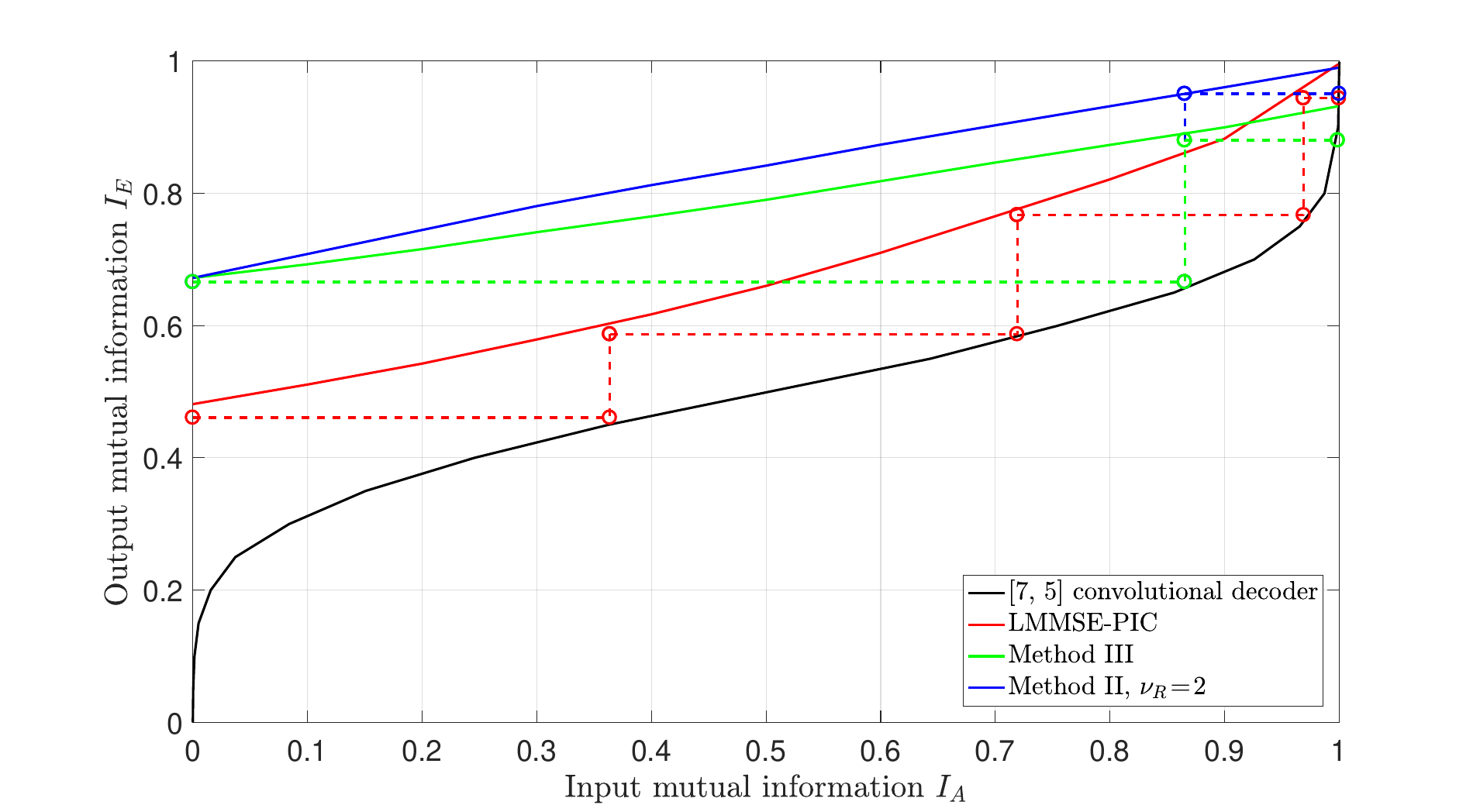}} 
\vspace*{-7mm}
\caption{\label{p1fig9} Iterative detection and decoding trajectories under Proakis-C channel at an SNR of 10 dB. The outer code is an [7, 5] convolutional code with generator polynomials $g_0(D)\!=\!1\!+\!D^2$ and $g_1(D)\!=\!1\!+\!D\!+\!D^2$. A random permutation of the code block is used and the black curve is the decoding EXIT chart. The dashed lines are the iterative detection and decoding trajectories for LMMSE-PIC, Method III and II, respectively.}
\vspace*{-10mm}
\end{center}
\end{figure}

In Fig.\,\ref{p1fig10}, we evaluate the BLER under Proakis-C channel with QPSK symbols and $\nu\!=\!2$ for all CS demodulators. A (1064, 1600) turbo code is used. Note that, at the first iteration when there is no soft information, Method II and III overlap with each other. With CS demodulators, the gap to the MAP demodulator is less than 1 dB, while the LMMSE-PIC has a gap to the MAP that is up to 10 dB. Moreover, Method II performs slightly better than Method I, and Method III is slightly inferior to both methods. However, Method III has the advantage of less computational complexity than the others since all parameters are in closed-forms.

In Fig.\,\ref{p1fig11}, we evaluate the BLER under $4\!\times\!6$ MIMO channels with QPSK symbols and $\nu\!=\!3$ for all CS demodulators. A (1064, 1800) turbo code is used. As $N\!<\!K$, the LMMSE-PIC fails \cite{FO} at the first iteration due to the lack of receive diversity. However, the CS demodulators with $\nu\!=\!3$ significantly improve the performance and with less than 1 dB gap at $10\%$ BLER to the MAP. CS demodulators with $\nu\!=\!1$ after three iterations is less than 2 dB away from the MAP. With less computational cost, Method III still performs close to Method II.

Finally we remark that, for the sake of complexity savings, both for finite linear vector channels and ISI channels, the parameters of CS demodulators do not need to be updated through all iterations. Once the feedback information quality is good enough and the parameter $\vec{P}$ or $\alpha$  are close to ideal, the CS parameters can be kept unchanged in successive iterations.

\begin{figure}[t]
\begin{center}
\vspace*{-6mm}
\hspace*{-2mm}
\scalebox{.45}{\includegraphics{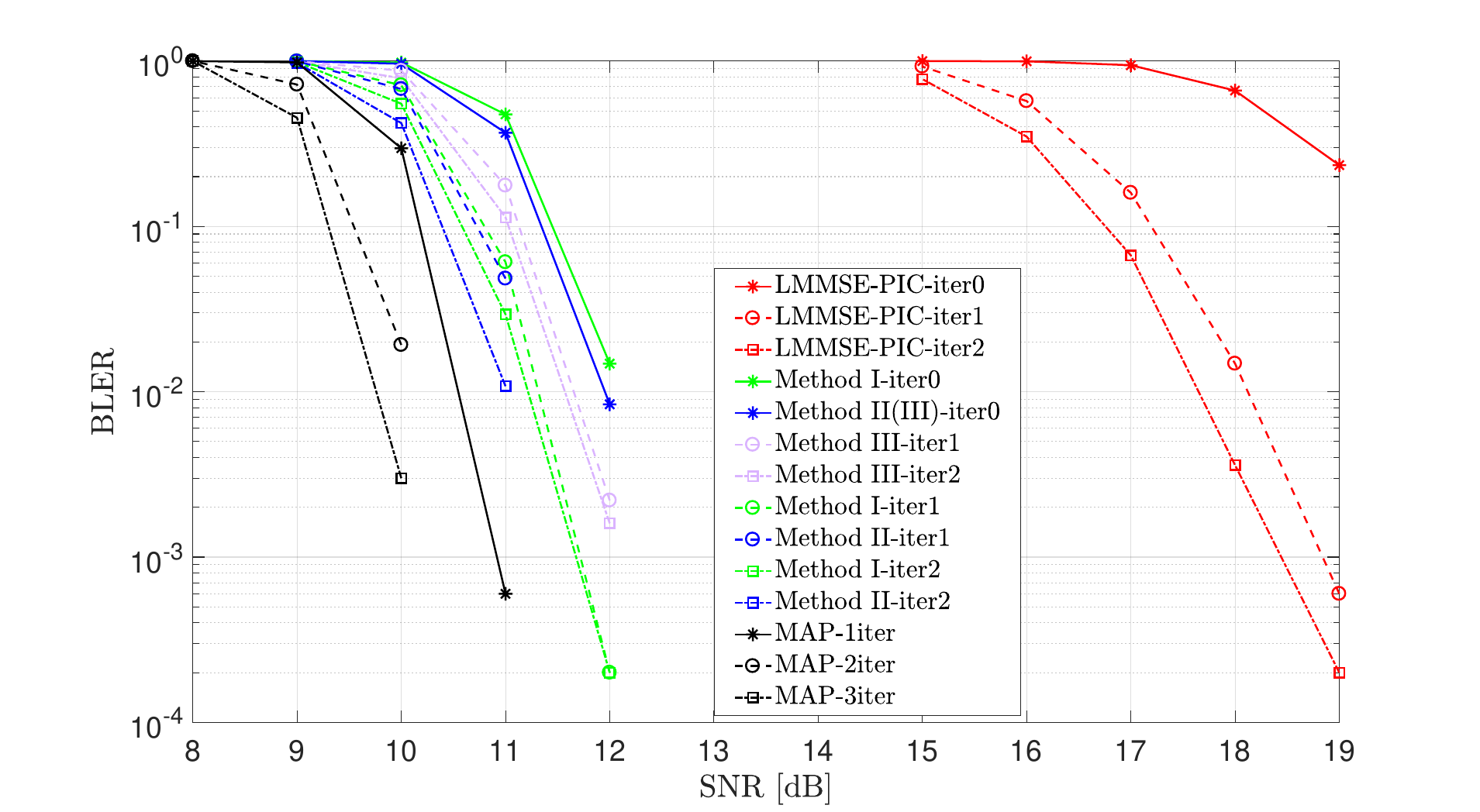}}   
\vspace*{-7mm}
\caption{\label{p1fig10} BLER performance of the LMMSE-PIC, Method I-III, and MAP under Proakis-C channel with QPSK modulation.}
\vspace*{-10mm}
\end{center}
\end{figure}

\begin{figure}
\begin{center}
\vspace*{-0mm}
\hspace*{-2mm}
\scalebox{.45}{\includegraphics{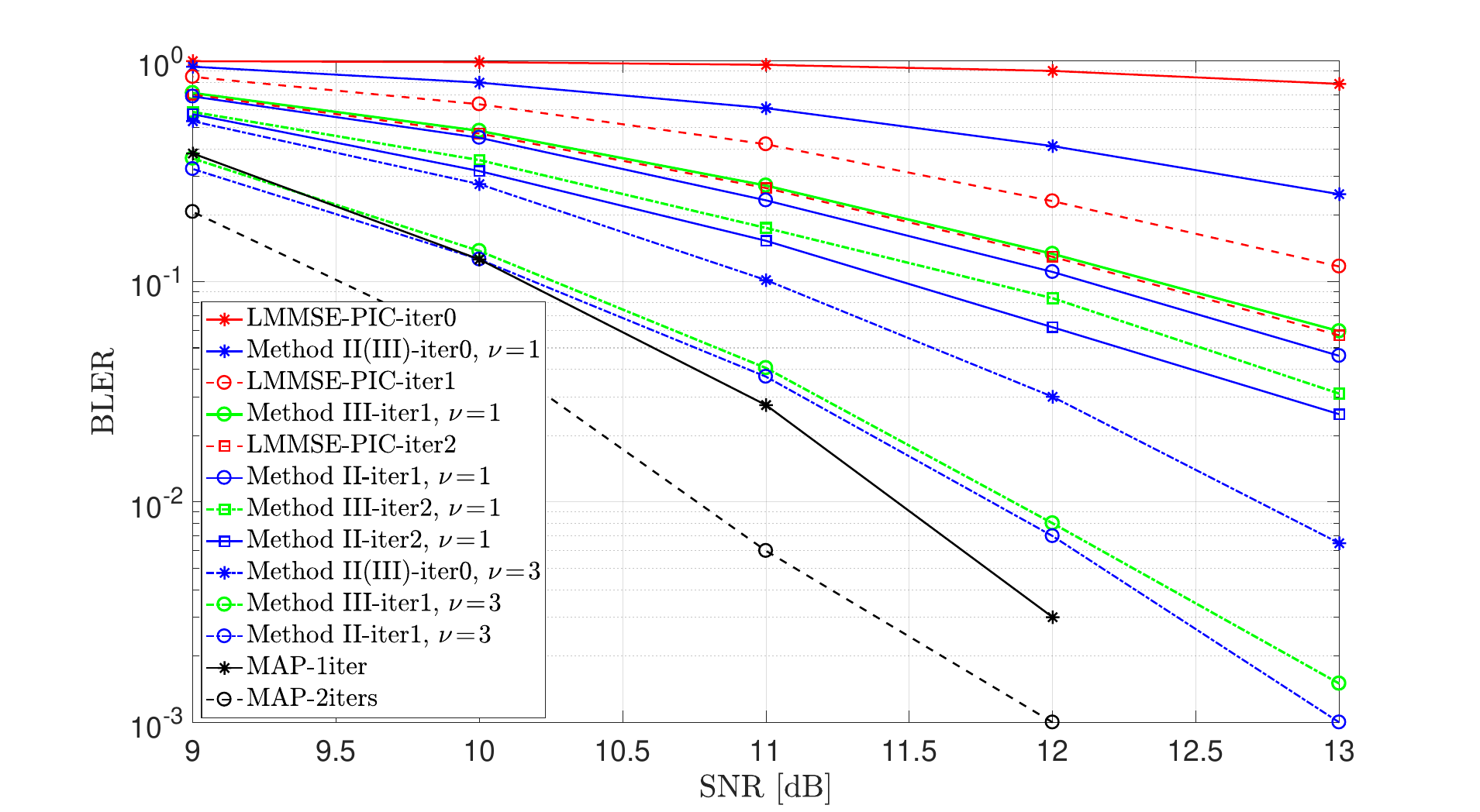}} 
\vspace*{-7mm}
\caption{\label{p1fig11} BLER performance of the LMMSE-PIC, Method II, Method III, and MAP under $4\!\times\!6$ MIMO channels with QPSK modulation.}
\vspace*{-10mm}
\end{center}
\end{figure}

\section{Summary}

In this paper we considered the design of channel shortening (CS) demodulators for linear vector channels that use a trellis representation of the received signal in combination with interference cancellation (IC) of the signal part that is not appropriately modeled by the trellis. In order to reach a trellis representation, a linear filter is applied as front-end. It is an extension of the well studied CS demodulators to iterative receivers and a generalization of the LMMSE-PIC demodulator to cooperate with trellis-search in turbo equalization.

We analyzed the properties of three different approaches for designing such optimal CS demodulators as all of them may come across as natural \lq\lq{}CS\rq\rq{} demodulators. In the used framework, there are three parameters that need to be optimized. Based on a generalized mutual information (GMI) cost function, two of these are solved for in closed-form, while the third needs to be numerically optimized except for Method III where we constructed it explicitly at the cost of a small performance loss. A simple gradient based optimization is used and turns out to perform well.

Numerical results are provided to illustrate the behavior of the proposed CS demodulators. In general, Method II based on the Ungerboeck model is superior to Method I that is based on the Forney model. Method II has the advantage over Method I that the optimization procedure is concave. The suboptimal Method III performs close to Method I and Method II, and it has all parameters in closed-forms. An interesting result is that the interference cancellation of the CS demodulators should not cancel the effective channel perfectly outside the memory size $\nu$, a property that cannot be seem in LMMSE-PIC demodulator as $\nu\!=\!0$. Moreover, we have also analyzed asymptotic properties of the CS demodulators and showed that, Method III converges to Method II asymptotically when the noise density goes to zero or infinity.

\section*{Appendix A: Derivation of the GMI }

By making the eigenvalue decomposition $\vec{Q}\vec{\Lambda}\vec{Q}\rmh \!=\! \vec{G}$ and letting $\vec{s}\!=\!\vec{Q}\rmh \vec{x}$. As $\vec{x}$ is assumed to be zero mean complex Gaussian random vector with covariance matrix $\vec{I}$, we can write $\tilde{p}(\vec{y}|\vec{x})$ in (\ref{p1md2}) as
\bea \tilde{p}(\vec{y}|\vec{x})=\exp\!\big(2\!\Re\!\big\{\vec{s}\rmh\vec{d}\big\}\!-\!\vec{s}\rmh\vec{\Lambda s}\big),\eea
where
$\vec{d}\!=\!\vec{Q}\rmh(\vec{Vy}\!-\!\vec{R}\hat{\vec{x}}).$
We can now evaluate
{\setlength\arraycolsep{2pt}  \bea \tilde{p}(\vec{y})&=&\int \!\tilde{p}(\vec{y}|\vec{x}) p(\vec{x})\mathrm{d}\vec{x} \notag \\
&=&\frac{1}{\pi^K}\int\!\exp\!\big(2\!\Re\!\big\{\vec{s}\rmh\vec{d}\big\}\!-\!\vec{s}\rmh\vec{\Lambda s}\big)\exp\!\big(\!\!-\!\!\vec{s}\rmh\vec{s})\mathrm{d}\vec{s} \notag \\
&=&\prod_{k=0}^{K-1} \frac{1}{1\!+\!\lambda_k}\exp\!\bigg(\frac{|d_k|^2}{1\!+\!\lambda_k}\bigg). \notag \eea}
\hspace{-1.4mm}where $\lambda_k$ is the $k$th diagonal element of $\vec{\Lambda}$ and $d_k$ is the $k$th entry of $\vec{d}$. Taking the expectation over $\vec{y}$ gives
\bea -\mathbb{E}_{p(\vec{y})}[\log(\tilde{p}(\vec{y}))]= \log\!\big(\!\det(\vec{I}\!+\!\vec{G})\big)\!-\!\mathrm{Tr}\big(\vec{L}(\vec{I}\!+\!\vec{G})^{-\!1}\big) \notag\eea
where the matrix $\vec{L}\!=\!\mathbb{E}_{p(\vec{y})}[\vec{Q}\vec{d}\vec{d}\rmh\vec{Q}\rmh]$ is given by
{\setlength\arraycolsep{2pt} \bea \vec{L}=\vec{V}(N_0\vec{I}\!+\!\vec{HH}\rmh)\vec{V}\rmh\!-\!\vec{V}\vec{H}\vec{P}\vec{R}\rmh\!-\!\vec{R}\vec{P}\vec{H}\rmh\vec{V}\rmh\!+\!\vec{R}\vec{P}\vec{R}\rmh.  \eea }
\hspace*{-1.4mm}On the other hand, we have
\bea -\mathbb{E}_{p(\vec{y},\vec{x})}\left[\log(\tilde{p}(\vec{y}|\vec{x}))\right]=\mathrm{Tr}(\vec{G})\!-\!2\!\Re\!\big\{\mathrm{Tr}(\vec{VH}\!-\!\vec{RP})\big\}.\notag\eea
Combining the two expectations, the GMI reads,
{\setlength\arraycolsep{0pt}\bea I_{\mathrm{GMI}}(\vec{V},\vec{R},\vec{G})&=&\log\!\big(\!\det(\vec{I}\!+\!\vec{G})\big)\!-\!\mathrm{Tr}\big(\vec{L}(\vec{I}\!+\!\vec{G})^{-\!1}\big)\!-\!\mathrm{Tr}(\vec{G})\!+\!2\!\Re\!\big\{\mathrm{Tr}(\vec{VH}\!-\! \vec{RP})\big\} \notag \\
&=&\log\big(\det(\vec{I}\!+\!\vec{G})\big)\!-\!\mathrm{Tr}(\vec{G})\!+\!2\Re\big\{\mathrm{Tr}(\vec{V}\vec{H}\!-\!\vec{R}\vec{P})\big\} \nonumber \\
&&-\mathrm{Tr}\big((\vec{I}\!+\!\vec{G})^{-\!1}\big(\vec{V}[\vec{H}\vec{H}\rmh\!+\!N_0\vec{I}]\vec{V}\rmh\!-\!2\Re\big\{\vec{V}\vec{H}\vec{P}\vec{R}\rmh\big\} \!+\!\vec{R}\vec{P}\vec{R}\rmh\big)\big)
.\quad  \notag\eea}

\section*{Appendix B: The Proof of Proposition \ref{p1prop1}}
As the formula of GMI in (\ref{p1metricMIMO1}) is quadratic in $\vec{W}$ and no constraints apply to $\vec{W}$, taking the gradient of $I_{\mathrm{GMI}}(\vec{W}, \vec{T},\vec{F} )$ with respect to $\vec{W}$ and setting it to zero, the optimal $\vec{W}$ is given in (\ref{p1mimo1optW}). Inserting $\vec{W}_{\mathrm{opt}}$ into (\ref{p1metricMIMO1}) gives, after some manipulations,
{\setlength\arraycolsep{2pt} \bea \label{p1gmiW} I_{\mathrm{GMI}} (\vec{W}_{\mathrm{opt}}, \vec{T}, \vec{F})&=&K\!+\!\log\!\big (\!\det(\vec{I}\!+\!\vec{F}\rmh\vec{F})\big)\!+\!\mathrm{Tr}\big(\vec{T}\rmh\vec{F}(\vec{I}\!+\!\vec{F}\rmh\vec{F})^{-\!1}\vec{F}\rmh\vec{T}\tilde{\vec{M}}\big)\nonumber \\
&&+\mathrm{Tr}\big(\vec{M}(\vec{I}\!+\!\vec{F}\rmh\vec{F})\big)\!+\!2\!\Re\!\big\{\mathrm{Tr}\big(\vec{PMF}\rmh\vec{T}\big)\!\big\}. \eea}
\hspace*{-1.4mm}where $\vec{M}$ and $\tilde{\vec{M}}$ are defined in (\ref{p1M}) and (\ref{p1Mt}). If $\vec{P}\!=\!\vec{0}$, (\ref{p1gmiW}) equals
 \bea I _1 (\vec{F})=K\!+\!\log\!\big (\!\det(\vec{I}\!+\!\vec{F}\rmh\vec{F})\big)\!+\!\mathrm{Tr}\big(\vec{M}(\vec{I}\!+\!\vec{F}\rmh\vec{F})\big). \notag \eea
In this case, there is no soft information available and the matrix $\vec{T}$ is not included in the formula. When $\vec{P}\!\neq\!\vec{0}$, the terms of $I_{\mathrm{GMI}}$ in (\ref{p1gmiW}) related to $\vec{T}$ are
\be f(\vec{T})\!=\!\mathrm{Tr}\big(\vec{T}\rmh\!\vec{F}(\vec{I}\!+\!\vec{F}\rmh\vec{F})^{-\!1}\vec{F}\rmh\vec{T}\tilde{\vec{M}}\big)\!+\!2\!\Re\!\big\{\mathrm{Tr}\big(\vec{PMF}\rmh\vec{T}\big)\!\big\}.\notag\ee
Let $\vec{t}_k$ denote the $k$th column of $\vec{T}$, but all elements in rows $[k,\min(k\!+\!\nu, K\!-\!1)]$ removed, and define the column vector $\vec{t}\!=\![\!\begin{array}{cccc}\vec{t}_0\rmt\; \vec{t}_1\rmt\; \ldots \; \vec{t}_{K\!-\!1}\rmt\end{array}\!\!]\rmt$, then by the definition of the indication matrix $\vec{\Omega}$, we have
 \bea  \vec{t}=\vec{\Omega}\mathrm{vec}(\vec{T}).\notag\eea

Similarly, let $\vec{z}_k$ denote the $k$th column of the matrix $\vec{FMP}$ but with all elements in rows $[k,\min(k\!+\!\nu, K\!-\!1)]$ removed, and define a row vector $\vec{z}\!=\![\!\begin{array}{cccc}\vec{z}_0\rmt \; \vec{z}_1\rmt \;\ldots \;\vec{z}_{K\!-\!1}\rmt\end{array}\!\!]\rmt$, then we have
\bea  \vec{z}\!=\!\Omega\mathrm{vec}(\vec{F\!M\!P})\!=\!\vec{\Omega}\big((\vec{P\!M}^{\ast})\!\otimes\!\vec{I}_K\big)\mathrm{vec}(\vec{F}).\notag\eea
Finally, defining a Hermitian matrix $\hat{\vec{B}}_1\!=\!\vec{\Omega}\big(\tilde{\vec{M}}^{\ast}\!\otimes\!\big(\vec{F}(\vec{I}\!+\!\vec{F}\rmh\vec{F})^{-\!1}\!\vec{F}\rmh\big)\big)\vec{\Omega}\rmt$, and with that we can rewrite $f(\vec{T})$ as $f(\vec{T}) \!=\!  \vec{t}\rmh\hat{\vec{B}}_1 \vec{t}\!+\!2\!\Re\{\vec{z}\rmh\vec{t}\}$. Taking the gradient with respect to $\vec{t}$ and setting it to zero yields,
\bea \label{p1mimo1optt_app} \vec{t}_{\mathrm{opt}} =-\hat{\vec{B}}_1^{-\!1}\vec{z}.\eea
Transferring $\vec{t}_{\mathrm{opt}}$ back into $\vec{T}_{\mathrm{opt}}$ given the optimal $\vec{T}$ in (\ref{p1mimo1optt_app}) and inserting this into $f(\vec{T})$ gives
\bea f(\vec{T}_{\mathrm{opt}})=-\vec{z}\rmh\hat{\vec{B}}_1^{-\!1}\vec{z}. \notag\eea
Thus, with the optimal $\vec{W}$ and $\vec{T}$, when $\vec{P}\!\neq\!\vec{0}$ the GMI equals
{\setlength\arraycolsep{2pt} \bea  &&I_{\mathrm{GMI}}(\vec{W}_{\mathrm{opt}},\vec{T}_{\mathrm{opt}},\vec{F})= K\!+\! \log\!\big(\!\det(\vec{I}\!+\!\vec{F}\rmh\vec{F})\big)\!+\!\mathrm{Tr}\big(\vec{M}(\vec{I}\!+\!\vec{F}\rmh\vec{F})\big) \nonumber \\
&&\qquad\qquad\qquad   -\mathrm{vec}(\vec{F})\rmh\vec{D}\rmh\Big(\vec{\Omega}\big(\tilde{\vec{M}}^{\ast}\!\otimes\!\big(\vec{F}(\vec{I}\!+\!\vec{F}\rmh\vec{F})^{-\!1}\!\vec{F}\rmh\big)\big)\vec{\Omega}\rmt\Big)^{\!-\!1}\!\vec{D}\mathrm{vec}(\vec{F}). \quad \nonumber\eea}
\hspace*{-1.4mm}where $\vec{D}\!=\!\vec{\Omega}\big((\vec{P\!M}^{\ast})\!\otimes\!\vec{I}_K\big)$.

\section*{Appendix C: Non-Concavity Examples of Method I}

\begin{figure}[t]
\begin{center}
\vspace*{-6mm}
\hspace*{-2mm}
\scalebox{.45}{\includegraphics{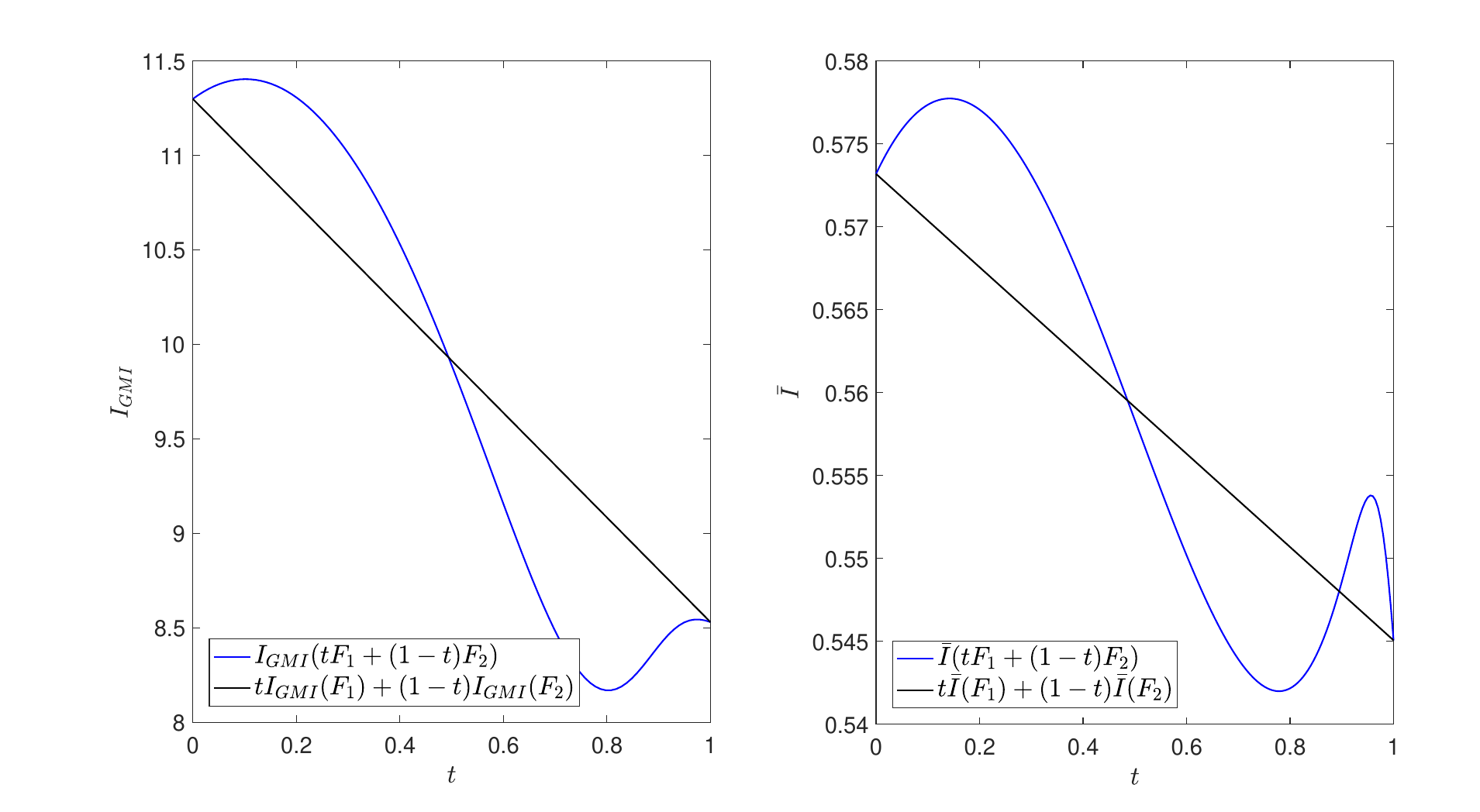}}  
\vspace*{-7mm}
\caption{\label{p1fig12} Non-concaveness of Method I under $5\!\times\!5$ MIMO channel (left figure) and Proakis-C ISI channel (right figure).}
\vspace*{-10mm}
\end{center}
\end{figure}

We give examples to demonstrate the non-concavity of Method I for MIMO and ISI channels with assuming that $\vec{P}\!=\!\vec{I}$ and $\alpha\!=\!1$, respectively. The memory size $\nu\!=\!1$ and the noise density $N_0$ equals 1 in both cases. A $5\!\times\!5$ MIMO channel and the Proakis-C channel are used.
\begin{example}
 MIMO case:
 {\setlength\arraycolsep{1.5pt}
 \bea  \vec{H}\!=\!\left[\!\begin{array}{ccccc}
2&   0&     -3&    5&     4\\
-5&     2&    -1&    0&     2\\
2&    -4&    3&    3&     3\\
-1&    -5&     -4&     1&    2\\
0&     -2&    0&    5&     5 \end{array}\! \right]\!\!,
\vec{F}_1\!=\!\left[\!\begin{array}{ccccc}
    4.94&   4.45&         0&         0&         0\\
         0&   0.21&    3.85&         0&         0\\
         0&         0&   5.56&    1.76&         0\\
         0&         0&         0&    0.61&    7.10\\
         0&         0&         0&         0&    2.79\end{array} \!\right]\!\!, 
          \vec{F}_2\!=\!\left[\!\begin{array}{ccccc}
    2.03&  6.17&         0&         0&         0\\
         0&    5.22&    3.56&         0&         0\\
         0&         0&    7.43&   0.73&         0\\
         0&         0&         0&    4.98&    4.32\\
         0&         0&         0&         0&    10.11\end{array} \!\right]\!\!.   \notag\eea}    
\end{example}

 \begin{example}
  ISI case:
{\setlength\arraycolsep{3pt}
 \bea  \vec{h}\!=\!\left[\begin{array}{ccccc}
0.227&   0.460&     0.688&    0.460&     0.227 \end{array} \right],
\vec{f}_1\!=\!\left[\begin{array}{ccccc}
   0.1606&     0.9009\end{array} \right], \vec{f}_2\!=\!\left[\begin{array}{ccccc}
    0.2230&    0.2035\end{array} \right]\!\! .\notag\eea}
 \end{example}
The $I_{\mathrm{GMI}}(\vec{W}_{\mathrm{opt}},\vec{T}_{\mathrm{opt}},\vec{F})$ given in (\ref{p1Iwt}) as a function $\vec{F}$ is plotted on the left in Fig.\,\ref{p1fig12}, while the $\bar{I}(W_{\mathrm{opt}}(\omega),T_{\mathrm{opt}}(\omega),F(\omega))$ given in (\ref{p1ibarm1optwt}) as a function of $F(\omega)$ is plotted on the right. If $I_{\mathrm{GMI}}(\vec{W}_{\mathrm{opt}},\vec{T}_{\mathrm{opt}},\vec{F})$ and $\bar{I}(W_{\mathrm{opt}}(\omega),T_{\mathrm{opt}}(\omega),F(\omega))$ are concave or convex, the blue curves lie above or below the black curves, which clearly does not hold in our examples.

\section*{Appendix D: The Gradient in Method I for Finite Linear Vector Channel}

In this section we derive the first order differential of the GMI given in (\ref{p1Iwt}) with respect to $\vec{F}$. In order to utilize the differential with respect to a matrix, we use the $\alpha$-differential as defined in \cite{JRM}. Assume a matrix $\vec{Y}_{N,K}$ with dimension $N\!\times\!K$ and a matrix $\vec{X}_{M, S}$ with dimension $M\!\times\!S$, define $d_{\vec{X}}\vec{Y}$ as the $\alpha$-differential of $\vec{Y}$ with respect to $\vec{X}$. Furthermore, defining $y_{\ell}$ and $x_{\ell}$ as $[\!\begin{array}{cccc}  y_1&y_2&\cdots&y_{N\!K}\end{array}\!]\!=\!\mathrm{vec}(\vec{Y})\rmt$ and  $[\!\begin{array}{cccc}  x_1&x_2&\cdots&x_{M\!S}\end{array}\!]\!=\!\mathrm{vec}(\vec{X})\rmt$, the $\alpha$-differential $d_{\vec{X}}\vec{Y}$ is
\be d_{\vec{X}}\vec{Y}=\frac{\partial\mathrm{vec}(\vec{Y})}{\partial\mathrm{vec}(\vec{X})\rmt}=\!\left[\begin{array}{cccc}
    \frac{\partial{y_1}}{\partial{x_1}}&  \frac{\partial{y_1}}{\partial{x_2}}&\cdots&\frac{\partial{y_1}}{\partial{x_{M\!S}}}\\
    \frac{\partial{y_2}}{\partial{x_1}}&  \frac{\partial{y_2}}{\partial{x_2}}&\cdots&\frac{\partial{y_2}}{\partial{x_{M\!S}}}\\
    \vdots&    \vdots&    \vdots&    \vdots\\
 \frac{\partial{y_{N\!K}}}{\partial{x_1}}&  \frac{\partial{y_{N\!K}}}{\partial{x_2}}&\cdots&\frac{\partial{y_{N\!K}}}{\partial{x_{M\!S}}} \end{array} \right]\!.  \quad\quad\notag\ee

The reason for adopting the $\alpha$-differential is because it keeps the chain rule and the product rule. We introduce an $NK\!\times\!NK$ permutation matrix $\vec{Z}_{N,K}$, which satisfies the condition $\mathrm{vec}(\vec{Y}\rmt)=\vec{Z}_{N,K}\mathrm{vec}(\vec{Y})$. It is easy to verify that $\vec{Z}_{N,K}^{-\!1}\!=\!\vec{Z}_{K,N}$, and when $N\!=\!1$ or $K\!=\!1$, $\vec{Y}$ is a vector and $\mathrm{vec}(\vec{Y}\rmt)\!=\!\mathrm{vec}(\vec{Y})$, hence $\vec{Z}_{N,1}\!=\!\vec{I}_{N}$ and $\vec{Z}_{1,K}\!=\!\vec{I}_{K}.$
Furthermore, by definition we have $d_{\vec{F}}(\vec{F})\!=\!d_{\vec{F}}(\mathrm{vec}(\vec{F}))\!=\!\vec{I}$, and $d_{\vec{F}}(\vec{F\rmh})\!=\! d_{\vec{F}}\left(\mathrm{vec}(\vec{F}\rmh)\right)\!=\!\vec{0}$. We start by reviewing a few properties \cite{JRM, PLF} of $\alpha$-differential below that will be used later, where both matrices $\vec{X}$ and $\vec{Y}$ are functions of $\vec{F}$ and the dimensions are specified by the subscripts associated to them:
 {\setlength\arraycolsep{1pt} \bea
 &&d_{\vec{F}}(\vec{X}_{K,K}^{-\!1})=-(\vec{X}_{K,K}\rmit\!\otimes\!\vec{X}_{K,K}^{-\!1})d_{\vec{F}}\vec{X}_{K,K}\notag \\
  &&d_{\vec{F}}(\vec{Y}_{N,K}\vec{X}_{K,S})=(\vec{X}_{K,S}\rmt\!\otimes\!\vec{I}_N)d_{\vec{F}}\vec{Y}_{N,K}+(\vec{I}_S\!\otimes\!\vec{Y}_{N,K})d_{\vec{F}}\vec{X}_{K,S}\notag \\
&&d_{\vec{F}}(\mathrm{log}(\mathrm{det}(\vec{X}_{K,K})))=\mathrm{vec}(\vec{X}_{K,K}\rmit)\rmt d_{\vec{F}}\vec{X}_{K,K}\notag \\
 &&d_{\vec{F}}(\vec{Y}_{N,K}\!\otimes\!\vec{X}_{M,S})=(\vec{I}_K\!\otimes\!\vec{Z}_{S,N}\!\otimes\!\vec{I}_M)(\vec{I}_{N\!K}\!\otimes\!\mathrm{vec}(\vec{X})d_{\vec{F}}\vec{Y}_{N,K} \notag \\
 &&\quad\quad\quad\quad\quad\quad\quad\quad\quad+(\vec{I}_K\!\otimes\!\vec{Z}_{S,N}\!\otimes\!\vec{I}_M)(\mathrm{vec}(\vec{Y})\!\otimes\!\vec{I}_{M\!S})d_{\vec{F}}\vec{X}_{M,S}. \notag\eea}
\hspace{-1mm}The $\alpha$-differential of $I_1(\vec{F})$ with respect to $\vec{F}$ is
 {\setlength\arraycolsep{2pt}\bea \label{p1di1} d_{\vec{F}}(I_1)&=&\mathrm{vec}((\vec{I}\!+\!\vec{F}\rmh\vec{F})\rmit)\rmt(\vec{I}_K\!\otimes\!\vec{F}\rmh)\!+\! \mathrm{vec}(\vec{F}^{\ast}\!\vec{M}\rmt)\rmt \notag \\
 &=&\mathrm{vec}(\vec{F\!M}\!+\!\vec{F}(\vec{I}\!+\!\vec{F}\rmh\vec{F})\rmih)\rmh. \eea}
\hspace*{-1.4mm}Defining a $K\!\times\!K$ matrix $\vec{B}\!=\!\vec{F}(\vec{I}\!+\!\vec{F}\rmh\vec{F})^{-\!1}\vec{F}\rmh$ and an $S\!\times\!S$ matrix $\vec{\Pi}\!=\!\big(\vec{\Omega}(\tilde{\vec{M}}^{\ast}\!\otimes\!\vec{B})\vec{\Omega}\rmt\big)^{\!-\!1}$, the $\alpha$-differential of $\delta_1(\vec{F})$ with respect to $\vec{F}$ is
 \bea d_{\vec{F}}(\delta_1)=
-\mathrm{vec}(\vec{F})\rmh\vec{D}\rmh\left((\mathrm{vec}(\vec{F})\rmt\vec{D}\rmt)\!\otimes\!\vec{I}_{S}\right)d_{\vec{F}}(\vec{\Pi})\!-\!\mathrm{vec }(\vec{F})\rmh\vec{D}\rmh\vec{\Pi}\vec{D},
 \eea
where
 {\setlength\arraycolsep{2pt}\bea
 d_{\vec{F}}(\vec{\Pi})&=&-(\vec{\Pi}\rmt\!\otimes\!\vec{\Pi})d_{\vec{F}}(\vec{\Omega}(\tilde{\vec{M}}^{\ast}\!\otimes\!\vec{B})\vec{\Omega}\rmt) \notag \\
&=&-\big((\vec{\Pi}\rmt\vec{\Omega})\!\otimes\!(\vec{\Pi}\vec{\Omega})\big)(\vec{I}_K\!\otimes\!\vec{Z}_{K,K}\!\otimes\!\vec{I}_K)(\mathrm{vec}(\vec{\tilde{\vec{M}}^{\ast}})\!\otimes\!\vec{I}_{K^2})d_{\vec{F}}\vec{B}
 \eea}
\hspace*{-1.4mm}and
 {\setlength\arraycolsep{2pt} \bea
d_{\vec{F}}(\vec{B})&=&d_{\vec{F}}\big(\vec{I}\!-\!(\vec{I}\!+\!\vec{F}\vec{F}\rmh)^{-\!1}\big) \notag \\
&=&\big((\vec{I}\!+\!\vec{F}\vec{F}\rmh)\rmit\big)\!\otimes\!\big((\vec{I}\!+\!\vec{F}\vec{F}\rmh)^{-\!1}\big)(\vec{F}^{\ast}\!\otimes\!\vec{I}_K) \notag \\
&=&\big(\vec{F}^{\ast}(\vec{I}\!+\!\vec{F}\vec{F}\rmh)\rmit)\!\otimes\!(\vec{I}\!-\!\vec{B}).
 \eea}
\hspace*{-1.4mm}Then, defining a $K\!\times\!K$ matrix $\tilde{\vec{F}}\!=\!(\vec{I}\!+\!\vec{F}\rmh\vec{F})^{-\!1}\vec{F}\rmh$ and a $K^4\!\times\!K^2$ matrix
 \bea \label{p1psi}
 \vec{\Psi}=(\vec{I}_K\!\otimes\!\vec{Z}_{K,K}\!\otimes\!\vec{I}_K)\big(\mathrm{vec}(\vec{\tilde{\vec{M}}^{\ast}})\!\otimes\!\vec{I}_{K^2}\big),\eea
and by combing (\ref{p1di1})-(\ref{p1psi}), we finally have when $\vec{P}\!\neq\!\vec{0}$,
 \bea &&d_{\vec{F}}\big(I_{\mathrm{GMI}}(\vec{W}_{\mathrm{opt}},\vec{T}_{\mathrm{opt}},\vec{F})\big)\! \notag \\ &&= d_{\vec{F}}(I_1)\!+ \!d_{\vec{F}}(\delta_1) \notag\\
&&= \mathrm{vec}\big(\vec{F\!M}\!+\!\tilde{\vec{F}}\rmh\big)\rmh \!-\!\mathrm{vec }(\vec{F})\rmh\vec{D}\rmh\vec{\Pi}\vec{D} \notag \\
&&\quad+\mathrm{vec}(\vec{F})\rmh\vec{D}\rmh\big((\vec{\Omega}\rmt\vec{\Pi}\vec{D}\mathrm{vec}(\vec{F}))\rmt\!\otimes\!\big(\vec{\Pi}\vec{\Omega})\big)\!\vec{\Psi}\!\big(\tilde{\vec{F}}\rmt\!\otimes\!(\vec{I}\!-\!\vec{B})\big). \quad\quad\;\;  \notag\eea

\section*{Appendix E: The Proof of Proposition \ref{p1prop3} }

As the formula of GMI in (\ref{p1metricMIMO2}) is quadratic in $\vec{V}$ and no constraints apply to $\vec{V}$, taking the gradient of  $I_{\mathrm{GMI}}(\vec{V}, \vec{R},\vec{G} )$ with respect to $\vec{V}$ and setting it to zero, yields the optimal $\vec{V}$ given in (\ref{p1optgm2}). Inserting $\vec{V}_{\mathrm{opt}}$ into (\ref{p1metricMIMO2}) gives, after some manipulations
{\setlength\arraycolsep{2pt}\bea \label{p1gmiv} I_{\mathrm{GMI}}(\vec{V}_{\mathrm{opt}}, \vec{R},\vec{G} )&=&K\!+\!\log\!\big(\!\det(\vec{I}\!+\!\vec{G})\big)\!+\!2\Re\big\{\mathrm{Tr}(\vec{PMR})\big\}\nonumber \\
&&+\mathrm{Tr}\big(\vec{M}(\vec{I}\!+\!\vec{G})\big)
\!+\!\mathrm{Tr}\big((\vec{I}\!+\!\vec{G})^{-\!1}\!\vec{R}\tilde{\vec{M}}\vec{R}\rmh\big)
\eea}
\hspace*{-1.4mm}where $\vec{M}$ and $\tilde{\vec{M}}$ are defined in (\ref{p1M}) and (\ref{p1Mt}). If $\vec{P}\!=\!\vec{0}$, (\ref{p1gmiv}) equals
\bea  I_2(\vec{G} )=K\!+\!\log\!\big(\!\det(\vec{I}\!+\!\vec{G})\big)\!+\!\mathrm{Tr}\big(\vec{M}(\vec{I}\!+\!\vec{G})\big). \notag\eea
When $\vec{P}\!\neq\!\vec{0}$, the terms of $I_{\mathrm{GMI}}(\vec{V}_{\mathrm{opt}}, \vec{R},\vec{G} )$ in (\ref{p1gmiv}) related to $\vec{R}$ are
\be g(\vec{R})\!=2\Re\big\{\mathrm{Tr}(\vec{PMR})\big\}\!+\!\mathrm{Tr}\big((\vec{I}\!+\!\vec{G})^{-\!1}\vec{R}\tilde{\vec{M}}\vec{R}\rmh\big).\notag\ee
Let $\vec{r}_k$ denote the $k$th column of $\vec{R}$, but where all elements in rows $[\max(0,k\!-\!\nu_{\mathrm{R}}), \min(k\!+\!\nu_{\mathrm{R}},K\!-\!1)]$ are removed, and define the column vector $\vec{r}=[\!\begin{array}{cccc}\vec{r}_0\rmt\; \vec{r}_1\rmt\;\ldots \; \vec{r}_{K\!-\!1}\rmt \end{array}\!\!]\rmt$, then we have $\vec{r}\!=\!\vec{\Omega}\mathrm{vec}(\vec{R})$. Moreover, let $\vec{d}_k$ denote the $k$th column of the matrix $\vec{MP}$ but with all elements in rows $[\max(0,k\!-\!\nu_{\mathrm{R}}), \min(k\!+\!\nu_{\mathrm{R}},K\!-\!1)]$ are removed and define the vector $\vec{d}=[\!\!\begin{array}{cccc} \vec{d}_0\rmt \;\vec{d}_1\rmt \; \ldots \;\vec{d}_{K\!-\!1}\rmt \end{array}\!\!]\rmt$. From the definition of $\vec{d}$, we have $\vec{d}=\vec{\Omega}\mathrm{vec}(\vec{MP})$. Defining a Hermitian matrix $\hat{\vec{B}}_2$ as
\be \hat{\vec{B}}_2=\vec{\Omega}\big(\tilde{\vec{M}}^{\ast}\!\otimes\!(\vec{I}\!+\!\vec{G}
)^{-\!1}\big)\vec{\Omega}\rmt,\notag\ee
we can write $f(\vec{R})$ as $ g(\vec{R})\!=\!\vec{r}\rmh\hat{\vec{B}}_2\vec{r}\!+\!2\Re\{\vec{d}\rmh\vec{r}\}$. Therefore, the optimal $\vec{r}$ is
\be \vec{r}_{\mathrm{opt}}\!=\!-\hat{\vec{B}}_2^{-\!1}\vec{d}.\ee
Transferring $\vec{r}_{\mathrm{opt}}$ back into $\vec{R}_{\mathrm{opt}}$ gives the optimal $\vec{R}$ in  (\ref{p1mimo2optR}) and inserting this into $g(\vec{R})$ gives
\bea g(\vec{R}_{\mathrm{opt}})\!=\!-\vec{d}\rmh\hat{\vec{B}}_2^{-\!1}\vec{d}.\notag\eea
Thus, with the optimal $\vec{V}$ and $\vec{R}$, when $\vec{P}\!\neq\!\vec{0}$ the GMI equals
{\setlength\arraycolsep{2pt} \bea  I_{\mathrm{GMI}}(\vec{V}_{\mathrm{opt}}, \vec{R}_{\mathrm{opt}},\vec{G} )&=&K\!+\!\log\!\big(\!\det(\vec{I}\!+\!\vec{G})\big)\!+\!\mathrm{Tr}\big(\vec{M}(\vec{I}\!+\!\vec{G})\big) \notag \\ &&-\!\vec{d}\rmh\big(\vec{\Omega}\big(\tilde{\vec{M}}^{\ast}\!\otimes\! (\vec{I}\!+\!\vec{G})^{-\!1}\big)\vec{\Omega}\rmt\big)^{-\!1}\!\vec{d}. \nonumber\eea}

\section*{Appendix F: The Gradient in Method II for Finite Linear Vector Channel}

Now we calculate the $\alpha$-differential of $I_{\mathrm{GMI}}(\vec{V}_{\mathrm{opt}},\vec{R}_{\mathrm{opt}},\vec{G})$ given in (\ref{p1Ivr}) with respect to $\vec{G}$ when $\vec{P}\!\neq\!\vec{0}$. Taking the $\alpha$-differential of $I_2(\vec{G})$ with respect to $\vec{G}$ yields,
 \bea  \label{p1part1} d_{\vec{G}}(I_2)=\mathrm{vec}((\vec{I}\!+\!\vec{G})^{-\!1}\!+\!\vec{M})\rmh.\eea
Define an $S\!\times\!S$ Hermitian matrix $\vec{\Phi}\!=\!\big(\vec{\Omega}\big(\tilde{\vec{M}}^{\ast}\!\otimes\!(\vec{I}+\vec{G})^{\!-\!1}\big)\vec{\Omega}\rmt\big)^{-1}$ and taking the $\alpha$-differential of $\delta_2(\vec{G})$ with respect to $\vec{G}$ yields,
{\setlength\arraycolsep{2pt} \bea \label{p1part2} d_{\vec{G}}(\delta_2)&=&-(\vec{d}\rmt\!\otimes\!\vec{d}\rmh)d_{\vec{G}}(\vec{\Phi}) \notag \\
 &=&(\vec{d}\rmt\!\otimes\!\vec{d}\rmh)(\vec{\Phi}\rmt\!\otimes\!\vec{\Phi})d_{\vec{G}}\big(\vec{\Omega}\big(\tilde{\vec{M}}^{\ast}\!\otimes\!(\vec{I}+\vec{G})^{\!-\!1}\big)\vec{\Omega}\rmt\big) \notag \\
 &=&\big((\vec{d}\rmt\vec{\Phi}\rmt)\!\otimes\!(\vec{d}\rmh\vec{\Phi})\big)\!(\vec{\Omega}\!\otimes\!\vec{\Omega})d_{\vec{G}}\big(\tilde{\vec{M}}^{\ast}\!\otimes\!(\vec{I}\!+\!\vec{G})^{-\!1}\big) \notag \\
 &=&\big((\vec{d}\rmt\vec{\Phi}\rmt\vec{\Omega})\!\otimes\!(\vec{d}\rmh\vec{\Phi}\vec{\Omega})\big)\vec{\Psi}d_{\vec{G}}\big((\vec{I}\!+\!\vec{G})^{-\!1}\big) \notag \\
&=&-\big((\vec{d}\rmt\vec{\Phi}\rmt\vec{\Omega})\!\otimes\!(\vec{d}\rmh\vec{\Phi}\vec{\Omega})\big)\vec{\Psi}\big((\vec{I}\!+\!\vec{G})\rmit\!\otimes\!(\vec{I}\!+\!\vec{G})^{-\!1}\big) \eea}
\hspace{-1.4mm}where $\vec{\Psi}$ is defined in (\ref{p1psi}). Combining (\ref{p1part1}) and (\ref{p1part2}), we can obtain
 {\setlength\arraycolsep{2pt}\bea d_{\vec{G}}\big(I_{\mathrm{GMI}}(\vec{V}_{\mathrm{opt}},\vec{R}_{\mathrm{opt}},\vec{G})\big) &=&d_{\vec{G}}(I_2)\!+\! d_{\vec{G}}(\delta_2)  \notag \\
 &=&\mathrm{vec}\big((\vec{I}\!+\!\vec{G})^{-\!1}\!+\!\vec{M}\big)\rmh
 \notag \\
 &&\!\!\!-\big((\vec{d}\rmt\vec{\Phi}\rmt\vec{\Omega})\!\otimes\!(\vec{d}\rmh\vec{\Phi}\vec{\Omega})\big)\vec{\Psi}\!\big((\vec{I}\!+\!\vec{G})\rmit\!\otimes\!(\vec{I}\!+\!\vec{G})^{-\!1}\big).\quad\quad \notag\eea}

\section*{Appendix G: The Concavity Proof of Method II with Finite Linear Vector Channels}
When $\vec{P}\!=\!\vec{0}$, as $\log\!\big(\!\det(\vec{I}\!+\!\vec{G})\big)\!$ is concave \cite{SLConvex} and $\!\mathrm{Tr}\big(\vec{M}(\vec{I}\!+\!\vec{G})\big)\!$ is linear in $\vec{G}$, the function $I_2(\vec{G})$ in (\ref{p1Ivr1}) is concave with respect to $\vec{G}$ whenever $\vec{I}\!+\!\vec{G}$ is positive definite.

The concavity when $\vec{P}\!\neq\!\vec{0}$ can be deduced from the composition theorem in \cite[Chpater 3.6]{SLConvex}. For a positive definite matrix $\vec{X}$, $\vec{d}\rmh\vec{X}^{\!-\!1}\vec{d}$ is convex and non-increasing (with respect to the generalized inequality for positive definite Hermitian matrices, see \cite{SLConvex, D63})  for any column vector $\vec{d}$. Furthermore, since $\vec{I}\!+\!\vec{G}$ is positive definite, $(\vec{I}\!+\!\vec{G})^{-\!1}$ is convex. As $\tilde{\vec{M}}\!\prec\!\vec{0}$ $\vec{X}=\vec{\Omega}\big(\!\tilde{\vec{M}}^{\ast}\!\otimes\! (\vec{I}\!+\!\vec{G})^{-\!1}\big)\vec{\Omega}\rmt$ is concave in $\vec{G}$. By the composition theorem, $\vec{d}\rmh\big(\vec{\Omega}\big(\tilde{\vec{M}}^{\ast}\!\otimes\! [\vec{I}\!+\!\vec{G}]^{-\!1}\big)\vec{\Omega}\rmt\big)^{-\!1}\!\vec{d}$ is convex, and $\delta_2(\vec{G})$ is then concave. Therefore the function $I_{\mathrm{GMI}}(\vec{V}_{\mathrm{opt}}, \vec{R}_{\mathrm{opt}},\vec{G})$ in (\ref{p1Ivr}) is concave with respect to $\vec{G}$ whenever $\vec{I}\!+\!\vec{G}$ is positive definite.

\section*{Appendix H: The Proof of Proposition \ref{p1prop5}}

The Fourier series associated to the Toeplitz matrix $\vec{W}$ is
\bea \label{Ww} W(\omega)=\sum_{k=-\infty}^{\infty}w_k\exp\!\left(j k\omega \right), \notag\eea
and the differential of $\bar{I}(W(\omega),T(\omega),F(\omega))$ in (\ref{p1ibarm1}) with respect to $w_k$ (where $\omega$ is fixed) is
{\setlength\arraycolsep{2pt}  \bea \label{p1wwopt} \frac{\partial\bar{I}}{\partial w_k}&=&-\frac{1}{2\pi}\! \int_{-\pi}^{\pi}\frac{|F(\omega)|^{2}\big(N_0+|H(\omega)|^2\big)W^{\ast}(\omega)}{1+ |F(\omega)|^2}\exp\!\left(j k\omega \right)\mathrm{d}\omega\! \notag \\
&&+ \frac{1}{\pi}\!\int_{-\pi}^{\pi}\Big(F^{\ast}(\omega)H(\omega)+ \frac{\alpha |F(\omega)|^{2}H(\omega)T^{\ast}(\omega)}{1+ |F(\omega)|^2}\Big)\!\exp\!\big(j k\omega \big)\mathrm{d}\omega. \;\;\eea}
\hspace{-1.4mm}Since (\ref{p1wwopt}) should equal zero for all $k$, the optimal $W(\omega)$ is given in (\ref{p1optisiw}). Inserting $W_{\mathrm{opt}}(\omega)$ back into (\ref{p1ibarm1}) yields,
{\setlength\arraycolsep{2pt} \bea \label{p1ibarm1optw} &&\bar{I}(W_{\mathrm{opt}}(\omega),T(\omega),F(\omega))\!=\! 1\!+\!\frac{\alpha}{\pi}\!\int_{-\pi}^{\pi}\!\Re\!\big\{F^{\ast}(\omega)T(\omega)M(\omega) \big\}\mathrm{d}\omega \!\nonumber \\
&&\quad+\!\frac{1}{2\pi}\! \int_{-\pi}^{\pi}\!\!\Big(\!\log\!\big(1\!+\! |F(\omega)|^2\big) \!+\frac{\tilde{M}(\omega)|T(\omega)F(\omega)|^2}{1\!+\!|F(\omega)|^2}\!+\!M(\omega)\big(1\!+\!|F(\omega)|^2\big) \!\Big)\mathrm{d}\omega .\qquad \eea}
\hspace{-1.4mm}where $M(\omega)$ and $\tilde{M}(\omega)$ are defined in (\ref{p1mw}) and (\ref{p1mtw}). When $\alpha\!=\!0$, the GMI in (\ref{p1ibarm1optw}) equals (\ref{p1ibarm1optwt1}), and when $0\!<\!\alpha\!\leq\!1$, the terms related to $T(\omega)$ in (\ref{p1ibarm1optw}) are
\bea \label{p1isif2} f(T(\omega))=\frac{\alpha}{\pi}\!\int_{-\pi}^{\pi}\!\Re\!\big\{F^{\ast}(\omega)T(\omega)M(\omega) \big\}\mathrm{d}\omega \!+\!\frac{1}{2\pi}\! \int_{-\pi}^{\pi}\!\frac{\tilde{M}(\omega)|T(\omega)F(\omega)|^2}{1\!+\!|F(\omega)|^2}\mathrm{d}\omega. \quad\eea
As the elements of the main diagonal and the first $\nu$ lower diagonals of matrix $\vec{T}$ are constrained to zero, we define the vector $\tilde{\vec{t}}$ that specifies the Toeplitz matrix $\vec{T}$ as
\bea  \tilde{\vec{t}}\!=\![\!\begin{array}{cccccc}t_{-N_{\mathrm{T}}}\;\ldots \; t_{-1}\; t_{\nu+1}\;\ldots \; t_{N_{\mathrm{T}}}\end{array}\!], \notag\eea
and with $\phi(\omega)$ defined in (\ref{p1phi}), the Fourier series $T(\omega)$ with a finite tap length $N_{\mathrm{T}}$ is
\bea  \label{p1Tw} T(\omega)=\!\!\sum_{-N_{\mathrm{T}}\leq k \leq N_{\mathrm{T}}, k\notin[0,\nu]}\!\!t_k\exp\!\big(j k\omega \big)=\tilde{\vec{t}}\phi(\omega). \eea
Furthermore, with $\vec{\varepsilon}_1$ and $\vec{\varepsilon}_2$ defined in (\ref{p1vareps12}), (\ref{p1isif2}) can be rewritten as 
$$f(T(\omega))\!=\!\tilde{\vec{t}}\vec{\varepsilon}_2\tilde{\vec{t}}\rmh\!+\!2\!\Re\!\big\{\tilde{\vec{t}}\vec{\varepsilon}_1 \big\}.$$
Therefore, the optimal $\tilde{\vec{t}}$ is
\bea \label{twopt} \tilde{\vec{t}}_{\mathrm{opt}}=-\vec{\varepsilon}_1\rmh\vec{\varepsilon}_2^{-\!1}. \eea
Putting $\tilde{\vec{t}}_{\mathrm{opt}}$ back into (\ref{p1ibarm1optw})-(\ref{p1Tw}), the optimal $T(\omega)$ is given in (\ref{p1optisit}) and $\bar{I}(W(\omega),T(\omega),F(\omega))$ for the optimal $W(\omega)$ and $T(\omega)$ is given in (\ref{p1ibarm1optwt}).

\section*{Appendix I: The Proof of Proposition \ref{p1prop7}}

The Fourier series associated to the Toeplitz matrix $\vec{V}$ is $V(\omega)\!=\!\sum\limits_{k=-\infty}^{\infty}v_k\exp\!\left(j k\omega \right)$ and the differential of $\bar{I}(V(\omega),R(\omega),G(\omega))$ in (\ref{p1ibarm2}) with respect to $v_k$ (where $\omega$ is fixed) is
{\setlength\arraycolsep{2pt}\bea \label{p1Vwopt} \frac{\partial\bar{I}}{\partial v_k}&=&-\frac{1}{2\pi}\! \int_{-\pi}^{\pi}\!\frac{\big(N_0\!+\!|H(\omega)|^2\big)V^{\ast}(\omega)}{1\!+\! G(\omega)}\exp\!\left(j k\omega \right)\mathrm{d}\omega \notag \\
&&+ \frac{1}{\pi}\!\int_{-\pi}^{\pi}\!\!\Big(H(\omega)\!+\! \frac{\alpha H(\omega) R^{\ast}(\omega)}{1\!+\! G(\omega)}\Big)\!\exp\!\big(j k\omega \big)\mathrm{d}\omega. \eea}
\hspace{-1.4mm}Since (\ref{p1Vwopt}) shall equal zero for all $k$, the optimal $V(\omega)$ is given in (\ref{p1optisiv}). Putting $V_{\mathrm{opt}}(\omega)$ in (\ref{p1optisiv}) back into (\ref{p1ibarm2}) yields,
{\setlength\arraycolsep{2pt}\bea \label{p1ibarm2optv} \bar{I}(V_{\mathrm{opt}}(\omega),R(\omega),G(\omega))&=&1\!+\!\frac{\alpha}{\pi}\!\int_{-\pi}^{\pi}\!\Re\!\big\{M(\omega)R(\omega)\big\}\mathrm{d}\omega \!+\!\frac{1}{2\pi}\! \int_{-\pi}^{\pi}\!\Big(\! \log\! \big(1\!+\! G(\omega)\big) \nonumber \\
&&+\frac{\tilde{M}(\omega)|R(\omega)|^2}{1\!+\!G(\omega)}\!+\!M(\omega)\big(1\!+\!G(\omega)\big) \!\Big)\mathrm{d}\omega, \quad \eea}
\hspace{-1.4mm}where $M(\omega)$ and $\tilde{M}(\omega)$ are defined in (\ref{p1mw}) and (\ref{p1mtw}). When $\alpha\!=\!0$, the GMI in (\ref{p1ibarm2optv}) equals (\ref{p1ibarm2optvr1}), and when $0\!<\!\alpha\!\leq\!1$, the terms of $\bar{I}(V_{\mathrm{opt}}(\omega),R(\omega),G(\omega))$ related to $R(\omega)$ in (\ref{p1ibarm2optv}) are
\bea \label{p1isir2} g(R(\omega))\!=\!\frac{\alpha}{\pi}\!\int_{-\pi}^{\pi}\!\Re\!\big\{M(\omega)R(\omega) \big\}\mathrm{d}\omega\!+\!\frac{1}{2\pi} \int_{-\pi}^{\pi}\!\frac{\tilde{M}(\omega)|R(\omega)|^2}{1\!+\!G(\omega)}\mathrm{d}\omega. \eea
Define the vector $\tilde{\vec{r}}$ that specifies the Toeplitz matrix $\vec{R}$ as $$\tilde{\vec{r}}\!=\![\!\begin{array}{cccccc} r_{-N_{\mathrm{R}}}\;\!\ldots\! \; r_{-\nu_{\mathrm{R}}-1}\; r_{\nu_{\mathrm{R}}+1}\;\!\ldots\! \; r_{N_{\mathrm{R}}} \end{array}\!\!],$$and with $\psi(\omega)$ defined in (\ref{p1lpsi}), the Fourier series $R(\omega)$ with a finite tap length $N_{\mathrm{R}}$ is
\bea \label{Rw}  R(\omega)=\!\!\sum_{-N_{\mathrm{R}}\leq k \leq N_{\mathrm{R}}, k\notin[-\nu_{\mathrm{R}},\nu_{\mathrm{R}}]}\!\!r_k\exp\!\big(j k\omega \big)=\tilde{\vec{r}}\psi(\omega)\eea
where $2\nu_{\mathrm{R}}\!+\!1$ is the band size that $\vec{R}$ is constrained to zero. With $\vec{\zeta}_1$ and $\vec{\zeta}_2$ defined in (\ref{p1zeta12}), (\ref{p1isir2}) can be written as $g(R(\omega))\!=\!\tilde{\vec{r}}\vec{\zeta}_2\tilde{\vec{r}}\rmh\!+\!2\!\Re\left\{\tilde{\vec{r}}\vec{\zeta}_1 \right\}$. Therefore, the optimal $\tilde{\vec{r}}$ is
\bea  \tilde{\vec{r}}_{\mathrm{opt}}\!=\!-\vec{\zeta}_1\rmh\vec{\zeta}_2^{-\!1}.\eea
This shows that $\tilde{\vec{r}}_{\mathrm{opt}}$ has Hermitian symmetry as $G(\omega)$, $M(\omega)$ and $\tilde{M}(\omega)$ are all real valued, thus $R_{\mathrm{opt}}(\omega)$ is real. Putting $\tilde{\vec{r}}_{\mathrm{opt}}$ back into (\ref{p1ibarm2optv})-(\ref{Rw}), the optimal $R(\omega)$ is given in (\ref{p1optisir}) and $\bar{I}(V(\omega),R(\omega),G(\omega))$ for the optimal $V(\omega)$ and $R(\omega)$ is given in (\ref{p1ibarm2optvr}).

\section*{Appendix J: The Concavity Proof of Method II with ISI Channels}

To prove $\bar{I}(V_{\mathrm{opt}}(\omega),R_{\mathrm{opt}}(\omega),G(\omega))$ in (\ref{p1ibarm2optvr}) is concave with respect to $G(\omega)$, it is sufficient to prove that $\vec{\zeta}_1\rmh\vec{\zeta}_2^{-\!1}\vec{\zeta}_1$ is convex with respect to $G(\omega)$. For a positive definite matrix $\vec{\zeta}_2$ , $\vec{\zeta}_1\rmh\vec{\zeta}_2^{-\!1}\vec{\zeta}_1$ is convex and non-increasing (with respect to a generalized inequality for positive definite Hermitian matrices) in $G(\omega)$ for any vector $\vec{\zeta}_1$ and with arbitrary finite tap length $N_{\mathrm{R}}$. As matrix $\tilde{\vec{M}}$ is negative definite, $\vec{\zeta}_2$ in (\ref{p1zeta12}) is concave with respect to $G(\omega)$ under the constraint that $\vec{I}\!+\!\vec{G}$ is positive definite. Hence $\vec{\zeta}_1\rmh\vec{\zeta}_2^{-\!1}\vec{\zeta}_1$ is convex in $G(\omega)$ by the composition theorem \cite{SLConvex}.

\section*{Appendix K: The Proof of Lemma \ref{p1lem5}}

From Theorem \ref{p1thm2}, the optimal $\vec{G}$ in Method III satisfies $[(\vec{I}\!+\!\vec{G}_{\mathrm{opt}})^{-1}]_{\nu}\!=-\![\hat{\vec{M}}]_{\nu}$. Note that when $\vec{P}\!=\!\vec{0}$, Method III and Method II are equivalent as $\hat{\vec{M}}\!=\!\vec{M}$. Hence, in order to prove Lemma \ref{p1lem4}, it is sufficient to show that $[\hat{\vec{M}}]_{\nu}$ converges to $[\vec{M}]_{\nu}$ as $N_0\!\to\!0$ and $\infty$. When $\vec{P}\!\prec\!\vec{I}$, $\vec{C}_k$ in (\ref{p1matc}) is positive definite, and as $N_0\to 0$,
 {\setlength\arraycolsep{2pt} \bea \vec{H}\rmh(\vec{H}\vec{C}_k\vec{H}\rmh\!+\!N_0\vec{I})^{-1}\vec{H}&=&\vec{C}_k^{-1}(\vec{H}\rmh\vec{H}\!+\!N_0\vec{C}_k^{-1})^{-1}\vec{H}\rmh\vec{H} \notag \\
&=&\vec{C}_k^{-1}\big(\vec{I}-N_0\vec{C}_k^{-1}(\vec{H}\rmh\vec{H})^{-1}\big)+\mathcal{O}(N_0^2). \notag\eea}
\hspace{-1.4mm}Therefore with $\hat{\vec{W}}$ and $\hat{\vec{C}}$ defined in (\ref{p1wiec})-(\ref{p1covmatc}),
 {\setlength\arraycolsep{2pt}  \bea \label{p1limit5} \hat{\vec{W}}\vec{H}&=&\vec{I}-N_0(\vec{H}\rmh\vec{H})^{-1}+\mathcal{O}(N_0^2), \notag\\
\hat{\vec{C}}&=&[\hat{\vec{W}}\vec{H}]_{\backslash\nu}=-N_0[(\vec{H}\rmh\vec{H})^{-1}]_{\backslash\nu}+\mathcal{O}(N_0^2). \eea}
\hspace{-1mm}With (\ref{p1limit5}) and $\hat{\vec{M}}$ in (\ref{p1mpic}), it can be verified that $\lim\limits_{N_0\to 0}[\hat{\vec{M}}/N_0]_{\nu}\!=\!-[(\vec{H}\rmh\vec{H})^{-1}]_{\nu}$. On the other hand, when $N_0\!\to\! \infty$, from (\ref{p1wiec})-(\ref{p1mpic}) we have
 {\setlength\arraycolsep{2pt} \bea \label{p1limit6}  N_0\hat{\vec{W}}&=&\vec{H}\rmh(\vec{H}\vec{C}_k\vec{H}\rmh/N_0\!+\!\vec{I})^{-1}=\vec{H}\rmh+\mathcal{O}(1/N_0),\notag\\
 N_0\hat{\vec{C}}&=&[\hat{\vec{W}}\vec{H}]_{\backslash\nu}=[\vec{H}\rmh\vec{H}]_{\backslash\nu}+\mathcal{O}(1/N_0), \eea}
\hspace{-1.4mm}With (\ref{p1limit6}) and $\hat{\vec{M}}$ defined in (\ref{p1mpic}), it can be verified that $\lim\limits_{N_0\to\infty}[N_0(\vec{I}\!+\!\hat{\vec{M}})]_{\nu}=[\vec{H}\rmh\vec{H}]_{\nu}$. Hence, from (\ref{p1limitM}) $[\hat{\vec{M}}]_{\nu}$ converges to $[\vec{M}]_{\nu}$ as $N_0\!\to\!0$ and $\infty$, which completes the proof.

\end{document}